
\documentclass[a4paper,reqno]{amsart}

\usepackage{amssymb}
\usepackage{amsfonts}
\usepackage{latexsym}
\usepackage{amsmath}
\usepackage{amsthm}
\usepackage[mathcal]{euscript}
\usepackage{bbm}
\usepackage{tikz}

\def\sA{{\mathfrak A}}   \def\sB{{\mathfrak B}}   \def\sC{{\mathfrak C}}

\def\cA{{\mathcal A}}

   \def\cK{{\mathcal K}}

\def\cal H{{\mathcal H}}

\def\R{\mathbb{R}}
\def\C{\mathbb{C}}

\def\N{\mathbb{N}}

\def\ran{{\text{\rm ran\,}}}
\def\dom{{\text{\rm dom\,}}}

\def\phi{\varphi}

\def\eps{\varepsilon}

\def\fa{\mathfrak{a}}

\DeclareMathOperator{\Real}{Re}

\def\eins{\mathbbm{1}}

\newtheorem{theorem}{Theorem}[section]
\newtheorem{proposition}[theorem]{Proposition}
\newtheorem{corollary}[theorem]{Corollary}
\newtheorem{lemma}[theorem]{Lemma}

\theoremstyle{definition}
\newtheorem{definition}[theorem]{Definition}
\newtheorem{example}[theorem]{Example}
\newtheorem{assumption}[theorem]{Assumption}

\newtheorem{remark}[theorem]{Remark}
\newtheorem*{acknowledgments}{Acknowledgements}

\numberwithin{equation}{section}

\begin{document}

\title[]{Generalized interactions supported on hypersurfaces}

\author{Pavel Exner}
\address{Department of Theoretical Physics, Nuclear Physics Institute, Czech Academy of Sciences, 25068 \v Re\v z near Prague, Czechia, and Doppler Institute for Mathematical Physics and Applied Mathematics, Czech Technical University, B\v rehov\'a 7, 11519 Prague, Czechia}
\email{exner@ujf.cas.cz} 

\author{Jonathan Rohleder}
\address{Institut f\"ur Numerische Mathematik, Technische Universit\"at Graz, Steyrergasse 30, A-8010 Graz, Austria}
\email{rohleder@tugraz.at}


\begin{abstract}
We analyze a family of singular Schr\"odinger operators with local singular interactions supported by a hypersurface $\Sigma \subset \R^n, \: n\ge 2$, being the boundary of a Lipschitz domain, bounded or unbounded, not necessarily connected. At each point of $\Sigma$ the interaction is characterized by four real parameters, the earlier studied case of $\delta$- and $\delta'$-interactions being particular cases. We discuss spectral properties of these operators and derive operator inequalities between those referring to the same hypersurface but different couplings and describe their implications for spectral properties.
\end{abstract}

\maketitle

\section{Introduction}

Solvable models play an important role in our understanding of quantum systems because they often allow us to describe their properties through tools which are simplified but better accessible mathematically. This applies, in particular, to the so-called leaky quantum graphs and similar structures, cf.~\cite[Chap.~10]{EK15}, describing the motion of quantum particles confined to `thin' regions of space of a possibly nontrivial geometrical and topological character in a way which does not neglect the tunneling effect.

Technically speaking, such models emerged as a natural generalization of point interaction systems \cite{AGHH05}. First examples appeared about a quarter of century ago \cite{AGS87, BT92}, but more attention to these problems was attracted only later, and at present we experience a renewed interest to them. At the beginning measure-type potentials, usually dubbed $\delta$-type, were studied, and recently also more singular ones came into focus \cite{BEL14, JL15}. The $\delta'$-interaction represents an interesting object especially in view of its scattering properties: in contrast to more regular potentials a $\delta'$-barrier becomes more opaque as the energy increases which opens ways to unexpected physical effects \cite{AEL94}. What is important is that the $\delta'$-interaction is more than a mere mathematical construct, because it can be approximated in any fixed interval of energies by families of regular potentials following the seminal idea of Cheon and Shigehara \cite{CS98}.

The $\delta$- and $\delta'$-interactions are not the whole story, though. It follows from basic facts about self-adjoint extensions that the most general point interaction is characterized by four parameters, and \emph{mutatis mutandis}, four functions are needed to describe such a general singular interaction supported by a manifold of codimension one. A discussion of such interactions is the main topic of this paper. Speaking of the motivation, we note that it is again more than a mere mathematical extension, because these interactions include another type of scattering behavior, different from both the $\delta$- and $\delta'$-situations, as one can see, for instance, from the high-energy resonance asymptotics of the corresponding generalization of the so-called Winter model \cite{EF06}.

There is not much in the literature about such general singular interactions supported by hypersurfaces beyond examples with a symmetry allowing for a dimension reduction. Our present study overlaps with fresh results in~\cite{MPS15} but there are considerable differences. The said paper considers singular perturbations of more general elliptic operators with nontrivial coefficients in the second and zeroth order but, on the other hand, it is restricted to compact hypersurfaces, while in the present paper we focus on the Laplacian but allow a more general class of possibly noncompact hypersurfaces. The main results in~\cite{MPS15} concern the description of all selfadjoint boundary conditions as well as Krein type resolvent formulas and Schatten-von Neumann properties, in contrast to the present paper, where a class of local interactions is fixed and spectral properties of this class are in the center of interest. A further substantial difference is the used technique. The paper of Mantile et al.\ approaches the problem through resolvents of the involved operators, while we define the singular perturbations using the appropriate quadratic form.

Let us describe briefly the contents of this paper. In the next section we introduce the singular Schr\"odinger operators and show how they can be described alternatively by boundary conditions on the hypersurface, writing the latter in the form introduced first in~\cite{EG99} in the one-dimensional case. Its advantage is that one can simply distinguish the particular cases of $\delta$- and $\delta'$-interactions. In Section~3 we show that in the case of a compact interaction the essential spectrum is the positive halfline and the discrete one is finite, and we also find conditions under which the latter is empty or nonempty. Furthermore, for interactions supported on a noncompact, asymptotically planar hypersurface in $\R^3$ we obtain a lower bound for the essential spectrum. In Section~4 we derive operator inequalities between different elements of the considered operator family, generalizing those between the $\delta$- and $\delta'$-interactions found in~\cite{BEL14} and use them to establish further spectral results.

\section{Generalized interactions and quadratic forms}

In this section we introduce the generalized interactions under consideration via quadratic forms and investigate their action and domain.

Let $\Sigma \subset \R^n$, $n \geq 2$, be the boundary of a (bounded or unbounded, not necessarily connected) Lipschitz domain $\Omega = \Omega_{\rm i}$ and let $\Omega_{\rm e} = \R^n \setminus (\Omega_{\rm i} \cup \Sigma)$. In the following we denote by $H^s (\Omega_j)$, $j = \rm i, e$, and $H^s (\R^n)$ the Sobolev spaces of order $s \geq 0$, by $H^{s} (\Sigma)$ the Sobolev space of order $s \in [0, 1]$ on $\Sigma$ and by $H^{- s} (\Sigma)$ its dual. For $f \in L^2 (\R^n)$ we use to write $f_j = f |_{\Omega_j}$, $j =\rm i, e$, and $f = f_{\rm i} \oplus f_{\rm e}$. We denote the trace of a function $f \in H^1 (\Omega_j)$ on $\Sigma$ by $f |_\Sigma \in H^{1/2} (\Sigma)$. Moreover, for each $f \in H^1 (\Omega_j)$ such that $\Delta f$, calculated as a distribution, belongs to $L^2 (\Omega_j)$ we define the derivative of $f$ with respect to the outer unit normal on $\Sigma = \partial \Omega_j$ as the unique element $\partial_{\nu_j} f |_\Sigma \in H^{- 1/2} (\Sigma)$ which satisfies Green's first identity 
\begin{align}\label{eq:Green}
 \int_{\Omega_j} \nabla f \cdot \overline{\nabla g} \, d x + \int_{\Omega_j} \Delta f \overline{g} \, d x = (\partial_{\nu_j} f |_\Sigma, g |_\Sigma)_\Sigma,
\end{align}
where $(\cdot, \cdot)_\Sigma$ denotes the sesquilinear duality of $H^{- 1/2} (\Sigma)$ and $H^{1/2} (\Sigma)$. Note that if $\Sigma$ is sufficiently smooth and $f$ is differentiable up to the boundary then $\partial_{\nu_j} f |_\Sigma$ is the usual derivative with respect to the outer unit normal. We also remark that the outer unit normals for $\Omega_{\rm i}$ and $\Omega_{\rm e}$ coincide up to a minus sign. In particular, for $f \in H^2 (\R^n)$ we have $\partial_{\nu_{\rm i}} f_{\rm i} |_\Sigma + \partial_{\nu_{\rm e}} f_{\rm e} |_\Sigma = 0$.

Throughout this paper we make the following assumption.

\begin{assumption}\label{as:coefficients}
Assume that $\alpha : \Sigma \to \R$ and $\gamma : \Sigma \to \C$ are bounded, measurable functions. Moreover, let $\Sigma_\beta \subset \Sigma$ be a relatively open subset and let $\beta : \Sigma \to \R$ be a function such that $1/\beta$ is measurable and bounded on $\Sigma_\beta$ and $\beta = 0$ identically on $\Sigma_0 := \Sigma \setminus \Sigma_\beta$. 
\end{assumption}

We focus on generalized interactions supported on $\Sigma$ described by the (negative) Laplacian on $\R^n \setminus \Sigma$ subject to the interface conditions
\begin{align}\label{eq:conditions}
\begin{split}
  \partial_{\nu_{\rm i}} f_{\rm i} |_\Sigma + \partial_{\nu_{\rm e}} f_{\rm e} |_\Sigma & = \frac{\alpha}{2} \left( f_{\rm i} |_\Sigma + f_{\rm e} |_\Sigma \right) + \frac{\gamma}{2} \left( \partial_{\nu_{\rm i}} f_{\rm i} |_\Sigma - \partial_{\nu_{\rm e}} f_{\rm e} |_\Sigma \right), \\
 f_{\rm i} |_\Sigma - f_{\rm e} |_\Sigma & = - \frac{\overline{\gamma}}{2} \left( f_{\rm i} |_\Sigma + f_{\rm e} |_\Sigma \right) + \frac{\beta}{2} \left( \partial_{\nu_{\rm i}} f_{\rm i} |_\Sigma - \partial_{\nu_{\rm e}} f_{\rm e} |_\Sigma \right)
\end{split}
\end{align}
on $\Sigma$. Observe that for the time being the conditions~\eqref{eq:conditions} are formal; for instance, $\partial_{\nu_{\rm i}} f_{\rm i} |_\Sigma - \partial_{\nu_{\rm e}} f_{\rm e} |_\Sigma$ belongs to $H^{- 1/2} (\Sigma)$ and its multiplication by $\beta$ or $\gamma$ does not make sense if these coefficients are very irregular. In fact, we are going to define a Hamiltonian first by use of an appropriate quadratic form and we will show afterwards that under a minor, reasonable additional assumption the functions in its domain satisfy~\eqref{eq:conditions}; cf.\ Theorem~\ref{thm:HA} below. We remark that the conditions~\eqref{eq:conditions} include generalized interactions on nonclosed hypersurfaces being a subset of a closed hypersurface, by choosing $\alpha, \beta$ and $\gamma$ as zero on a part of $\Sigma$.

In the following we write 
\begin{align}\label{eq:A}
 \cA = \begin{pmatrix} \alpha & \gamma \\ - \overline{\gamma} & \beta \end{pmatrix}.
\end{align}
We define the symmetric matrix function $\Theta_\cA$ on $\Sigma$ by
\begin{align}\label{eq:Theta}
 \Theta_\cA = \begin{pmatrix}
           \frac{|1 + \frac{\gamma}{2}|^2}{\beta} \eins_{\Sigma_\beta} + \frac{\alpha}{4} & \frac{(\frac{\overline{\gamma}}{2} - 1) (1 + \frac{\gamma}{2})}{\beta} \eins_{\Sigma_\beta} + \frac{\alpha}{4}\\
	   \frac{(\frac{\gamma}{2} - 1) (1 + \frac{\overline{\gamma}}{2})}{\beta} \eins_{\Sigma_\beta} + \frac{\alpha}{4}  & \frac{|1 - \frac{\gamma}{2}|^2}{\beta} \eins_{\Sigma_\beta} + \frac{\alpha}{4}
          \end{pmatrix},
\end{align}
where $\frac{1}{\beta} \eins_{\Sigma_\beta}$ equals $1 / \beta$ on $\Sigma_\beta$ and zero on $\Sigma_0$. Moreover, we define a quadratic form $h_\cA$ in $L^2 (\R^n)$ via
\begin{align}\label{eq:form}
\begin{split}
 h_\cA [f, g] & = \int_{\Omega_{\rm i}} \nabla f_{\rm i} \cdot \overline{\nabla g_{\rm i}} \, d x + \int_{\Omega_{\rm e}} \nabla f_{\rm e} \cdot \overline{\nabla g_{\rm e}} \, d x - \int_\Sigma \left\langle \Theta_\cA \binom{f_{\rm i}}{f_{\rm e}},  \binom{g_{\rm i}}{g_{\rm e}} \right\rangle d \sigma, \\
 \dom h_\cA & = \Big\{ f_{\rm i} \oplus f_{\rm e} \in H^1 (\Omega_{\rm i}) \oplus H^1 (\Omega_{\rm e}) : (1 + \tfrac{\overline \gamma}{2} ) f_{\rm i} = (1 - \tfrac{\overline \gamma}{2}) f_{\rm e}~\text{on}~\Sigma_0 \Big\},
\end{split}
\end{align}
where the brackets $\langle \cdot, \cdot \rangle$ denote the inner product in~$\C^2$, $\sigma$ is the standard surface measure on $\Sigma$ and the functions in the boundary integral have to be understood as the appropriate traces. Note that $h_\cA$ is well-defined since the entries of $\Theta_\cA$ are bounded functions. 

In the following lemma we investigate properties of $h_\cA$. For all details concerning semibounded quadratic forms and corresponding selfadjoint operators we refer the reader to the standard literature, e.g.,~\cite[Chap.~VI]{K95}.

\begin{lemma}\label{lem:form}
The quadratic form $h_\cA$ in $L^2 (\R^n)$ is densely defined, symmetric, semibounded below and closed.
\end{lemma}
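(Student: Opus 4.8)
The plan is to treat the four assertions separately, deriving semiboundedness and closedness from a single trace inequality. Density is immediate: every $f_{\rm i} \oplus f_{\rm e}$ with $f_j \in C_c^\infty(\Omega_j)$ has vanishing trace on $\Sigma$, hence satisfies the constraint defining $\dom h_\cA$ trivially, and such functions are already dense in $L^2(\R^n)$. For symmetry, the two gradient integrals are sesquilinear and conjugate symmetric, so it suffices to observe that $\Theta_\cA$ is pointwise a Hermitian matrix: its diagonal entries are real (as $\alpha, \beta$ are real and $|1 \pm \tfrac{\gamma}{2}|^2$ is real), and its off-diagonal entries are complex conjugates of one another, again because $\alpha$ and $\beta$ are real. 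Consequently the boundary integral is a Hermitian form and $h_\cA[f,g] = \overline{h_\cA[g,f]}$.

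The technical heart is the trace estimate: for each $j \in \{{\rm i}, {\rm e}\}$ and every $\eps > 0$ there is a constant $C_\eps > 0$ with
\[
 \|f_j|_\Sigma\|_{L^2(\Sigma)}^2 \le \eps \, \|\nabla f_j\|_{L^2(\Omega_j)}^2 + C_\eps \, \|f_j\|_{L^2(\Omega_j)}^2, \qquad f_j \in H^1(\Omega_j).
\]
On a bounded Lipschitz domain this is standard: the trace map $H^1(\Omega_j) \to H^{1/2}(\Sigma)$ is bounded, the embedding $H^{1/2}(\Sigma) \hookrightarrow L^2(\Sigma)$ is compact, and Ehrling's lemma upgrades the resulting bound to one with arbitrarily small coefficient in front of the full $H^1$-norm, which one then splits into its gradient and $L^2$ parts. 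For the unbounded domains admitted here the same inequality has to be obtained by a localization argument over a covering of $\Sigma$ by uniformly Lipschitz charts, keeping the constants under control; securing a uniform $C_\eps$ in this noncompact geometry is the step I expect to be the main obstacle.

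Granting the trace estimate, let $M := \sup_\Sigma \|\Theta_\cA\| < \infty$, which is finite since all entries of $\Theta_\cA$ are bounded (here $1/\beta$ is bounded on $\Sigma_\beta$ and $\tfrac{1}{\beta} \eins_{\Sigma_\beta}$ vanishes on $\Sigma_0$). Then the boundary integral is bounded by $M(\|f_{\rm i}|_\Sigma\|^2 + \|f_{\rm e}|_\Sigma\|^2)$, and choosing $\eps$ so small that $M\eps \le \tfrac12$ yields
\[
 h_\cA[f] \ge \tfrac12 \big( \|\nabla f_{\rm i}\|^2 + \|\nabla f_{\rm e}\|^2 \big) - MC_\eps \big( \|f_{\rm i}\|^2 + \|f_{\rm e}\|^2 \big) \ge - MC_\eps \|f\|_{L^2(\R^n)}^2,
\]
so $h_\cA$ is semibounded from below. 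The same two-sided use of the trace estimate shows that the form norm $f \mapsto \big( h_\cA[f] + (MC_\eps + 1)\|f\|^2 \big)^{1/2}$ is equivalent to the graph norm of $H^1(\Omega_{\rm i}) \oplus H^1(\Omega_{\rm e})$ on $\dom h_\cA$. Closedness therefore reduces to showing that $\dom h_\cA$ is a closed subspace of $H^1(\Omega_{\rm i}) \oplus H^1(\Omega_{\rm e})$; but it is precisely the kernel of the bounded linear map $f \mapsto \big[ (1 + \tfrac{\overline\gamma}{2}) f_{\rm i}|_\Sigma - (1 - \tfrac{\overline\gamma}{2}) f_{\rm e}|_\Sigma \big]|_{\Sigma_0}$ into $L^2(\Sigma_0)$, which is continuous because the traces depend continuously on $f$ in $H^1$ and $\gamma$ is bounded. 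Hence the domain is closed and $h_\cA$ is a closed form, completing the argument.
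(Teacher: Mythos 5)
Your proposal is correct and follows essentially the same route as the paper's proof: density via compactly supported smooth functions, symmetry from the pointwise Hermitian structure of $\Theta_\cA$, and semiboundedness plus closedness obtained from the Ehrling-type trace inequality together with the observation that $\dom h_\cA$ is a closed subspace of $H^1 (\Omega_{\rm i}) \oplus H^1 (\Omega_{\rm e})$. The one step you flag as the main obstacle --- the trace estimate with controlled constants on a possibly noncompact Lipschitz boundary --- is precisely what the paper settles by citing Lemma~2.6 of \cite{BEL14}, so no further idea is required there.
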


\begin{proof}
Clearly $h_\cA$ is densely defined as its domain contains $C_0^\infty (\Omega_{\rm i}) \oplus C_0^\infty (\Omega_{\rm e})$. Note further that for each $s \in \Sigma$ the matrix $\Theta_\cA (s)$ is symmetric, which implies the symmetry of~$h_\cA$. Moreover, since $\alpha, \gamma$ and $1/\beta |_{\Sigma_\beta} \cdot \eins_{\Sigma_\beta}$ are bounded functions, there exists a constant $\eta \in \R$ (independent of $s$), without loss of generality we may suppose $\eta < 0$, such that
\begin{align}\label{eq:lowerBoundPointwise}
 - \Big\langle \Theta_\cA (s) \binom{f_{\rm i} (s)}{f_{\rm e} (s)}, \binom{f_{\rm i} (s)}{f_{\rm e} (s)} \Big\rangle \geq \eta \left( |f_{\rm i} (s)|^2 + |f_{\rm e} (s)|^2 \right), \quad s \in \Sigma.
\end{align}
Recall that by Ehrling's lemma for each $\eps > 0$ there exists $C_\eps (\Omega_j) > 0$ such that
\begin{align*}
 \| f_j |_\Sigma \|_{L^2 (\Sigma)}^2 \leq \eps \| f_j \|_{H^1 (\Omega_j)}^2 + C_\eps (\Omega_j) \| f_j \|_{L^2 (\Omega_j)}^2, \quad f \in H^1 (\Omega_j), \quad j = \rm i, e;
\end{align*}
for a proof of this inequality in the case of a Lipschitz domain with a possibly noncompact boundary see, e.g.,~\cite[Lemma~2.6]{BEL14}. Therefore it follows from~\eqref{eq:lowerBoundPointwise} that for each $\eps > 0$ there exists $C_\eps = C_\eps (\R^n) > 0$ with
\begin{align}\label{eq:lowerBound}
 h_\cA [f] \geq (1 + \eta \eps) \big( \|f_{\rm i}\|_{H^1 (\Omega_{\rm i})}^2 + \|f_{\rm e}\|_{H^1 (\Omega_{\rm e})}^2 \big) + (\eta C_\eps - 1) \|f\|_{L^2 (\R^n)}^2
\end{align}
for all $f \in \dom h_\cA$; here and in the following we use the abbreviation $h_\cA [f] = h_\cA [f, f]$. In particular, for each sufficiently small $\eps > 0$ we have $1 + \eta \eps > 0$ and~\eqref{eq:lowerBound} implies
\begin{align*}
 h_\cA [f] \geq (\eta C_\eps - 1) \|f\|_{L^2 (\R^n)}^2, \quad f \in \dom h_\cA.
\end{align*}
Thus $h_\cA$ is semibounded below. Moreover, we can conclude from~\eqref{eq:lowerBound} that 
\begin{align*}
 (h_\cA - \eta C_\eps + 1) [f] \geq (1 + \eta \eps) \big( \|f_{\rm i}\|_{H^1 (\Omega_{\rm i})}^2 + \|f_{\rm e}\|_{H^1 (\Omega_{\rm e})}^2 \big), \quad f \in \dom h_\cA.
\end{align*}
From this, the boundedness assumptions on $\alpha, \beta$ and $\gamma$ and the continuity of the trace map from $H^1 (\Omega_j)$ to $L^2 (\Sigma)$, $j = \rm i, e$, it follows that the norm induced by $h_\cA - \eta C_\eps + 1$ is equivalent to the norm in $H^1 (\Omega_{\rm i}) \oplus H^1 (\Omega_{\rm e})$. Since $\dom h_\cA$ is a closed subspace of $H^1 (\Omega_{\rm i}) \oplus H^1 (\Omega_{\rm e})$, it follows that $h_\cA$ is closed.
\end{proof}

The previous lemma allows us to define a selfadjoint Hamiltonian in the following way.

\begin{definition}\label{def:HA}
The Laplacian subject to the generalized interaction~\eqref{eq:conditions} on $\Sigma$ is defined as the selfadjoint, semibounded operator $- \Delta_\cA$ in $L^2 (\R^n)$ corresponding to the quadratic form $h_\cA$ in~\eqref{eq:form}, i.e., $- \Delta_{\cA}$ is the unique selfadjoint operator in $L^2 (\R^n)$ which satisfies $\dom (- \Delta_\cA) \subset \dom h_\cA$ and
\begin{align*}
 (- \Delta_\cA f, g)_{L^2 (\R^n)} = h_\cA [f, g]
\end{align*}
for all $f \in \dom (- \Delta_\cA)$, $g \in \dom h_\cA$.
\end{definition}

In the following theorem we establish the relation of $- \Delta_\cA$ with the conditions~\eqref{eq:conditions}. As mentioned above, the conditions~\eqref{eq:conditions} are only formal and need an interpretation particularly if $\beta$ and $\gamma$ are nonsmooth. In order to give a meaning to the first condition in~\eqref{eq:conditions}, in the following theorem {\it we assume additionally} that $\gamma$ satisfies
\begin{align}\label{eq:gamma}
 \gamma \phi \in H^{1/2} (\Sigma) \quad \text{for~all}~\phi \in H^{1/2} (\Sigma),
\end{align}
which allows us to define $\gamma \psi$ for each $\psi \in H^{- 1/2} (\Sigma)$ by $(\gamma \psi, \phi)_\Sigma = (\psi, \overline{\gamma} \phi)_\Sigma$ for all $\phi \in H^{1/2} (\Sigma)$. A rigorous formulation of~\eqref{eq:conditions} is then given by the conditions
\begin{align}\label{eq:conditionsNew}
\begin{split}
 \partial_{\nu_{\rm i}} f_{\rm i} |_\Sigma + \partial_{\nu_{\rm e}} f_{\rm e} |_\Sigma & = \frac{\alpha}{2} \left( f_{\rm i} |_\Sigma + f_{\rm e} |_\Sigma \right) + \frac{\gamma}{2} \left( \partial_{\nu_{\rm i}} f_{\rm i} |_\Sigma - \partial_{\nu_{\rm e}} f_{\rm e} |_\Sigma \right), \\
 f_{\rm i} |_{\Sigma_0} - f_{\rm e} |_{\Sigma_0} & = - \frac{\overline{\gamma}}{2} \left( f_{\rm i} |_{\Sigma_0} + f_{\rm e} |_{\Sigma_0} \right), \\
 \frac{1}{\beta} \left(f_{\rm i} |_{\Sigma_\beta} - f_{\rm e} |_{\Sigma_\beta} \right) & = - \frac{\overline{\gamma}}{2 \beta} \left( f_{\rm i} |_{\Sigma_\beta} + f_{\rm e} |_{\Sigma_\beta} \right) + \frac{1}{2} \left( \partial_{\nu_{\rm i}} f_{\rm i} |_{\Sigma_\beta} - \partial_{\nu_{\rm e}} f_{\rm e} |_{\Sigma_\beta} \right),
\end{split}
\end{align}
where the latter equality is to be understood in the sense of distributions, namely,
\begin{align*}
 \Big(\frac{1}{\beta} \eins_{\Sigma_\beta} \left(f_{\rm i} |_\Sigma - f_{\rm e} |_\Sigma \right), \phi \Big)_\Sigma & = \Big( - \frac{\overline{\gamma}}{2 \beta} \eins_{\Sigma_\beta} \left( f_{\rm i} |_\Sigma + f_{\rm e} |_\Sigma \right) + \frac{1}{2} \left( \partial_{\nu_{\rm i}} f_{\rm i} |_\Sigma - \partial_{\nu_{\rm e}} f_{\rm e} |_\Sigma \right), \phi \Big)_\Sigma
\end{align*}
for all $\phi \in H^{1/2} (\Sigma)$ such that $\phi |_{\Sigma_0} = 0$. Note that if, e.g., $\beta$ is constant on $\Sigma$ then~\eqref{eq:conditions} makes sense and is equivalent to~\eqref{eq:conditionsNew}.

\begin{theorem}\label{thm:HA}
Let Assumption~\ref{as:coefficients} be satisfied and assume in addition that $\gamma$ satisfies~\eqref{eq:gamma}. Then the selfadjoint operator $- \Delta_\cA$ in Definition~\ref{def:HA} is given by
\begin{align}\label{eq:op}
\begin{split}
 - \Delta_\cA f & = - \Delta f_{\rm i} \oplus - \Delta f_{\rm e}, \\
 \dom \big( - \Delta_\cA \big) & = \big\{ f = f_{\rm i} \oplus f_{\rm e} \in H^1 (\Omega_{\rm i}) \oplus H^1 (\Omega_{\rm e}) : \Delta f_j \in L^2 (\Omega_j), j = {\rm i, e}, \\ & \qquad f~\text{satisfies}~\eqref{eq:conditionsNew} \big\}.
\end{split}
\end{align}
\end{theorem}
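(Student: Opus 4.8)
The plan is to use the first representation theorem characterization behind Definition~\ref{def:HA}: a function $f\in\dom h_\cA$ lies in $\dom(-\Delta_\cA)$ with $-\Delta_\cA f=w\in L^2(\R^n)$ if and only if $h_\cA[f,g]=(w,g)_{L^2(\R^n)}$ for all $g\in\dom h_\cA$, and I would prove the two inclusions by decoding this identity. First I would pin down the action by testing with $g\in C_0^\infty(\Omega_{\rm i})\oplus C_0^\infty(\Omega_{\rm e})$: the boundary integral in~\eqref{eq:form} drops out and the identity reduces to $\int_{\Omega_j}\nabla f_j\cdot\overline{\nabla g_j}\,dx=(w_j,g_j)_{L^2(\Omega_j)}$, which says precisely that $-\Delta f_j=w_j$ as a distribution, so $\Delta f_j\in L^2(\Omega_j)$ and the action is the componentwise Laplacian. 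With $\Delta f_j\in L^2(\Omega_j)$ secured, Green's identity~\eqref{eq:Green} applies, and for general $g\in\dom h_\cA$ it converts $h_\cA[f,g]=(w,g)_{L^2(\R^n)}$ into the weak interface identity $(\partial_{\nu_{\rm i}}f_{\rm i}|_\Sigma,g_{\rm i}|_\Sigma)_\Sigma+(\partial_{\nu_{\rm e}}f_{\rm e}|_\Sigma,g_{\rm e}|_\Sigma)_\Sigma=\int_\Sigma\langle\Theta_\cA\binom{f_{\rm i}}{f_{\rm e}},\binom{g_{\rm i}}{g_{\rm e}}\rangle\,d\sigma$, valid for all $g\in\dom h_\cA$; since $1/\beta$ is bounded on $\Sigma_\beta$ and the traces lie in $H^{1/2}(\Sigma)\subset L^2(\Sigma)$, the right-hand side is a genuine $L^2(\Sigma)$-pairing while the left-hand side is an $H^{-1/2}$--$H^{1/2}$ duality.

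Next I would decode this identity using the surjectivity of the trace map $H^1(\Omega_j)\to H^{1/2}(\Sigma)$, which shows the admissible trace pairs $(g_{\rm i}|_\Sigma,g_{\rm e}|_\Sigma)$ are exactly those $(\phi_{\rm i},\phi_{\rm e})\in H^{1/2}(\Sigma)^2$ with $(1+\tfrac{\overline\gamma}2)\phi_{\rm i}=(1-\tfrac{\overline\gamma}2)\phi_{\rm e}$ on $\Sigma_0$. Writing the entries of $\Theta_\cA\binom{f_{\rm i}}{f_{\rm e}}$ in terms of $u:=(f_{\rm i}|_\Sigma-f_{\rm e}|_\Sigma)+\tfrac{\overline\gamma}2(f_{\rm i}|_\Sigma+f_{\rm e}|_\Sigma)$, a short computation gives on $\Sigma_\beta$ the relations $(\Theta_\cA\binom{f_{\rm i}}{f_{\rm e}})_1=\tfrac{1+\gamma/2}{\beta}u+\tfrac\alpha4(f_{\rm i}|_\Sigma+f_{\rm e}|_\Sigma)$ and $(\Theta_\cA\binom{f_{\rm i}}{f_{\rm e}})_2=-\tfrac{1-\gamma/2}{\beta}u+\tfrac\alpha4(f_{\rm i}|_\Sigma+f_{\rm e}|_\Sigma)$, while on $\Sigma_0$ both entries equal $\tfrac\alpha4(f_{\rm i}|_\Sigma+f_{\rm e}|_\Sigma)$. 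Feeding in the admissible family $\phi_{\rm i}=(1-\tfrac{\overline\gamma}2)\chi$, $\phi_{\rm e}=(1+\tfrac{\overline\gamma}2)\chi$ with $\chi\in H^{1/2}(\Sigma)$ arbitrary, the $u$-terms cancel on both $\Sigma_0$ and $\Sigma_\beta$ and the identity collapses to the first condition in~\eqref{eq:conditionsNew}, where~\eqref{eq:gamma} is used both to make these test traces lie in $H^{1/2}(\Sigma)$ and to shift $\gamma$ onto the normal derivatives via $(\gamma\psi,\chi)_\Sigma=(\psi,\overline\gamma\chi)_\Sigma$. Feeding in instead $\phi_{\rm i}=\phi=-\phi_{\rm e}$ with $\phi\in H^{1/2}(\Sigma)$ vanishing on $\Sigma_0$ isolates $\Sigma_\beta$ and yields the third (distributional) condition; the second condition is just membership in $\dom h_\cA$. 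This proves that every $f\in\dom(-\Delta_\cA)$ satisfies~\eqref{eq:conditionsNew}.

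For the reverse inclusion I would run the computation backwards: given $f$ with $\Delta f_j\in L^2(\Omega_j)$ satisfying~\eqref{eq:conditionsNew}, the second condition places $f\in\dom h_\cA$, and Green's identity rewrites $h_\cA[f,g]$ as $(w,g)_{L^2(\R^n)}$ plus exactly the bracket above, which must be shown to vanish for every admissible $g$. The key is that every admissible trace pair decomposes as $\phi_{\rm i}=(1-\tfrac{\overline\gamma}2)\chi+\phi$, $\phi_{\rm e}=(1+\tfrac{\overline\gamma}2)\chi-\phi$ with $\chi=\tfrac12(\phi_{\rm i}+\phi_{\rm e})\in H^{1/2}(\Sigma)$ and $\phi=\tfrac12[(1+\tfrac{\overline\gamma}2)\phi_{\rm i}-(1-\tfrac{\overline\gamma}2)\phi_{\rm e}]\in H^{1/2}(\Sigma)$ vanishing on $\Sigma_0$ by the constraint; since the bracket annihilates the first summand by the first condition in~\eqref{eq:conditionsNew} and the second summand by the third, it annihilates every admissible $g$, giving $f\in\dom(-\Delta_\cA)$ with $-\Delta_\cA f=-\Delta f_{\rm i}\oplus-\Delta f_{\rm e}$.

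The main obstacle is bookkeeping rather than conceptual: keeping the regularity of every term straight, since the normal-derivative combinations lie in $H^{-1/2}(\Sigma)$ and are paired by duality, whereas the $\Theta_\cA$-terms, regularized by the bounded factor $1/\beta$ on $\Sigma_\beta$, are honest $L^2(\Sigma)$ functions. The decisive use of~\eqref{eq:gamma} must be handled consistently, both to guarantee that the chosen test traces $(1\pm\tfrac{\overline\gamma}2)\chi$ belong to $H^{1/2}(\Sigma)$ and to give meaning to $\tfrac\gamma2(\partial_{\nu_{\rm i}}f_{\rm i}|_\Sigma-\partial_{\nu_{\rm e}}f_{\rm e}|_\Sigma)$ as an element of $H^{-1/2}(\Sigma)$, so that the first line of~\eqref{eq:conditionsNew} is a genuine equation there.
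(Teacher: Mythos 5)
Your proposal is correct and follows essentially the same route as the paper's proof: the first representation theorem, identification of the action by testing with $C_0^\infty(\Omega_{\rm i})\oplus C_0^\infty(\Omega_{\rm e})$, Green's identity to reach the weak interface identity, and then exactly the same two families of test traces, $\bigl((1-\tfrac{\overline\gamma}{2})\chi,(1+\tfrac{\overline\gamma}{2})\chi\bigr)$ for the first condition and antisymmetric pairs vanishing on $\Sigma_0$ for the third, with the surjectivity of the trace map and hypothesis~\eqref{eq:gamma} used in the same way. Your bookkeeping via $u=(f_{\rm i}|_\Sigma-f_{\rm e}|_\Sigma)+\tfrac{\overline\gamma}{2}(f_{\rm i}|_\Sigma+f_{\rm e}|_\Sigma)$ and the linearity/decomposition argument in the reverse direction merely reorganize the paper's direct substitution of the boundary conditions; both computations are the same in substance.
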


\begin{proof}
Let us denote by $H$ the operator given in~\eqref{eq:op}. In order to show that $- \Delta_\cA = H$ let first $f \in \dom H$. Then $f \in H^1 (\Omega_{\rm i}) \oplus H^1 (\Omega_{\rm e})$ and $\Delta f_j \in L^2 (\Omega_j)$, $j = \rm i, e$, and the second identity in~\eqref{eq:conditionsNew} immediately implies $f \in \dom h_\cA$. Furthermore, for $g \in \dom h_\cA$ Green's identity~\eqref{eq:Green} yields 
\begin{align}\label{eq:longCalc}
\begin{split}
 (H f, g)_{L^2 (\R^n)} & = \int_{\Omega_{\rm i}} \nabla f_{\rm i} \cdot \overline{\nabla g_{\rm i}} \, d x + \int_{\Omega_{\rm e}} \nabla f_{\rm e} \cdot \overline{\nabla g_{\rm e}} \, d x \\
 & \quad - \left( \partial_{\nu_{\rm i}} f_{\rm i} |_\Sigma, g_{\rm i} |_\Sigma \right)_\Sigma - \left(\partial_{\nu_{\rm e}} f_{\rm e} |_\Sigma, g_{\rm e} |_\Sigma \right)_\Sigma.
\end{split}
\end{align}
On the other hand, for $g \in \dom h_\cA$ the fact that $(1 + \frac{\overline{\gamma}}{2}) g_{\rm i} |_\Sigma - (1 - \frac{\overline{\gamma}}{2}) g_{\rm e} |_\Sigma$ vanishes on $\Sigma_0$ and the third identity in~\eqref{eq:conditionsNew} yield
\begin{align}\label{eq:boundaryTerm}
\begin{split}
 \int_\Sigma \big\langle \Theta_\cA \tbinom{f_{\rm i}}{f_{\rm e}}, \tbinom{g_{\rm i}}{g_{\rm e}} \big\rangle d \sigma & = \int_{\Sigma} \tfrac{1}{\beta} \eins_{\Sigma_\beta} \big( (1 + \tfrac{\overline{\gamma}}{2}) f_{\rm i} - (1 - \tfrac{\overline{\gamma}}{2}) f_{\rm e} \big) \big( \overline{(1 + \tfrac{\overline{\gamma}}{2}) g_{\rm i} - (1 - \tfrac{\overline{\gamma}}{2}) g_{\rm e}} \big) d \sigma \\
 & \quad + \int_\Sigma \tfrac{\alpha}{4} (f_{\rm i} + f_{\rm e}) (\overline{g_{\rm i} + g_{\rm e}}) d \sigma \\
 & = \tfrac{1}{2} \big( \partial_{\nu_{\rm i}} f_{\rm i} |_{\Sigma} - \partial_{\nu_{\rm e}} f_{\rm e} |_{\Sigma}, (1 + \tfrac{\overline{\gamma}}{2}) g_{\rm i} |_\Sigma - (1 - \tfrac{\overline{\gamma}}{2}) g_{\rm e} |_\Sigma \big)_\Sigma \\
 & \quad + \tfrac{1}{2} \left( \tfrac{\alpha}{2} (f_{\rm i} |_\Sigma + f_{\rm e} |_\Sigma), g_{\rm i} |_\Sigma + g_{\rm e} |_\Sigma \right)_\Sigma.
\end{split}
\end{align}
Furthermore, with an application of the first identity in~\eqref{eq:conditionsNew},~\eqref{eq:boundaryTerm} turns into
\begin{align*}
\begin{split}
 \int_\Sigma \big\langle \Theta_\cA \tbinom{f_{\rm i}}{f_{\rm e}}, \tbinom{g_{\rm i}}{g_{\rm e}} \big\rangle d \sigma & = \tfrac{1}{2} \big( \partial_{\nu_{\rm i}} f_{\rm i} |_{\Sigma} - \partial_{\nu_{\rm e}} f_{\rm e} |_{\Sigma}, (1 + \tfrac{\overline{\gamma}}{2}) g_{\rm i} |_\Sigma - (1 - \tfrac{\overline{\gamma}}{2}) g_{\rm e} |_\Sigma \big)_\Sigma \\
 & \quad + \tfrac{1}{2} \left( (1 - \tfrac{\gamma}{2}) \partial_{\nu_{\rm i}} f_{\rm i} |_{\Sigma} + (1 + \tfrac{\gamma}{2}) \partial_{\nu_{\rm e}} f_{\rm e} |_{\Sigma},  g_{\rm i} |_\Sigma + g_{\rm e} |_\Sigma \right)_\Sigma \\
 & = \tfrac{1}{2} \big( (1 + \tfrac{\gamma}{2}) (\partial_{\nu_{\rm i}} f_{\rm i} |_{\Sigma} - \partial_{\nu_{\rm e}} f_{\rm e} |_{\Sigma}) \\
 & \qquad + (1 - \tfrac{\gamma}{2}) \partial_{\nu_{\rm i}} f_{\rm i} |_{\Sigma} + (1 + \tfrac{\gamma}{2}) \partial_{\nu_{\rm e}} f_{\rm e} |_{\Sigma}, g_{\rm i} |_\Sigma \big)_\Sigma \\
 & \quad + \tfrac{1}{2} \big( (\tfrac{\gamma}{2} - 1) (\partial_{\nu_{\rm i}} f_{\rm i} |_{\Sigma} - \partial_{\nu_{\rm e}} f_{\rm e} |_{\Sigma}) \\
 & \qquad + (1 - \tfrac{\gamma}{2}) \partial_{\nu_{\rm i}} f_{\rm i} |_{\Sigma} + (1 + \tfrac{\gamma}{2}) \partial_{\nu_{\rm e}} f_{\rm e} |_{\Sigma}, g_{\rm e} |_\Sigma \big)_\Sigma \\
 & = (\partial_{\nu_{\rm i}} f_{\rm i} |_{\Sigma}, g_{\rm i} |_\Sigma)_\Sigma + (\partial_{\nu_{\rm e}} f_{\rm e} |_{\Sigma}, g_{\rm e} |_\Sigma)_\Sigma.
\end{split}
\end{align*}
From this and~\eqref{eq:longCalc} it follows 
\begin{align*}
 (H f, g)_{L^2 (\R^n)} = h_\cA [f, g]
\end{align*}
for all $g \in \dom h_\cA$. Thus $f \in \dom (- \Delta_\cA)$ and $- \Delta_\cA f = H f$.

It remains to show that each $f \in \dom (- \Delta_\cA)$ belongs to $\dom H$. Indeed, choosing such $f$ we have $f \in H^1 (\Omega_{\rm i}) \oplus H^1 (\Omega_{\rm e})$ and for $g \in C_0^\infty (\Omega_{\rm i}) \oplus C_0^\infty (\Omega_{\rm e}) \subset \dom h_\cA$
\begin{align*}
 (- \Delta_\cA f, g)_{L^2 (\R^n)} & = h_\cA [f, g] = (- \Delta f_{\rm i}, g_{\rm i}) + (- \Delta f_{\rm e}, g_{\rm e}),
\end{align*}
where the expressions on the right mean a distributional application. This implies 
\begin{align}\label{eq:LaplaceAction}
 - \Delta f_{\rm i} \oplus - \Delta f_{\rm e} = - \Delta_\cA f \in L^2 (\R^n) = L^2 (\Omega_{\rm i}) \oplus L^2 (\Omega_{\rm e}).
\end{align}
In order to verify the boundary conditions~\eqref{eq:conditionsNew} note that by the choice of $f$ we have
\begin{align}\label{eq:BCpart1}
 (1 + \overline{\gamma}/2) f_{\rm i} |_{\Sigma_0} = (1 - \overline{\gamma}/2) f_{\rm e} |_{\Sigma_0},
\end{align}
which is the second condition in~\eqref{eq:conditionsNew}. Moreover,~\eqref{eq:LaplaceAction} and Green's identity yield
\begin{align}\label{eq:f1}
\begin{split}
 (- \Delta_\cA f, g)_{L^2 (\R^n)} & = \int_{\Omega_{\rm i}} \nabla f_{\rm i} \overline{\nabla g_{\rm i}} \, d x + \int_{\Omega_{\rm e}} \nabla f_{\rm e} \overline{\nabla g_{\rm e}} \, d x \\
 & \quad - \left( \partial_{\nu_{\rm i}} f_{\rm i} |_\Sigma, g_{\rm i} |_\Sigma \right)_\Sigma - \left(\partial_{\nu_{\rm e}} f_{\rm e} |_\Sigma, g_{\rm e} |_\Sigma \right)_\Sigma
\end{split}
\end{align}
for all $g \in \dom h_\cA$ and, on the other hand, using the definition of~$- \Delta_\cA$,
\begin{align}\label{eq:f2}
\begin{split}
 (- \Delta_\cA f, g)_{L^2 (\R^n)} & = \int_{\Omega_{\rm i}} \nabla f_{\rm i} \overline{\nabla g_{\rm i}} \, d x + \int_{\Omega_{\rm e}} \nabla f_{\rm e} \overline{\nabla g_{\rm e}} \, d x - \int_\Sigma \big\langle \Theta_\cA \tbinom{f_{\rm i}}{f_{\rm e}}, \tbinom{g_{\rm i}}{g_{\rm e}} \big\rangle d \sigma
\end{split}
\end{align}
for all $g \in \dom h_\cA$. From~\eqref{eq:f1} and~\eqref{eq:f2} we conclude
\begin{align}\label{eq:f3}
 \left( \partial_{\nu_{\rm i}} f_{\rm i} |_\Sigma, g_{\rm i} |_\Sigma \right)_\Sigma + \left(\partial_{\nu_{\rm e}} f_{\rm e} |_\Sigma, g_{\rm e} |_\Sigma \right)_\Sigma & = \int_\Sigma \left\langle \Theta_\cA \binom{f_{\rm i} |_\Sigma}{f_{\rm e} |_\Sigma}, \binom{g_{\rm i} |_\Sigma}{g_{\rm e} |_\Sigma} \right\rangle d \sigma
\end{align}
for all $g \in \dom h_\cA$. From this identity we are going to derive the conditions~\eqref{eq:conditionsNew}. In fact, for each $g \in \dom h_\cA$~\eqref{eq:f3} can be rewritten as
\begin{align}\label{eq:veryLongCalc}
\begin{split}
 0 & = \left( \partial_{\nu_{\rm i}} f_{\rm i} |_\Sigma - (1 + \tfrac{\gamma}{2}) \big( \tfrac{1 + \overline \gamma / 2}{\beta} \eins_{\Sigma_\beta} f_{\rm i} |_\Sigma + \tfrac{\overline \gamma / 2 - 1}{\beta} \eins_{\Sigma_\beta} f_{\rm e} |_\Sigma \big) , g_{\rm i} |_\Sigma \right)_\Sigma \\
 & \quad + \left( \partial_{\nu_{\rm e}} f_{\rm e} |_\Sigma - (\tfrac{\gamma}{2} - 1) \big( \tfrac{1 + \overline \gamma / 2}{\beta} \eins_{\Sigma_\beta} f_{\rm i} |_\Sigma + \tfrac{\overline \gamma / 2 - 1}{\beta} \eins_{\Sigma_\beta} f_{\rm e} |_\Sigma \big), g_{\rm e} |_\Sigma \right)_\Sigma \\
 & \quad - \left( \tfrac{\alpha}{4} (f_{\rm i} |_\Sigma + f_{\rm e} |_\Sigma), g_{\rm i} |_\Sigma + g_{\rm e} |_\Sigma \right)_\Sigma \\
 & = \left(  \tfrac{1 + \overline \gamma / 2}{\beta} \eins_{\Sigma_\beta} f_{\rm i} |_\Sigma + \tfrac{\overline \gamma / 2 - 1}{\beta} \eins_{\Sigma_\beta} f_{\rm e} |_\Sigma, (1 - \tfrac{\overline{\gamma}}{2}) g_{\rm e} |_\Sigma - (1 + \tfrac{\overline{\gamma}}{2}) g_{\rm i} |_\Sigma \right)_\Sigma \\
 & \quad + \left( \partial_{\nu_{\rm i}} f_{\rm i} |_\Sigma, g_{\rm i} |_\Sigma \right)_\Sigma + \left( \partial_{\nu_{\rm e}} f_{\rm e} |_\Sigma, g_{\rm e} |_\Sigma \right)_\Sigma - \left( \tfrac{\alpha}{4} (f_{\rm i} |_\Sigma + f_{\rm e} |_\Sigma), g_{\rm i} |_\Sigma + g_{\rm e} |_\Sigma \right)_\Sigma \\
 & = \left(  \tfrac{1 + \overline \gamma / 2}{\beta} \eins_{\Sigma_\beta} f_{\rm i} |_\Sigma + \tfrac{\overline \gamma / 2 - 1}{\beta} \eins_{\Sigma_\beta} f_{\rm e} |_\Sigma, (1 - \tfrac{\overline{\gamma}}{2}) g_{\rm e} |_\Sigma - (1 + \tfrac{\overline{\gamma}}{2}) g_{\rm i} |_\Sigma \right)_\Sigma \\
 & \quad + \tfrac{1}{2} \left( (1 + \tfrac{\gamma}{2}) \partial_{\nu_{\rm i}} f_{\rm i} |_\Sigma + (1 - \tfrac{\gamma}{2}) \partial_{\nu_{\rm i}} f_{\rm i} |_\Sigma, g_{\rm i} |_\Sigma \right)_\Sigma \\
 & \quad + \tfrac{1}{2} \left( (1 - \tfrac{\gamma}{2}) \partial_{\nu_{\rm e}} f_{\rm e} |_\Sigma + (1 + \tfrac{\gamma}{2}) \partial_{\nu_{\rm e}} f_{\rm e} |_\Sigma, g_{\rm e} |_\Sigma \right)_\Sigma \\
 & \quad - \left( \tfrac{\alpha}{2} (f_{\rm i} |_\Sigma + f_{\rm e} |_\Sigma), \tfrac{1}{2} (g_{\rm i} |_\Sigma + g_{\rm e} |_\Sigma) \right)_\Sigma.
\end{split}
\end{align}
Let now $g \in H^1 (\Omega_{\rm i}) \oplus H^1 (\Omega_{\rm e})$ with $(1 + \frac{\overline{\gamma}}{2}) g_{\rm i} |_\Sigma = (1 - \frac{\overline{\gamma}}{2}) g_{\rm e} |_\Sigma$ on all of $\Sigma$. Then $g \in \dom h_\cA$ and~\eqref{eq:veryLongCalc} implies
\begin{align}\label{eq:nextLongCalc}
\begin{split}
 0 & = \left(  \tfrac{1 + \overline \gamma / 2}{\beta} \eins_{\Sigma_\beta} f_{\rm i} |_\Sigma + \tfrac{\overline \gamma / 2 - 1}{\beta} \eins_{\Sigma_\beta} f_{\rm e} |_\Sigma, (1 - \tfrac{\overline{\gamma}}{2}) g_{\rm e} |_\Sigma - (1 + \tfrac{\overline{\gamma}}{2}) g_{\rm i} |_\Sigma \right)_\Sigma \\ 
 & \quad + \tfrac{1}{2} \left( (1 - \tfrac{\gamma}{2}) \partial_{\nu_{\rm i}} f_{\rm i} |_\Sigma, g_{\rm e} |_\Sigma \right)_\Sigma + \tfrac{1}{2} \left( (1 - \tfrac{\gamma}{2}) \partial_{\nu_{\rm i}} f_{\rm i} |_\Sigma, g_{\rm i} |_\Sigma \right)_\Sigma \\
 & \quad + \tfrac{1}{2} \left( (1 + \tfrac{\gamma}{2}) \partial_{\nu_{\rm e}} f_{\rm e} |_\Sigma, g_{\rm i} |_\Sigma \right)_\Sigma + \tfrac{1}{2} \left( (1 + \tfrac{\gamma}{2}) \partial_{\nu_{\rm e}} f_{\rm e} |_\Sigma, g_{\rm e} |_\Sigma \right)_\Sigma \\
 & \quad - \left( \tfrac{\alpha}{2} (f_{\rm i} |_\Sigma + f_{\rm e} |_\Sigma), \tfrac{1}{2} (g_{\rm i} |_\Sigma + g_{\rm e} |_\Sigma) \right)_\Sigma \\
 & = \left(  \tfrac{1 + \overline \gamma / 2}{\beta} \eins_{\Sigma_\beta} f_{\rm i} |_\Sigma + \tfrac{\overline \gamma / 2 - 1}{\beta} \eins_{\Sigma_\beta} f_{\rm e} |_\Sigma, (1 - \tfrac{\overline{\gamma}}{2}) g_{\rm e} |_\Sigma - (1 + \tfrac{\overline{\gamma}}{2}) g_{\rm i} |_\Sigma \right)_\Sigma \\ 
 & \quad + \left( (1 - \tfrac{\gamma}{2}) \partial_{\nu_{\rm i}} f_{\rm i} |_\Sigma + (1 + \tfrac{\gamma}{2}) \partial_{\nu_{\rm e}} f_{\rm e} |_\Sigma - \tfrac{\alpha}{2} (f_{\rm i} |_\Sigma + f_{\rm e} |_\Sigma), \tfrac{1}{2} (g_{\rm i} |_\Sigma + g_{\rm e} |_\Sigma) \right)_\Sigma.
\end{split}
\end{align}
Note that each $\phi \in H^{1/2} (\Sigma)$ can be written in the form $\phi = \tfrac{1}{2} (g_{\rm i} |_\Sigma + g_{\rm e} |_\Sigma)$ for some $g \in H^1 (\Omega_{\rm i}) \oplus H^1 (\Omega_{\rm e})$ such that $(1 + \frac{\overline{\gamma}}{2}) g_{\rm i} |_\Sigma = (1 - \frac{\overline{\gamma}}{2}) g_{\rm e} |_\Sigma$. Indeed, for $\phi \in H^{1/2} (\Sigma)$ we have $(1 - \tfrac{\overline \gamma}{2}) \phi \in H^{1/2} (\Sigma)$ and $(1 + \tfrac{\overline \gamma}{2}) \phi \in H^{1/2} (\Sigma)$ by~\eqref{eq:gamma}; thus there exist $g_j \in H^1 (\Omega_j)$, $j = \rm i, e$, such that 
\begin{align*}
 g_{\rm i} |_\Sigma = (1 - \tfrac{\overline \gamma}{2}) \phi \quad \text{and} \quad g_{\rm e} |_\Sigma = (1 + \tfrac{\overline \gamma}{2}) \phi
\end{align*}
and then
\begin{align*}
 (1 + \tfrac{\overline \gamma}{2}) g_{\rm i} |_\Sigma = (1 + \tfrac{\overline \gamma}{2}) (1 - \tfrac{\overline \gamma}{2}) \phi = (1 - \tfrac{\overline \gamma}{2}) g_{\rm e} |_\Sigma
\end{align*}
holds and $\frac{1}{2} (g_{\rm i} |_\Sigma + g_{\rm e} |_\Sigma) = \phi$. Hence we obtain from~\eqref{eq:nextLongCalc}
\begin{align*}
 0 = \left( (1 - \tfrac{\gamma}{2}) \partial_{\nu_{\rm i}} f_{\rm i} |_\Sigma + (1 + \tfrac{\gamma}{2}) \partial_{\nu_{\rm e}} f_{\rm e} |_\Sigma - \tfrac{\alpha}{2} (f_{\rm i} |_\Sigma + f_{\rm e} |_\Sigma), \phi \right)_\Sigma
\end{align*}
for all $\phi \in H^{1/2} (\Sigma)$, which leads to the first identity in~\eqref{eq:conditionsNew}. In order to obtain the third equality in~\eqref{eq:conditionsNew} observe that the first equality in~\eqref{eq:conditionsNew} can be used to write~\eqref{eq:veryLongCalc} as
\begin{align}\label{eq:towards3}
\begin{split}
 0 & = \left(  \tfrac{1 + \overline \gamma / 2}{\beta} \eins_{\Sigma_\beta} f_{\rm i} |_\Sigma + \tfrac{\overline \gamma / 2 - 1}{\beta} \eins_{\Sigma_\beta} f_{\rm e} |_\Sigma, (1 - \tfrac{\overline{\gamma}}{2}) g_{\rm e} |_\Sigma - (1 + \tfrac{\overline{\gamma}}{2}) g_{\rm i} |_\Sigma \right)_\Sigma \\
 & \quad + \tfrac{1}{2} \left( (1 + \tfrac{\gamma}{2}) \partial_{\nu_{\rm i}} f_{\rm i} |_\Sigma - (1 + \tfrac{\gamma}{2}) \partial_{\nu_{\rm e}} f_{\rm e} |_\Sigma + \tfrac{\alpha}{2} (f_{\rm i} |_\Sigma + f_{\rm e} |_\Sigma), g_{\rm i} |_\Sigma \right)_\Sigma \\
 & \quad + \tfrac{1}{2} \left( (1 - \tfrac{\gamma}{2}) \partial_{\nu_{\rm e}} f_{\rm e} |_\Sigma - (1 - \tfrac{\gamma}{2}) \partial_{\nu_{\rm i}} f_{\rm i} |_\Sigma + \tfrac{\alpha}{2} (f_{\rm i} |_\Sigma + f_{\rm e} |_\Sigma), g_{\rm e} |_\Sigma \right)_\Sigma \\
 & \quad - \left( \tfrac{\alpha}{4} (f_{\rm i} |_\Sigma + f_{\rm e} |_\Sigma), g_{\rm i} |_\Sigma + g_{\rm e} |_\Sigma \right)_\Sigma \\
 & = \left(  \tfrac{1 + \overline \gamma / 2}{\beta} \eins_{\Sigma_\beta} f_{\rm i} |_\Sigma + \tfrac{\overline \gamma / 2 - 1}{\beta} \eins_{\Sigma_\beta} f_{\rm e} |_\Sigma, (1 - \tfrac{\overline{\gamma}}{2}) g_{\rm e} |_\Sigma - (1 + \tfrac{\overline{\gamma}}{2}) g_{\rm i} |_\Sigma \right)_\Sigma \\ 
 & \quad + \tfrac{1}{2} \left( (1 + \tfrac{\gamma}{2}) (\partial_{\nu_{\rm i}} f_{\rm i} |_\Sigma - \partial_{\nu_{\rm e}} f_{\rm e} |_\Sigma), g_{\rm i} |_\Sigma \right)_\Sigma \\
 & \quad + \tfrac{1}{2} \left( - (1 - \tfrac{\gamma}{2}) (\partial_{\nu_{\rm i}} f_{\rm i} |_\Sigma - \partial_{\nu_{\rm e}} f_{\rm e} |_\Sigma), g_{\rm e} |_\Sigma \right)_\Sigma \\
 & = \left(  \tfrac{1 + \overline \gamma / 2}{\beta} \eins_{\Sigma_\beta} f_{\rm i} |_\Sigma + \tfrac{\overline \gamma / 2 - 1}{\beta} \eins_{\Sigma_\beta} f_{\rm e} |_\Sigma, (1 - \tfrac{\overline{\gamma}}{2}) g_{\rm e} |_\Sigma - (1 + \tfrac{\overline{\gamma}}{2}) g_{\rm i} |_\Sigma \right)_\Sigma \\ 
 & \quad - \tfrac{1}{2} \left( \partial_{\nu_{\rm i}} f_{\rm i} |_\Sigma - \partial_{\nu_{\rm e}} f_{\rm e} |_\Sigma, (1 - \tfrac{\overline{\gamma}}{2}) g_{\rm e} |_\Sigma - (1 + \tfrac{\overline{\gamma}}{2}) g_{\rm i} |_\Sigma \right)_\Sigma
\end{split}
\end{align}
for all $g \in \dom h_\cA$. Moreover, for each $\phi \in H^{1/2} (\Sigma)$ with $\phi |_{\Sigma_0} = 0$ there exists $g \in \dom h_\cA$ with $(1 - \tfrac{\overline{\gamma}}{2}) g_{\rm e} |_\Sigma - (1 + \tfrac{\overline{\gamma}}{2}) g_{\rm i} |_\Sigma = \phi$, which can be obtained by choosing $g_j \in H^1 (\Omega_j)$, $j = \rm i, e$, such that $g_{\rm i} |_\Sigma = - \phi / 2$ and $g_{\rm e} |_\Sigma = \phi / 2$. Hence~\eqref{eq:towards3} is equivalent to
\begin{align*}
 0 = \left(  \tfrac{1 + \overline \gamma / 2}{\beta} \eins_{\Sigma_\beta} f_{\rm i} |_\Sigma + \tfrac{\overline \gamma / 2 - 1}{\beta} \eins_{\Sigma_\beta} f_{\rm e} |_\Sigma - \tfrac{1}{2} ( \partial_{\nu_{\rm i}} f_{\rm i} |_\Sigma - \partial_{\nu_{\rm e}} f_{\rm e} |_\Sigma), \phi \right)_\Sigma
\end{align*}
for all $\phi \in H^{1/2} (\Sigma)$ such that $\phi |_{\Sigma_0} = 0$, which is the third identity in~\eqref{eq:conditionsNew}. Thus $f \in \dom H$, that is, $- \Delta_\cA = H$. This completes the proof of the theorem.
\end{proof}

Let us mention two examples where the generalized interactions~\eqref{eq:conditions} reduce to situations which were studied before.

\begin{example}
The generalized interactions under consideration include, as special cases, the $\delta$-interaction on $\Sigma$ of strength $\alpha$ (setting $\beta = \gamma = 0$ identically) and the $\delta'$-interaction on $\Sigma$ of strength $\beta$ (setting $\alpha  = \gamma = 0$ identically). 
\end{example}

\begin{example}
Let $\alpha = \gamma = 0$ identically and let $\Sigma_\beta \neq \Sigma$. Then $- \Delta_\cA$ describes a $\delta'$-interaction of strength $\beta$ on the (possibly nonclosed) hypersurface $\Sigma_\beta$. In space dimension $n = 2$ and for special choices of the nonclosed curve $\Sigma_\beta$ spectral properties of this operator were studied recently in~\cite{JL15}. Similarly the interactions under consideration include $\delta$-interactions on non-closed hypersurfaces, which can be obtained by choosing $\gamma = \beta = 0$ identically and a function $\alpha : \Sigma \to \R$ being zero on a part of $\Sigma$ and nonzero on another part.
\end{example}

\section{Essential spectra and existence of bound states}

In this section we study the essential spectrum of $- \Delta_\cA$ in the cases of a compact hypersurface or a noncompact, asymptotically planar hypersurface. Moreover, for compact $\Sigma$ we derive conditions for the existence or absence of discrete, negative eigenvalues. 

In the following we write $\sigma (- \Delta_\cA)$, $\sigma_{\rm ess} (- \Delta_\cA)$ and $\rho (- \Delta_\cA)$ for the spectrum, essential spectrum and resolvent set of $- \Delta_\cA$, respectively, and denote by $N (- \Delta_\cA)$ the number of discrete eigenvalues below the bottom of the essential spectrum, counted with multiplicities.

\subsection{The case of a compact hypersurface}

Let us first consider the case of a compact hypersurface $\Sigma$. In the following theorem we denote by $- \Delta_{\rm free}$ the free Laplacian in $L^2 (\R^n)$ (which coincides with $- \Delta_\cA$ if $\cA$ is trivial).

\begin{theorem}\label{thm:compact}
Let $\Omega_{\rm i}$ be bounded, that is, $\Sigma$ is compact. Moreover, let Assumption~\ref{as:coefficients} be satisfied. Then the following assertions hold.
\begin{enumerate}
 \item The resolvent difference
 \begin{align*}
  (- \Delta_\cA - \lambda)^{-1} - (- \Delta_{\rm free} - \lambda)^{-1}, \quad \lambda \in \rho (- \Delta_\cA) \cap \rho (- \Delta_{\rm free}),
 \end{align*}
 is compact. In particular, $\sigma_{\rm ess} (- \Delta_\cA) = [0, \infty)$.
 \item The discrete spectrum $\sigma (- \Delta_\cA) \cap (- \infty, 0)$ is finite.
\end{enumerate}
\end{theorem}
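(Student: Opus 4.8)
The plan is to prove the two assertions separately, obtaining the essential-spectrum statement in (i) from Weyl's theorem once the resolvent difference is shown to be compact, and proving the finiteness in (ii) by a Birman--Schwinger count. Throughout, the mechanism that makes everything work is that $\Sigma$ is compact, so the trace maps into $L^2(\Sigma)$ and the Sobolev embeddings on $\Sigma$ are compact (Rellich--Kondrachov).

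\textbf{Part (i).} First I would reduce to a single spectral parameter: it suffices to prove compactness of the resolvent difference for one $\lambda = -\kappa^2$ below the bottom of both operators, since the first resolvent identity then propagates compactness to every $\lambda \in \rho(-\Delta_\cA)\cap\rho(-\Delta_{\rm free})$. For that $\lambda$ I would establish a Krein-type resolvent formula
\[
 (-\Delta_\cA - \lambda)^{-1} - (-\Delta_{\rm free}-\lambda)^{-1} = \Gamma(\lambda)\, M(\lambda)\, \Gamma(\lambda)^*,
\]
where $\Gamma(\lambda)$ is built from the Poisson-type solution operators of $(-\Delta-\lambda)v=0$ on $\Omega_{\rm i}$ and $\Omega_{\rm e}$ with prescribed traces on $\Sigma$, the operator $M(\lambda)$ on the boundary space is assembled from $\Theta_\cA$ and the Dirichlet-to-Neumann maps, and $\Gamma(\lambda)^*$ acts by taking traces on $\Sigma$. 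Since $\Gamma(\lambda)^*$ maps $L^2(\R^n)$ continuously into $H^{1/2}(\Sigma)^2$ and $\Gamma(\lambda)$ is bounded from $H^{-1/2}(\Sigma)^2$ into $L^2(\R^n)$, the product factors through the embedding $H^{1/2}(\Sigma)\hookrightarrow L^2(\Sigma)\hookrightarrow H^{-1/2}(\Sigma)$, which is compact because $\Sigma$ is compact. Hence the resolvent difference is compact, and the stability of the essential spectrum under compact resolvent perturbations gives $\sigma_{\rm ess}(-\Delta_\cA) = \sigma_{\rm ess}(-\Delta_{\rm free}) = [0,\infty)$.

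\textbf{Part (ii).} Knowing $\sigma_{\rm ess}=[0,\infty)$, the spectrum in $(-\infty,0)$ consists of isolated eigenvalues of finite multiplicity and finiteness amounts to excluding accumulation at $0$. I would control the count by min--max combined with a Birman--Schwinger argument. Writing $\Theta_+$ for the pointwise positive part of $\Theta_\cA$ and using $\langle\Theta_\cA\psi,\psi\rangle\le\langle\Theta_+\psi,\psi\rangle$ together with the fact that dropping the $\Sigma_0$-constraint enlarges $\dom h_\cA$ to all of $H^1(\Omega_{\rm i})\oplus H^1(\Omega_{\rm e})$, one gets $N(-\Delta_\cA)\le N(\tilde h)$ for the form
\[
 \tilde h[f] = \|\nabla f_{\rm i}\|^2 + \|\nabla f_{\rm e}\|^2 - \int_\Sigma \Big\langle \Theta_+ \tbinom{f_{\rm i}}{f_{\rm e}}, \tbinom{f_{\rm i}}{f_{\rm e}} \Big\rangle\, d\sigma
\]
on the form domain of the decoupled Neumann Laplacian $-\Delta_N$. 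For $\tilde h$ the Birman--Schwinger principle at $\lambda=-\kappa^2$ identifies the number of eigenvalues below $-\kappa^2$ with $\#\{\lambda_j(K(\kappa))>1\}$, where $K(\kappa)=\Theta_+^{1/2}\,\tau\,(-\Delta_N+\kappa^2)^{-1}\,\tau^*\,\Theta_+^{1/2}$ on $L^2(\Sigma)^2$ and $\tau$ is the trace onto $\Sigma$; $K(\kappa)$ is compact for each $\kappa>0$ since $\tau$ is compact on the compact $\Sigma$. Monotonicity in $\kappa$ then yields $N(-\Delta_\cA)\le \sup_{\kappa>0}\#\{\lambda_j(K(\kappa))>1\}$.

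The main obstacle is the threshold behaviour as $\kappa\to 0^+$, which is exactly where accumulation could occur. As $\kappa\to 0$ the resolvent $(-\Delta_N+\kappa^2)^{-1}$ develops a divergence --- in $n=2$ a logarithmic one coming from the free part (the small-argument behaviour $K_0(\kappa|x-y|)\sim-\log\kappa$ is constant in $(x,y)$), and on each bounded component of $\Omega_{\rm i}$ a $\kappa^{-2}$ one coming from the Neumann zero mode. The point I would exploit is that in every case the divergent contribution to $K(\kappa)$ is of \emph{finite rank}, its range being spanned by the images under $\Theta_+^{1/2}\tau$ of the finitely many constant/zero modes, while the complementary part converges to a compact operator as $\kappa\to 0$. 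By eigenvalue interlacing for finite-rank perturbations, $\#\{\lambda_j(K(\kappa))>1\}$ is bounded uniformly in $\kappa$ by that finite rank plus the number of eigenvalues exceeding $1-\eps$ of the limiting compact operator, both finite. This gives $N(-\Delta_\cA)<\infty$. The remaining care is in making the boundary maps $\Gamma(\lambda)$, $M(\lambda)$, $\tau$ precise in the form setting and tracking the $\Sigma_0$-constraint through them; the rest is the standard compactness-plus-Weyl and Birman--Schwinger machinery.
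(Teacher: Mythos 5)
Your overall strategy is viable and the compactness mechanism you identify (compact trace/Sobolev embeddings on the compact $\Sigma$) is exactly the one the paper exploits, but both halves of your plan defer their central step to machinery you never establish, and in part (ii) that step is genuinely problematic. For part (i), your route runs through a Krein-type formula $W = \Gamma(\lambda) M(\lambda) \Gamma(\lambda)^*$, and this formula is precisely what is missing: for the four-parameter interaction with the $\Sigma_0$-constraint built into the form domain, constructing the Poisson operators and the boundary operator $M(\lambda)$ and proving the formula is substantial work (it is essentially the content of \cite{MPS15}, and the paper deliberately works with quadratic forms to avoid it). The paper's proof shows no such formula is needed: applying Green's identity to $u = (-\Delta_\cA - \lambda)^{-1} f$ and $v = (-\Delta_{\rm free} - \overline{\lambda})^{-1} g$ yields directly the factorization $W = T_2^* T_1 - T_4^* T_3$, where $T_1, T_3$ are built from traces and normal derivatives of the perturbed resolvent and $T_2, T_4$ from those of the free resolvent; compactness of $T_2$ and $T_4$ then follows from $\ran (-\Delta_{\rm free} - \overline{\lambda})^{-1} = H^2(\R^n)$ together with the compact embeddings $H^{1-\eps}(\Sigma) \hookrightarrow H^{1/2}(\Sigma)$ and $L^2(\Sigma) \hookrightarrow H^{-1/2}(\Sigma)$. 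So part (i) is repairable, but as written its key identity is assumed rather than proven.

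In part (ii) the gap is more serious. Your reduction $N(-\Delta_\cA) \le N(\tilde h)$ by enlarging the form domain and replacing $\Theta_\cA$ by its positive part is fine and parallels the paper's comparison with the form $\fa$. But the crux of your Birman--Schwinger argument, the behaviour of $K(\kappa)$ as $\kappa \to 0^+$, is settled by analogy with the \emph{free} resolvent, whereas your $K(\kappa)$ contains the \emph{decoupled Neumann} resolvent. On the bounded component the splitting $\kappa^{-2} P_0 + (\text{norm-convergent rest})$ is indeed valid, but on the exterior Lipschitz domain $\Omega_{\rm e}$ the assertion that $\tau (-\Delta_N + \kappa^2)^{-1} \tau^*$ converges in norm for $n \ge 3$, or diverges only through a finite-rank logarithmic term for $n = 2$, is a genuine low-energy threshold statement about the exterior Neumann Laplacian; the expansion $K_0(\kappa|x-y|) \sim -\log \kappa$ that you quote concerns the free kernel, not the Neumann kernel of an exterior Lipschitz domain, and excluding zero-energy resonance-type obstructions there requires real analysis that is neither sketched correctly nor easily citable in this generality. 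The paper sidesteps all of this with a one-step comparison: it bounds $h_\cA$ from below by a form with the scalar boundary potential $s \mapsto \|\Theta_\cA(s)\|$ on $H^1(\Omega_{\rm i}) \oplus H^1(\Omega_{\rm e})$ and invokes Birman's classical theorem \cite{B62} (Theorem~6.9 there), which delivers both $\sigma_{\rm ess} = [0,\infty)$ and finiteness of the negative spectrum for the comparison operator in one stroke, whence $N(-\Delta_\cA) \le N(A) < \infty$. Without either that citation or a worked-out threshold expansion for the exterior Neumann problem, your part (ii) is incomplete.
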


\begin{proof}
(i) We proceed similar to the proof of Theorem~4.2 in~\cite{BEL14}. Let $\lambda \in \rho (- \Delta_\cA) \cap \rho (- \Delta_{\rm free})$ and let $f, g \in L^2 (\R^n)$. Define 
\begin{align*}
 W = (- \Delta_\cA - \lambda)^{-1} - (- \Delta_{\rm free} - \lambda)^{-1}
\end{align*}
and let $u = (- \Delta_\cA - \lambda)^{-1} f$ and $v = (- \Delta_{\rm free} - \overline \lambda)^{-1} g$. Then
\begin{align*}
 (W f, g)_{L^2 (\R^n)} & = (u, g)_{L^2 (\R^n)} - (f, v)_{L^2 (\R^n)} \\
 & = (u, - \Delta_{\rm free} v)_{L^2 (\R^n)} - (- \Delta_\cA u, v)_{L^2 (\R^n)}.
\end{align*}
Since $- \Delta_\cA u = - \Delta u_{\rm i} \oplus - \Delta u_{\rm e}$, cf.\ the proof of Theorem~\ref{thm:HA}, it follows from Green's identity
\begin{align}\label{eq:W}
\begin{split}
 (W f, g)_{L^2 (\R^n)} & = (\partial_{\nu_{\rm i}} u_{\rm i} |_\Sigma, v_{\rm i} |_\Sigma )_{\Sigma} - (u_{\rm i} |_\Sigma, \partial_{\nu_{\rm i}} v_{\rm i} |_\Sigma )_\Sigma \\
 & \quad + (\partial_{\nu_{\rm e}} u_{\rm e} |_\Sigma, v_{\rm e} |_\Sigma)_\Sigma - (u_{\rm e} |_\Sigma, \partial_{\nu_{\rm e}} v_{\rm e} |_\Sigma)_\Sigma \\
 & = (\partial_{\nu_{\rm i}} u_{\rm i} |_\Sigma + \partial_{\nu_{\rm e}} u_{\rm e} |_\Sigma, v |_\Sigma)_\Sigma - (u_{\rm i} |_\Sigma - u_{\rm e} |_\Sigma, \partial_{\nu_{\rm i}} v_{\rm i} |_\Sigma)_\Sigma;
\end{split}
\end{align}
in the last step we have used $v \in H^2 (\R^n)$, that is, $v_{\rm i} |_\Sigma = v_{\rm e} |_\Sigma$ and $\partial_{\nu_{\rm i}} v_{\rm i} |_\Sigma + \partial_{\nu_{\rm e}} v_{\rm e} |_\Sigma = 0$. Let us define operators $T_1, T_4 : L^2 (\R^n) \to H^{-1/2} (\Sigma)$ and $T_2, T_3 : L^2 (\R^n) \to H^{1/2} (\Sigma)$ by
\begin{align*}
 T_1 f & = \partial_{\nu_{\rm i}} ((- \Delta_\cA - \lambda)^{-1} f)_{\rm i} |_\Sigma + \partial_{\nu_{\rm e}} ((- \Delta_\cA - \lambda)^{-1} f)_{\rm e} |_\Sigma, \\
 T_2 g & = ((- \Delta_{\rm free} - \overline \lambda)^{-1} g)|_\Sigma, \\
 T_3 f & = ((- \Delta_\cA - \lambda)^{-1} f)_{\rm i} |_\Sigma - ((- \Delta_\cA - \lambda)^{-1} f)_{\rm e} |_\Sigma, \\
 T_4 g & = \partial_{\nu_{\rm i}} ((- \Delta_{\rm free} - \overline \lambda)^{-1} g)_{\rm i} |_\Sigma.
\end{align*}
Then $T_1, \dots, T_4$ are bounded, everywhere defined operators in the respective spaces, which follows from the continuity of the trace and the normal derivative from $H^1 (\Omega_j)$ to $H^{1/2} (\Sigma)$ and $H^{- 1/2} (\Sigma)$, respectively, $j = \rm i, e$. Moreover,~\eqref{eq:W} yields
\begin{align}\label{eq:W2}
 W = T_2^* T_1 - T_4^* T_3.
\end{align}
Note that $\ran (- \Delta_{\rm free} - \overline \lambda)^{-1} = H^2 (\R^n)$ implies $\ran T_2 \subset H^{1 - \eps} (\Sigma)$ for each $\eps \in (0, 1/2)$ and $\ran T_4 \subset L^2 (\Sigma)$; see~\cite[Section~3]{GM11} for the required properties of trace maps on Lipschitz domains. Since the embeddings of $H^{1 - \eps} (\Sigma)$ into $H^{1/2} (\Sigma)$ and of $L^2 (\Sigma)$ into $H^{- 1/2} (\Sigma)$ are compact it follows that $T_2$ and $T_4$ are compact. Together with~\eqref{eq:W2} this implies compactness of $W$, which completes the proof of assertion~(i).

(ii) Consider the quadratic form $\fa$ in $L^2 (\R^n)$ defined by
\begin{align*}
 \fa [f] & = \int_{\Omega_{\rm i}} |\nabla f_{\rm i}|^2 d x + \int_{\Omega_{\rm e}} |\nabla f_{\rm e}|^2 d x - \int_\Sigma \|\Theta_\cA (s)\| \big( |f_{\rm i} (s)|^2 + |f_{\rm e} (s)|^2 \big) d \sigma (s), \\
 \dom \fa & = H^1 (\Omega_{\rm i}) \oplus H^1 (\Omega_{\rm e}),
\end{align*}
where $\Theta_\cA$ is defined in~\eqref{eq:Theta} and $\| \cdot \|$ denotes the matrix norm induced by the Euclidean norm on $\C^2$. Due to the fact that $s \mapsto \|\Theta_\cA (s)\|$ is measurable and bounded on $\Sigma$, this form is densely defined, symmetric, semibounded from below and closed. The essential spectrum of the corresponding selfadjoint operator $A$ in $L^2 (\R^n)$ equals $[0, \infty)$, and its negative spectrum is finite, see~\cite[Theorem~6.9]{B62}. Moreover, we have $\dom h_\cA \subset \dom \fa$ and
\begin{align*}
  \fa [f] \leq h_\cA [f], \quad f \in \dom h_\cA,
\end{align*}
which implies $N (- \Delta_\cA) \leq N (A) < \infty$. This proves assertion~(ii) of the theorem.
\end{proof}

Let us investigate the existence of negative eigenvalues for $- \Delta_\cA$ if $\Sigma$ is compact. First we consider the case that $\beta (s) \neq 0$ for all $s \in \Sigma$. The following theorem is an extension of~\cite[Theorem~4.4]{BEL14}, where $\delta'$-interactions were considered.

\begin{theorem}\label{thm:BoundStates1}
Assume that $\Omega_{\rm i}$ is bounded, that is, $\Sigma$ is compact. Let Assumption~\ref{as:coefficients} be satisfied and assume that $\Sigma = \Sigma_\beta$, i.e., $\beta (s) \neq 0$ for all $s \in \Sigma$. If
\begin{align}\label{eq:intCondition}
 \int_\Sigma \bigg(\frac{|1 + \frac{\gamma}{2}|^2}{\beta} + \frac{\alpha}{4} \bigg) d \sigma > 0
\end{align}
then $N (- \Delta_\cA) > 0$ holds.
\end{theorem}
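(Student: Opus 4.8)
The plan is to use the variational (min--max) characterization of the bottom of the spectrum together with a single explicit trial function. By Theorem~\ref{thm:compact}(i) we know $\sigma_{\rm ess}(-\Delta_\cA) = [0,\infty)$, so it suffices to produce one $f \in \dom h_\cA$ with $h_\cA[f] < 0$: then the bottom of the spectrum of $-\Delta_\cA$, which equals $\inf_{f \neq 0} h_\cA[f]/\|f\|^2_{L^2(\R^n)}$, is strictly negative, hence lies strictly below $\inf\sigma_{\rm ess}(-\Delta_\cA) = 0$ and is therefore an isolated eigenvalue of finite multiplicity. In particular $N(-\Delta_\cA) > 0$.

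First I would fix the trial function. Since $\Sigma = \Sigma_\beta$ we have $\Sigma_0 = \emptyset$, so the interface constraint defining $\dom h_\cA$ in~\eqref{eq:form} is vacuous and $\dom h_\cA = H^1(\Omega_{\rm i}) \oplus H^1(\Omega_{\rm e})$. Because $\Omega_{\rm i}$ is bounded, the constant function $1$ lies in $H^1(\Omega_{\rm i})$, and I take
\[
 f = \eins_{\Omega_{\rm i}} \oplus 0, \qquad \text{i.e.}\quad f_{\rm i} \equiv 1 \text{ on } \Omega_{\rm i}, \quad f_{\rm e} \equiv 0 \text{ on } \Omega_{\rm e}.
\]
This $f$ is nonzero, belongs to $\dom h_\cA$, and satisfies $\|f\|^2_{L^2(\R^n)} = \vol(\Omega_{\rm i}) \in (0,\infty)$.

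Next I would evaluate $h_\cA[f]$. Both gradient terms vanish since $f_{\rm i}$ and $f_{\rm e}$ are constant, so only the boundary integral survives. Using the traces $f_{\rm i}|_\Sigma = 1$ and $f_{\rm e}|_\Sigma = 0$, the integrand is precisely the $(1,1)$-entry of $\Theta_\cA$ from~\eqref{eq:Theta}, namely
\[
 \Big\langle \Theta_\cA \tbinom{1}{0}, \tbinom{1}{0} \Big\rangle = \frac{|1 + \frac{\gamma}{2}|^2}{\beta} + \frac{\alpha}{4}
\]
on $\Sigma = \Sigma_\beta$; equivalently this follows from the diagonalized form of the boundary term established in~\eqref{eq:boundaryTerm} (with $g = f$). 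Consequently
\[
 h_\cA[f] = - \int_\Sigma \Big( \frac{|1 + \frac{\gamma}{2}|^2}{\beta} + \frac{\alpha}{4} \Big) d\sigma,
\]
which is strictly negative precisely under hypothesis~\eqref{eq:intCondition}, and the argument concludes.

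There is no serious technical obstacle here; the content of the proof is the identification of the right trial function. The only points requiring care are, first, recognizing that the integrand in the condition~\eqref{eq:intCondition} is exactly $\langle \Theta_\cA \binom{1}{0}, \binom{1}{0}\rangle$, which is what singles out $\eins_{\Omega_{\rm i}} \oplus 0$ as the natural test function; second, verifying that this function genuinely lies in $\dom h_\cA$, for which both hypotheses are used, namely $\Sigma = \Sigma_\beta$ (removing the interface constraint on $\Sigma_0$) and boundedness of $\Omega_{\rm i}$ (ensuring $1 \in H^1(\Omega_{\rm i})$ and a finite, positive $L^2$-norm); and third, invoking the standard fact that a quadratic-form value strictly below the essential spectrum produces a discrete eigenvalue, so that $N(-\Delta_\cA)>0$.
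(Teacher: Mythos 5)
Your proposal is correct and coincides with the paper's own proof: the same trial function $f = \eins_{\Omega_{\rm i}} \oplus 0$, the same evaluation $h_\cA[f] = -\int_\Sigma \bigl(\tfrac{|1+\gamma/2|^2}{\beta} + \tfrac{\alpha}{4}\bigr)\, d\sigma < 0$, and the same conclusion via $\sigma_{\rm ess}(-\Delta_\cA) = [0,\infty)$ from Theorem~\ref{thm:compact}. Your additional remarks (that $\Sigma_0 = \emptyset$ makes the interface constraint vacuous, and that the integrand in~\eqref{eq:intCondition} is the $(1,1)$-entry of $\Theta_\cA$) only make explicit what the paper leaves implicit.
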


\begin{proof}
Let $h_\cA$ be the quadratic form corresponding to $- \Delta_\cA$. Since $\Sigma = \Sigma_\beta$ we have $\dom h_\cA = H^1 (\Omega_{\rm i}) \oplus H^1 (\Omega_{\rm e})$. In particular, the function $f = \eins_{\Omega_{\rm i}} \oplus 0$ belongs to $\dom h_\cA$. Moreover,
\begin{align*}
 h_\cA [f] & = - \int_\Sigma \bigg( \frac{|1 + \frac{\gamma}{2}|^2}{\beta} + \frac{\alpha}{4} \bigg) d \sigma < 0
\end{align*}
by~\eqref{eq:intCondition}. Thus $\min \sigma (- \Delta_\cA) < 0$. Since $\min \sigma_{\rm ess} (- \Delta_\cA) = 0$ by Theorem~\ref{thm:compact} it follows $N (- \Delta_\cA) > 0$.
\end{proof}

Theorem~\ref{thm:BoundStates1} leads to the following immediate corollary.

\begin{corollary}
Assume that $\Omega_{\rm i}$ is bounded, that is, $\Sigma$ is compact. Moreover, let Assumption~\ref{as:coefficients} be satisfied, let $\alpha (s) \geq 0$ and $\beta (s) > 0$ for all $s \in \Sigma$. If there exists a subset $\Sigma_1$ of $\Sigma$ of positive measure such that $\alpha (s) > 0$ or $\gamma (s) \neq -2$ for all $s \in \Sigma_1$ then $N (- \Delta_\cA) > 0$.
\end{corollary}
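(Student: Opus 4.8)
The plan is to obtain the conclusion as a direct consequence of Theorem~\ref{thm:BoundStates1}. Since $\beta(s) > 0$ for every $s \in \Sigma$, we have $\Sigma_0 = \Sigma \setminus \Sigma_\beta = \emptyset$, that is, $\Sigma = \Sigma_\beta$, so the standing hypothesis of Theorem~\ref{thm:BoundStates1} is satisfied. It therefore suffices to check that the integral condition~\eqref{eq:intCondition} holds under the present assumptions; note that by the boundedness of $\alpha$, $\gamma$ and of $1/\beta$ on $\Sigma = \Sigma_\beta$ the integrand is a bounded, measurable function, so the integral is well defined.

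First I would observe that, because $\alpha \geq 0$ and $\beta > 0$ on $\Sigma$, the integrand $\frac{|1 + \frac{\gamma}{2}|^2}{\beta} + \frac{\alpha}{4}$ is nonnegative at every point of $\Sigma$, each of its two summands being nonnegative. Hence the integral in~\eqref{eq:intCondition} is a priori $\geq 0$, and to secure the required strict inequality it is enough to exhibit a set of positive measure on which the integrand is strictly positive.

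For this I would argue pointwise on $\Sigma_1$. Fix $s \in \Sigma_1$. If $\alpha(s) > 0$, then the summand $\frac{\alpha(s)}{4}$ is positive, so the integrand is positive at $s$. If instead $\gamma(s) \neq -2$, then $1 + \frac{\gamma(s)}{2} \neq 0$, whence $|1 + \frac{\gamma(s)}{2}|^2 > 0$, and together with $\beta(s) > 0$ this makes the first summand positive; again the integrand is positive at $s$. Since one of the two alternatives holds for each $s \in \Sigma_1$, the integrand is strictly positive throughout $\Sigma_1$.

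Combining the global nonnegativity on $\Sigma$ with the strict positivity on $\Sigma_1$ and the hypothesis that $\Sigma_1$ has positive measure, I would conclude that the integral in~\eqref{eq:intCondition} is strictly positive, and Theorem~\ref{thm:BoundStates1} then yields $N(-\Delta_\cA) > 0$. There is essentially no genuine obstacle in this argument; the only point deserving attention is that the pointwise nonnegativity of the integrand on all of $\Sigma$ guarantees that the positive contribution over $\Sigma_1$ cannot be cancelled by negative contributions on the complement, which is exactly why the local positivity condition on $\Sigma_1$ is sufficient.
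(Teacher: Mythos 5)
Your proof is correct and follows exactly the route the paper intends: the paper presents this corollary as an immediate consequence of Theorem~\ref{thm:BoundStates1}, and your argument (noting $\Sigma = \Sigma_\beta$, pointwise nonnegativity of the integrand $\tfrac{|1+\gamma/2|^2}{\beta} + \tfrac{\alpha}{4}$, and strict positivity on the positive-measure set $\Sigma_1$ forcing~\eqref{eq:intCondition}) is precisely the verification that makes it immediate. No gaps.
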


Let us now turn to the case $\beta = 0$ identically on $\Sigma$. In space dimension $n = 2$ the following holds.

\begin{theorem}\label{thm:BoundStates2}
Let $n = 2$ and let $\Sigma$ be compact. Let Assumption~\ref{as:coefficients} be satisfied with $\beta = 0$ identically on $\Sigma$. Moreover, let $\alpha (s) \geq \alpha_{\min} > 0$ for all $s \in \Sigma$ and let $\gamma \in \C$ be constant. Then $N (- \Delta_\cA) > 0$. 
\end{theorem}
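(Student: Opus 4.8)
The plan is to produce a single test function $f\in\dom h_\cA$ with $h_\cA[f]<0$. By the variational principle this forces $\min\sigma(-\Delta_\cA)\le h_\cA[f]/\|f\|_{L^2(\R^2)}^2<0$, while $\min\sigma_{\rm ess}(-\Delta_\cA)=0$ by Theorem~\ref{thm:compact}(i); since $h_\cA$ is semibounded and closed by Lemma~\ref{lem:form}, the part of the spectrum below $0=\min\sigma_{\rm ess}(-\Delta_\cA)$ is discrete, whence $N(-\Delta_\cA)>0$, exactly as in the proof of Theorem~\ref{thm:BoundStates1}. First I would record what the form \eqref{eq:form} becomes when $\beta\equiv0$: then $\Sigma_\beta=\emptyset$ and $\Sigma_0=\Sigma$, all four entries of the matrix $\Theta_\cA$ in \eqref{eq:Theta} equal $\alpha/4$, and hence
\[
 h_\cA[f]=\int_{\Omega_{\rm i}}|\nabla f_{\rm i}|^2\,dx+\int_{\Omega_{\rm e}}|\nabla f_{\rm e}|^2\,dx-\int_\Sigma\frac{\alpha}{4}\,\big|f_{\rm i}|_\Sigma+f_{\rm e}|_\Sigma\big|^2\,d\sigma
\]
on the domain of all $f_{\rm i}\oplus f_{\rm e}\in H^1(\Omega_{\rm i})\oplus H^1(\Omega_{\rm e})$ with $(1+\tfrac{\overline\gamma}{2})f_{\rm i}|_\Sigma=(1-\tfrac{\overline\gamma}{2})f_{\rm e}|_\Sigma$ on all of $\Sigma$.

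The heart of the argument is to keep the interaction term bounded away from $0$ while driving the gradient term to $0$; this is where dimension two enters. Since $\Omega_{\rm i}$ is bounded I would fix $R_0>0$ with $\overline{\Omega_{\rm i}}\subset B_{R_0}$ and use the classical logarithmic cut-off $g_R$ on $\R^2$: equal to $1$ on $B_{R_0}$, equal to $\log(R/|x|)/\log(R/R_0)$ on the annulus $R_0\le|x|\le R$, and $0$ outside $B_R$. A direct computation gives $\int_{\R^2}|\nabla g_R|^2\,dx=2\pi/\log(R/R_0)\to0$ as $R\to\infty$, while $g_R\equiv1$ on a neighbourhood of $\Sigma$. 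Using that $\gamma$ is constant, I then set $f_{\rm i}\equiv 1-\tfrac{\overline\gamma}{2}$ on $\Omega_{\rm i}$ and $f_{\rm e}=(1+\tfrac{\overline\gamma}{2})\,g_R|_{\Omega_{\rm e}}$. Both are genuine $H^1$ functions, their traces are the constants $f_{\rm i}|_\Sigma=1-\tfrac{\overline\gamma}{2}$ and $f_{\rm e}|_\Sigma=1+\tfrac{\overline\gamma}{2}$, and the identity $(1+\tfrac{\overline\gamma}{2})(1-\tfrac{\overline\gamma}{2})=(1-\tfrac{\overline\gamma}{2})(1+\tfrac{\overline\gamma}{2})$ shows $f\in\dom h_\cA$. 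The crucial point is that $(f_{\rm i}+f_{\rm e})|_\Sigma=2\neq0$ irrespective of the value of $\gamma$.

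Finally I would evaluate the form. The interior gradient term vanishes, the exterior one equals $|1+\tfrac{\overline\gamma}{2}|^2\int_{\Omega_{\rm e}}|\nabla g_R|^2\,dx\le|1+\tfrac{\overline\gamma}{2}|^2\,2\pi/\log(R/R_0)$, and the interaction term is $-\int_\Sigma\frac{\alpha}{4}\cdot4\,d\sigma=-\int_\Sigma\alpha\,d\sigma\le-\alpha_{\min}\,\sigma(\Sigma)<0$, a negative number independent of $R$. Hence $h_\cA[f]\to-\int_\Sigma\alpha\,d\sigma<0$ as $R\to\infty$, so $h_\cA[f]<0$ once $R$ is large enough, which completes the argument. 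The main obstacle is precisely the existence of $g_R$ with vanishing Dirichlet energy but unit trace on $\Sigma$: in $\R^2$ the exterior domain supports a zero-energy resonance, so a nonzero boundary value can be maintained at arbitrarily small gradient cost, whereas for $n\ge3$ the exterior Dirichlet energy of functions with trace $1$ on $\Sigma$ is bounded below, which is exactly why the statement is confined to $n=2$. The two hypotheses enter here as well: $\gamma$ constant lets $f_{\rm e}$ be a constant multiple of $g_R$ matching its constant trace, and $\alpha\ge\alpha_{\min}>0$ keeps the interaction term bounded away from $0$.
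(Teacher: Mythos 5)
Your proof is correct, but it takes a genuinely different route from the paper's. The paper does not construct the logarithmic test function itself: it defines $\widetilde \alpha = \alpha_{\min}/\max \{ |1 + \gamma/2|^2, |1 - \gamma/2|^2 \}$, invokes the known fact~\cite{ET04,KL14} that the two-dimensional $\delta$-interaction $- \Delta_{\delta, \widetilde \alpha}$ on a compact curve has nonempty negative spectrum to obtain some $g \in H^1 (\R^2)$ with $h_{\widetilde \cA} [g] < 0$, and then transplants $g$ into $\dom h_\cA$ by the side-wise rescaling $f = (1 + \overline{\gamma}/2)^{-1} g_{\rm i} \oplus (1 - \overline{\gamma}/2)^{-1} g_{\rm e}$, estimating $h_\cA [f]$ by a multiple of $h_{\widetilde \cA} [g]$; the degenerate values $\gamma = \pm 2$ need a separate case ($f = 0 \oplus g_{\rm e}$ or $f = g_{\rm i} \oplus 0$). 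You instead inline the mechanism that underlies those cited results: the constant function $1 - \overline{\gamma}/2$ inside, the rescaled logarithmic cut-off $(1 + \overline{\gamma}/2) g_R$ outside, the trace condition checked by the commutativity identity $(1 + \overline{\gamma}/2)(1 - \overline{\gamma}/2) = (1 - \overline{\gamma}/2)(1 + \overline{\gamma}/2)$, and the key observation that the sum of traces equals $2$ independently of $\gamma$, so the interaction term is $- \int_\Sigma \alpha \, d\sigma \leq - \alpha_{\min} \sigma (\Sigma) < 0$ while the Dirichlet energy $2 \pi |1 + \overline{\gamma}/2|^2 / \log (R / R_0)$ can be made arbitrarily small. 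What your approach buys: it is self-contained (no appeal to~\cite{ET04,KL14}), it handles $\gamma = \pm 2$ uniformly with no case distinction (note that for $\gamma = -2$ your $f_{\rm e}$ vanishes but $f_{\rm i} = 2 \neq 0$, so $f \neq 0$ and the argument still goes through), and it isolates exactly where $n = 2$ enters, namely the vanishing logarithmic capacity of compact sets in the plane. What the paper's approach buys: brevity given the references, and a demonstration of the form-comparison-by-unitary-rescaling technique that recurs in Proposition~\ref{prop:nonneg} and in the operator inequalities of Section~4, which makes the exposition more cohesive. Both proofs use the constancy of $\gamma$ in the same essential way, to keep the rescaled or matched functions inside $H^1$ with the correct traces.
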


\begin{proof}
Consider
\begin{align*}
 \widetilde \alpha = \frac{\alpha_{\min}}{\max \{ |1 + \gamma/2|^2, |1 - \gamma/2|^2 \}} > 0.
\end{align*}
Moreover, consider the quadratic form $h_{\widetilde \cA}$ in $L^2 (\R^2)$, defined as in~\eqref{eq:form} with
\begin{align*}
 \widetilde \cA = \begin{pmatrix} \widetilde \alpha & 0  \\ 0 & 0 \end{pmatrix},
\end{align*}
which corresponds to the Laplacian $- \Delta_{\widetilde \cA} = - \Delta_{\delta, \widetilde \alpha}$ in $L^2(\R^2)$ with a $\delta$-potential of strength $\widetilde \alpha$ on $\Sigma$. The negative (discrete) spectrum of $- \Delta_{\widetilde \cA}$ is nonempty~\cite{ET04,KL14}. In particular, there exists $g \in \dom h_{\widetilde \cA} = H^1 (\R^2)$ with $h_{\widetilde \cA} [g] < 0$. Let us assume for a moment $\gamma \neq \pm 2$. With $f = (1 + \overline{\gamma}/2)^{-1} g_{\rm i} \oplus (1 - \overline{\gamma}/2)^{-1} g_{\rm e}$ it follows $f \in \dom h_\cA$ and
\begin{align*}
 h_\cA [f] & = | 1 + \gamma/2|^{- 2} \int_{\Omega_{\rm i}} |\nabla g_{\rm i}|^2 d x + | 1 - \gamma/2|^{- 2} \int_{\Omega_{\rm e}} |\nabla g_{\rm e}|^2 d x \\
 & \quad - \int_\Sigma \frac{\alpha}{4} \big|(1 + \overline{\gamma}/2 )^{-1} g_{\rm i} |_\Sigma + (1 - \overline{\gamma}/2 )^{-1} g_{\rm e} |_\Sigma \big|^2 d \sigma \\
 & = | 1 + \gamma/2|^{- 2} \int_{\Omega_{\rm i}} |\nabla g_{\rm i}|^2 d x + | 1 - \gamma/2|^{- 2} \int_{\Omega_{\rm e}} |\nabla g_{\rm e}|^2 d x \\
 & \quad - \frac{\alpha_{\min}}{| 1 + \gamma/2|^2 | 1 - \gamma/2|^2} \int_\Sigma |g |_\Sigma|^2 d \sigma \\
 & \leq \frac{1}{\min \{ |1 + \gamma/2|^2, |1 - \gamma/2|^2 \}} \bigg\{ \int_{\Omega_{\rm i}} |\nabla g_{\rm i}|^2 d x + \int_{\Omega_{\rm e}} |\nabla g_{\rm e}|^2 d x \\
 & \qquad \qquad  - \frac{\alpha_{\min}}{\max \{ |1 + \gamma/2|^2, |1 - \gamma/2|^2 \}} \int_\Sigma \big| g|_\Sigma \big|^2 d \sigma \bigg\} \\
 & = \frac{1}{\min \{ |1 + \gamma/2|^2, |1 - \gamma/2|^2 \}} h_{\widetilde \cA} [g] < 0,
\end{align*}
where we have used $g_{\rm i} |_\Sigma = g_{\rm e} |_\Sigma$. Hence $N (- \Delta_\cA) > 0$. If $\gamma = 2$ or $\gamma = - 2$ we set $f = 0 \oplus g_{\rm e}$ or $f = g_{\rm i} \oplus 0$, respectively, and arrive at the same conclusion.
\end{proof}

In dimensions $n \geq 3$ the situation differs essentially. This is known for $\delta$-interactions~\cite{BEKS94,EF09} and the same holds true for the generalized interactions with $\beta = 0$ identically, as the following observation shows.

\begin{proposition}\label{prop:nonneg}
Let $\Sigma$ be compact and let Assumption~\ref{as:coefficients} be satisfied with $\beta = 0$ identically on $\Sigma$. Moreover, let  $0 \leq \alpha (s) \leq \alpha_{\max}$ for all $s \in \Sigma$ and let $\gamma \in \C$ be constant. Define 
\begin{align*}
 \widetilde \alpha = \frac{\alpha_{\max}}{\min \{ |1 + \gamma/2|^2, |1 - \gamma/2|^2 \}} \geq 0
\end{align*}
and let $- \Delta_{\delta, \widetilde \alpha}$ be the Schr\"odinger operator in $L^2 (\R^n)$ with $\delta$-potential of strength $\widetilde \alpha$ on $\Sigma$. If $N (- \Delta_{\delta, \widetilde \alpha}) = 0$ then $N (- \Delta_\cA) = 0$.
\end{proposition}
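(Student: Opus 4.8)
\emph{Setup and reduction.} The plan is to reduce everything to a comparison of quadratic forms. Since $\beta = 0$ identically we have $\Sigma_\beta = \emptyset$ and $\Sigma_0 = \Sigma$, so the terms $\frac1\beta\eins_{\Sigma_\beta}$ in~\eqref{eq:Theta} drop out and $\Theta_\cA = \tfrac{\alpha}{4}\begin{pmatrix}1&1\\1&1\end{pmatrix}$. Hence by~\eqref{eq:form} the form simplifies to
\begin{align*}
 h_\cA[f] = \int_{\Omega_{\rm i}}|\nabla f_{\rm i}|^2\,dx + \int_{\Omega_{\rm e}}|\nabla f_{\rm e}|^2\,dx - \int_\Sigma\tfrac{\alpha}{4}\big|f_{\rm i}|_\Sigma + f_{\rm e}|_\Sigma\big|^2\,d\sigma,
\end{align*}
with $\dom h_\cA = \{f\in H^1(\Omega_{\rm i})\oplus H^1(\Omega_{\rm e}) : (1+\tfrac{\overline\gamma}{2})f_{\rm i}|_\Sigma = (1-\tfrac{\overline\gamma}{2})f_{\rm e}|_\Sigma\text{ on }\Sigma\}$, while $h_{\widetilde\cA}[g] = \int_{\R^n}|\nabla g|^2\,dx - \widetilde\alpha\int_\Sigma|g|_\Sigma|^2\,d\sigma$ on $\dom h_{\widetilde\cA} = H^1(\R^n)$ is the form of $-\Delta_{\delta,\widetilde\alpha}$. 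By Theorem~\ref{thm:compact} both operators have essential spectrum $[0,\infty)$, so vanishing of $N$ is equivalent to nonnegativity of the operator, hence of its form. Thus $N(-\Delta_{\delta,\widetilde\alpha}) = 0$ means $h_{\widetilde\cA}[g]\geq 0$ for all $g\in H^1(\R^n)$, and it suffices to prove $h_\cA[f]\geq 0$ for all $f\in\dom h_\cA$.

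\emph{Main step.} The definition of $\widetilde\alpha$ already presupposes $\min\{|1+\gamma/2|^2,|1-\gamma/2|^2\}>0$, i.e.\ $\gamma\neq\pm2$, which I assume. First I would run the substitution from the proof of Theorem~\ref{thm:BoundStates2} in reverse: given $f\in\dom h_\cA$, set $g_{\rm i} = (1+\tfrac{\overline\gamma}{2})f_{\rm i}$ and $g_{\rm e} = (1-\tfrac{\overline\gamma}{2})f_{\rm e}$. The interface condition in $\dom h_\cA$ then reads $g_{\rm i}|_\Sigma = g_{\rm e}|_\Sigma$, so $g = g_{\rm i}\oplus g_{\rm e}\in H^1(\R^n)$. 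Substituting $f_{\rm i} = (1+\tfrac{\overline\gamma}{2})^{-1}g_{\rm i}$, $f_{\rm e} = (1-\tfrac{\overline\gamma}{2})^{-1}g_{\rm e}$ and using $g_{\rm i}|_\Sigma = g_{\rm e}|_\Sigma = g|_\Sigma$ together with $|1+\gamma/2|^2|1-\gamma/2|^2 = |1-\gamma^2/4|^2$ gives
\begin{align*}
 h_\cA[f] = \frac{1}{|1+\gamma/2|^2}\int_{\Omega_{\rm i}}|\nabla g_{\rm i}|^2\,dx + \frac{1}{|1-\gamma/2|^2}\int_{\Omega_{\rm e}}|\nabla g_{\rm e}|^2\,dx - \int_\Sigma\frac{\alpha}{|1+\gamma/2|^2|1-\gamma/2|^2}|g|_\Sigma|^2\,d\sigma.
\end{align*}

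\emph{Estimate and conclusion.} Writing $M = \max\{|1+\gamma/2|^2,|1-\gamma/2|^2\}$ and $m = \min\{|1+\gamma/2|^2,|1-\gamma/2|^2\}$, one has $Mm = |1+\gamma/2|^2|1-\gamma/2|^2$ and $\widetilde\alpha = \alpha_{\max}/m$. Bounding each gradient coefficient below by $1/M$ and, using $0\leq\alpha\leq\alpha_{\max}$, the interaction coefficient above by $\alpha_{\max}/(Mm) = \widetilde\alpha/M$, I obtain
\begin{align*}
 h_\cA[f]\geq\frac{1}{M}\Big(\int_{\Omega_{\rm i}}|\nabla g_{\rm i}|^2\,dx + \int_{\Omega_{\rm e}}|\nabla g_{\rm e}|^2\,dx - \widetilde\alpha\int_\Sigma|g|_\Sigma|^2\,d\sigma\Big) = \frac{1}{M}\,h_{\widetilde\cA}[g]\geq 0,
\end{align*}
the last inequality since $N(-\Delta_{\delta,\widetilde\alpha}) = 0$. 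As $f\in\dom h_\cA$ was arbitrary, $h_\cA\geq 0$, hence $-\Delta_\cA\geq 0$ and $N(-\Delta_\cA) = 0$.

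\emph{Main obstacle.} The delicate point is to orient the substitution of Theorem~\ref{thm:BoundStates2} so as to produce a \emph{lower} rather than an upper bound for $h_\cA$; this is what forces the use of $\alpha\leq\alpha_{\max}$ (and $\alpha\geq0$) and the precise matching of constants through the identity $|1+\gamma/2|^2|1-\gamma/2|^2 = Mm$, which is exactly what lets the single factor $1/M$ serve simultaneously the Dirichlet and the interaction terms and thereby identifies the sharp comparison strength $\widetilde\alpha = \alpha_{\max}/m$.
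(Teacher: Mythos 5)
Your proof is correct and takes essentially the same route as the paper's: the reverse substitution $g = (1+\tfrac{\overline\gamma}{2})f_{\rm i}\oplus(1-\tfrac{\overline\gamma}{2})f_{\rm e}$ turning the interface condition into $g\in H^1(\R^n)$, the identity $|1+\gamma/2|^2\,|1-\gamma/2|^2=|1-\gamma^2/4|^2$, and the bound $h_\cA[f]\geq \frac{1}{\max\{|1+\gamma/2|^2,|1-\gamma/2|^2\}}\,h_{\widetilde\cA}[g]\geq 0$. The only (immaterial) difference is that the paper also remarks that $\gamma=\pm2$ can be handled as in Theorem~\ref{thm:BoundStates2}, whereas you observe, legitimately, that the definition of $\widetilde\alpha$ already presupposes $\gamma\neq\pm2$.
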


\begin{proof}
Note first that $- \Delta_{\delta, \widetilde \alpha} = - \Delta_{\widetilde \cA}$ with
\begin{align*}
 \widetilde \cA = \begin{pmatrix} \widetilde \alpha & 0  \\ 0 & 0 \end{pmatrix}.
\end{align*}
Assume that $N (- \Delta_{\delta, \widetilde \alpha}) = 0$. Let first $\gamma \neq \pm 2$. For $f \in \dom h_\cA$ we define $g = (1 + \overline{\gamma}/2) f_{\rm i} \oplus (1 - \overline{\gamma}/2) f_{\rm e}$. Then, clearly, $g \in H^1 (\R^n)$, $f = (1 + \overline{\gamma}/2)^{-1} g_{\rm i} \oplus (1 - \overline{\gamma}/2)^{-1} g_{\rm e}$, and we calculate similar to the proof of Theorem~\ref{thm:BoundStates2}
\begin{align*}
 h_\cA [f] & \geq \frac{1}{\max \{ |1 + \gamma/2|^2, |1 - \gamma/2|^2 \}} \bigg\{ \int_{\Omega_{\rm i}} |\nabla g_{\rm i}|^2 d x + \int_{\Omega_{\rm e}} |\nabla g_{\rm e}|^2 d x \\
 & \qquad \qquad  - \frac{\alpha_{\max}}{\min \{ |1 + \gamma/2|^2, |1 - \gamma/2|^2 \}} \int_\Sigma \big| g|_\Sigma \big|^2 d \sigma \bigg\} \\
 & = \frac{1}{\max \{ |1 + \gamma/2|^2, |1 - \gamma/2|^2 \}} h_{\widetilde \cA} [g] \geq 0.
\end{align*}
Since $f \in \dom h_\cA$ was chosen arbitrary it follows $N (- \Delta_\cA) = 0$. The cases $\gamma = \pm 2$ can be treated similarly, see the proof of Theorem~\ref{thm:BoundStates2}.
\end{proof}

The following example illustrates the possible absence of negative eigenvalues.

\begin{example}
Let $n = 3$ and let $\Sigma$ be a sphere of radius $R > 0$. Furthermore, let $\beta = 0$ identically, let $0 \leq \alpha (s) \leq \alpha_{\max}$ for all $s \in \Sigma$ and let $\gamma \in \C$ be constant. Define $\widetilde \alpha$ as in Proposition~\ref{prop:nonneg}. It was calculated in~\cite{AGS87} (see also~\cite[Example~4.1]{BEKS94}) that $N (- \Delta_{\delta, \widetilde \alpha}) = 0$ if and only if $\widetilde \alpha R \leq 1$. Thus it follows with the help of Proposition~\ref{prop:nonneg} that
\begin{align*}
 N (- \Delta_\cA) = 0 \quad \text{if} \quad \alpha_{\max} R \leq \min \{ |1 + \gamma/2|^2, |1 - \gamma/2|^2 \},
\end{align*}
that is, for sufficiently small $\alpha_{\max}$ and sufficiently large $|\gamma|$ (related to each other) no negative eigenvalues exist and $\sigma (- \Delta_\cA) = \sigma_{\rm ess} (- \Delta_\cA) = [0, \infty)$.
\end{example}

Let us mention that if $\beta$ is nontrivial on a part of $\Sigma$ and vanishes identically on another part then even in the two-dimensional case the operator $- \Delta_\cA$ may fail to exhibit bound states; cf.~\cite{JL15} for a $\delta'$-interaction on a nonclosed curve.

\subsection{The case of a noncompact, asymptotically planar hypersurface in $\R^3$}

In this paragraph we provide a result on the essential spectrum of $- \Delta_\cA$ in the case of a noncompact hypersurface $\Sigma$ in $\R^3$ which is asymptotically planar. For fixed numbers $\alpha, \beta \geq 0$ and $\gamma \in \C$ we define the matrix $\cA$ as in~\eqref{eq:A} and set
\begin{align}\label{eq:infPlane}
 m_\cA
 = \begin{cases} 
    - \frac{4 \alpha^2}{(4 + |\gamma|^2)^2}, & \text{if}~\beta = 0, \\
    - \frac{\left(4 + \det \cA + \sqrt{- 16 \alpha \beta + (4 + \det \cA)^2} \right)^2 }{16 \beta^2}, & \text{if}~\beta \neq 0.
   \end{cases}
\end{align}
As a preparation we formulate the following Proposition, where we use the notation
\begin{align*}
 \Theta_\cA = \begin{pmatrix}
           \frac{|1 + \frac{\gamma}{2}|^2}{\beta} + \frac{\alpha}{4} & \frac{(\frac{\overline{\gamma}}{2} - 1) (1 + \frac{\gamma}{2})}{\beta}  + \frac{\alpha}{4}\\
	   \frac{(\frac{\gamma}{2} - 1) (1 + \frac{\overline{\gamma}}{2})}{\beta} + \frac{\alpha}{4}  & \frac{|1 - \frac{\gamma}{2}|^2}{\beta}  + \frac{\alpha}{4}
          \end{pmatrix}
\end{align*}
if $\beta \neq 0$ and $\Theta_\cA = \binom{\alpha/4 \hspace{2mm} \alpha/4}{\alpha/4 \hspace{2mm} \alpha/4}$ if $\beta = 0$. 

\begin{proposition}\label{prop:schlimmesLemma}
Let $\alpha, \beta$ be nonnegative real numbers, let $\gamma \in \C$ and let $d > 0$. Then the quadratic form
\begin{align*}
 \eta_{\cA, d} [\psi] & = \int_{- d}^0 |\psi'|^2 d x + \int_0^d |\psi'|^2 d x - \left\langle \Theta_\cA \binom{\psi (0_-)}{\psi (0_+)}, \binom{\psi (0_-)}{\psi (0_+)} \right\rangle, \\
 \dom \eta_{\cA, d} & = \Big\{ \psi \in H^1 (-d, 0) \oplus H^1 (0, d) : ( 1 + \tfrac{\overline{\gamma}}{2} ) \psi (0_-) = ( 1 - \tfrac{\overline{\gamma}}{2}) \psi (0_+)~\text{if}~\beta = 0 \Big\},
\end{align*}
in $L^2 (-d, d)$ is semibounded from below by a constant $m_{\cA, d} \leq m_\cA$. Moreover, the following assertions hold.
\begin{enumerate}
 \item If $n_\tau$ is a family of real numbers with $n_\tau \to 1$ as $\tau \to \infty$ and
  \begin{align*}
   \widetilde \cA (\tau) = \begin{pmatrix} \frac{\alpha}{n_\tau} & \gamma \\ - \overline{\gamma} & n_\tau \beta \end{pmatrix}
  \end{align*}
  then $m_{\widetilde \cA (\tau), d} \to m_{\cA, d}$ as $\tau \to \infty$;
 \item $\lim_{d \to \infty} m_{\cA, d} = m_\cA$.
\end{enumerate}
\end{proposition}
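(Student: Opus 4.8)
The key realization is that the constant $m_\cA$ in \eqref{eq:infPlane} is exactly the bottom of the spectrum of the corresponding one-dimensional Schr\"odinger operator on the \emph{whole} line $\R$ carrying the point interaction at the origin. The plan is to make this identification explicit by solving the associated ordinary differential eigenvalue problem, first on $\R$ and then on the truncated interval $(-d,d)$, and to read off all three claims from the resulting transcendental relation. Semiboundedness of $\eta_{\cA,d}$ is immediate and follows exactly as in Lemma~\ref{lem:form}: the point evaluations $\psi(0_\pm)$ are controlled by the one-dimensional trace inequality $|\psi(0_\pm)|^2\le\eps\|\psi\|_{H^1}^2+C_\eps\|\psi\|_{L^2}^2$, so the interface term is form-bounded relative to the kinetic term with arbitrarily small relative bound.

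To locate the bottom of the spectrum I would pass to the eigenvalue problem. On $\R$ a negative eigenvalue $-\kappa^2$ ($\kappa>0$) has eigenfunction $Ae^{\kappa x}$ on $(-\infty,0)$ and $Be^{-\kappa x}$ on $(0,\infty)$; inserting $\psi'(0_-)=\kappa A$, $\psi'(0_+)=-\kappa B$ into the interface conditions \eqref{eq:conditions} and demanding a nontrivial $(A,B)$ yields, after a short computation,
\begin{align*}
 2\beta\kappa^2-(4+\det\cA)\,\kappa+2\alpha=0 .
\end{align*}
Using $\det\cA=\alpha\beta+|\gamma|^2$ and $\alpha,\beta\ge0$ one checks that the discriminant $(4+\det\cA)^2-16\alpha\beta$ is nonnegative and that both roots are nonnegative; its largest root is precisely the $\kappa$ for which $m_\cA=-\kappa^2$ reproduces \eqref{eq:infPlane}. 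On the truncated interval the Neumann conditions $\psi'(\pm d)=0$ force $\psi(x)=\cosh(\mu(x+d))$ on $(-d,0)$ and $\psi(x)=\cosh(\mu(x-d))$ on $(0,d)$ for an eigenvalue $-\mu^2$, whence $\psi'(0_-)=\mu\tanh(\mu d)\,\psi(0_-)$ and $\psi'(0_+)=-\mu\tanh(\mu d)\,\psi(0_+)$. The interface conditions therefore produce the \emph{same} quadratic with $\kappa$ replaced by $\widetilde\kappa:=\mu\tanh(\mu d)$, so the lowest negative eigenvalue satisfies $\mu\tanh(\mu d)=\kappa$; since $\tanh(\mu d)<1$ this forces $\mu>\kappa$, hence $m_{\cA,d}=-\mu(d)^2\le-\kappa^2=m_\cA$. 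When $\alpha=\beta=0$ one has $\kappa=0$, $\Theta_\cA=0$ and $m_{\cA,d}=m_\cA=0$ trivially, while in the case $\beta=0$ the domain constraint replaces the second interface condition and the same computation goes through with the rank-one matrix $\Theta_\cA$ of the case $\beta=0$.

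Assertions (i) and (ii) then follow from the monotonicity and continuity of the map $\mu\mapsto\mu\tanh(\mu d)$, which for fixed $d>0$ is a strictly increasing bijection of $(0,\infty)$ onto itself. For (i) I would observe that $\det\widetilde\cA(\tau)=\det\cA$ is unchanged while the remaining coefficients $n_\tau\beta$ and $\alpha/n_\tau$ converge to $\beta$ and $\alpha$ as $n_\tau\to1$; hence the largest root $\widetilde\kappa_\tau$ of the $\tau$-dependent quadratic tends to $\kappa$, and solving $\mu\tanh(\mu d)=\widetilde\kappa_\tau$ gives $\mu_\tau(d)\to\mu(d)$, i.e.\ $m_{\widetilde\cA(\tau),d}\to m_{\cA,d}$. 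For (ii) I would let $d\to\infty$ in $\mu(d)\tanh(\mu(d)d)=\kappa$ and use $\tanh(\mu d)\to1$ to conclude $\mu(d)\to\kappa$, so that $m_{\cA,d}=-\mu(d)^2\to-\kappa^2=m_\cA$.

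The main obstacle will be the bookkeeping of the degenerate cases together with the justification that the bottom of the spectrum is genuinely captured by the largest-root branch. Concretely, I must verify that every negative eigenvalue of $\eta_{\cA,d}$ arises from the $\cosh$ ansatz above, so that the infimum is attained at an eigenvalue and no spectrum lies below $-\mu(d)^2$; carefully treat the constrained form domain when $\beta=0$ so that the natural boundary condition produced by the variational problem reproduces the first condition in \eqref{eq:conditions}; and confirm that both roots of the quadratic are nonnegative so that ``largest root'' is meaningful. These are precisely the points at which the sign hypotheses $\alpha,\beta\ge0$ are used.
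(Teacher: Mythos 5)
Your proposal is correct, and its core is genuinely different from (and considerably slicker than) the paper's own argument. Both proofs share the same setup: semiboundedness via the one-dimensional trace inequality, and the passage to the selfadjoint operator $H_{\cA, d}$ with the interface conditions~\eqref{eq:GPI} at $0$ and Neumann conditions at $\pm d$, whose negative eigenvalues $-\mu^2$ necessarily come from the $\cosh$ ansatz (this step is airtight: an eigenfunction solves $\psi'' = \mu^2 \psi$ on each subinterval with $\psi'(\mp d) = 0$, hence is proportional to $\cosh (\mu (x \pm d))$). The difference lies in how the secular equation is then analyzed. The paper expands the $4 \times 4$ determinant into the transcendental identity $g (k) = h (k) j (k)$ with explicit exponentials and runs a case-by-case monotonicity and sign-change analysis ($\beta = 0$; $\beta > 0$ with $\gamma = 0$ or $\gamma \neq 0$; the degenerate case $\gamma = 0$, $\alpha \beta = 4$), including an accumulation-point argument to ensure that the constructed solutions really are the spectral minima, and separate limiting arguments for items (i) and (ii). You instead observe that the interface conditions only see the Robin-type data $\psi' (0_\pm) = \pm \tilde \kappa \, \psi (0_\pm)$, with $\tilde \kappa = \mu \tanh (\mu d)$ on the interval and $\tilde \kappa = \kappa$ on the line, so the secular equation collapses to the fixed quadratic $2 \beta \tilde\kappa^2 - (4 + \det \cA) \tilde\kappa + 2 \alpha = 0$ composed with the strictly increasing bijection $\mu \mapsto \mu \tanh (\mu d)$ of $(0, \infty)$ onto itself; one checks directly that this quadratic is equivalent to the paper's equation $g = h j$ and that its largest root $\kappa_+$ reproduces~\eqref{eq:infPlane} in both the $\beta = 0$ and $\beta > 0$ cases. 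This parametrization buys all three assertions at once: $m_{\cA, d} = - \mu_{\max}^2 \leq - \kappa_+^2 = m_\cA$ because $\tanh < 1$; item (i) because $\widetilde \cA (\tau)$ leaves $\det \cA$ and the product $\tfrac{\alpha}{n_\tau} \cdot n_\tau \beta = \alpha \beta$ (hence the discriminant) invariant, so for $\beta > 0$ the largest root is simply $\kappa_+ / n_\tau \to \kappa_+$ and continuity of the inverse bijection finishes the argument; item (ii) because $\tanh (\mu (d) d) \geq \tanh (\kappa_+ d) \to 1$ forces $\mu (d) \to \kappa_+$. Moreover, the degenerate case $\gamma = 0$, $\alpha \beta = 4$, which the paper must treat separately, is in your scheme just a double root of the quadratic and needs no special care; the only case you must (and do) single out is $\alpha = \beta = 0$, where the bottom eigenvalue is $0$ for every $d$. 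What the paper's longer route additionally yields, namely the exact count of negative eigenvalues (one or two) and the sign-change structure of $g - h j$, is not needed for the proposition itself.
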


The proof of Proposition~\ref{prop:schlimmesLemma} is longish and we shift it to Appendix~\ref{appendix}.

We come to the estimation of the essential spectrum. We make the following assumption. Here we denote by $B (0, r)$ the ball of radius $r$ in $\R^2$ centered at zero.

\begin{assumption}\label{as:asymptPlanar}
Assume that $\Sigma$ is a noncompact hypersurface described by a global Lipschitz parametrization $\phi : \R^2 \to \R^3$ with $\phi (\R^2) = \Sigma$ and there exists $\tau_0 > 0$ such that $\phi |_{\R^2 \setminus B (0, \tau_0)}$ is $C^2$-smooth and the Jacobian $(D \phi) (x')$ has rank two for all $x' \in \R^2 \setminus B (0, \tau_0)$. Assume furthermore that the following conditions are satisfied.
\begin{enumerate}
 \item[(a)] For each $d > 0$ there exists $\tau > \tau_0$ such that the mapping
\begin{align*}
 (\R^2 \setminus B (0, \tau)) \times (- d, d) \ni (x', x_3) \mapsto \phi (x') + x_3 \nu (x')
\end{align*}
is injective, where $\nu (x') = (\partial_1 \phi (x') \times \partial_2 \phi (x'))/{|\partial_1 \phi (x') \times \partial_2 \phi (x')|}$, $x' \in \R^2$, is a unit normal vector field of $\Sigma$.
 \item[(b)] The mean curvature $M$ and the Gau\ss{} curvature $K$ satisfy
\begin{align*}
 M (x') \to 0 \quad \text{and} \quad K (x') \to 0 \quad \text{as}~|x'| \to \infty.
\end{align*}
\end{enumerate}
\end{assumption}

Under these conditions the essential spectrum of $- \Delta_\cA$ can be estimated as follows. A variant of this theorem for $\delta$-interactions is contained in~\cite{EK03}; cf.\ also~\cite{DEK01} for a similar strategy of proof.

\begin{theorem}\label{thm:essSpecAsymPlanar}
Let $\Sigma$ satisfy Assumption~\ref{as:asymptPlanar}, and let $\alpha, \beta, \gamma$ be functions which satisfy Assumption~\ref{as:coefficients} and are constant outside a compact subset $\cK$ of $\Sigma$. Assume, additionally, that $\alpha (s) \geq 0$ and $\beta (s) \geq 0$ for all $s \in \Sigma$. Then
\begin{align}\label{eq:sigmaEss}
 \sigma_{\rm ess} (- \Delta_\cA) \subset [m_\cA, \infty)
\end{align}
holds, where $m_\cA$ is defined as in~\eqref{eq:infPlane} using the constant values of $\alpha, \beta$ and $\gamma$ in $\Sigma \setminus \cK$.
\end{theorem}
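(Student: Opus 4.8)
The plan is to establish the lower bound $\sigma_{\rm ess}(-\Delta_\cA) \subset [m_\cA, \infty)$ via a Neumann bracketing argument combined with the one-dimensional form analysis provided by Proposition~\ref{prop:schlimmesLemma}. First I would fix an arbitrary $\lambda < m_\cA$ and show that $\lambda$ does not belong to the essential spectrum, which by Weyl's criterion (or the min-max characterization of $\inf \sigma_{\rm ess}$) amounts to showing that $-\Delta_\cA - \lambda$ has finite-dimensional negative spectral subspace when restricted away from a compact region. The key geometric idea, following \cite{EK03, DEK01}, is to decompose $\R^3$ into a compact piece and an exterior layer surrounding the asymptotically planar part of $\Sigma$, and to impose additional Neumann conditions on the boundaries of these pieces; since adding Neumann boundaries only lowers the form (decouples the domain), the resulting operator $-\Delta_\cA^N$ satisfies $-\Delta_\cA^N \leq -\Delta_\cA$ in the form sense, so its essential spectrum lies below that of $-\Delta_\cA$ and it suffices to bound $\inf\sigma_{\rm ess}(-\Delta_\cA^N)$ from below by $m_\cA$.

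The heart of the argument is the analysis of the exterior layer. Using Assumption~\ref{as:asymptPlanar}(a), for any fixed width $d>0$ and $\tau$ large enough the map $(x', x_3) \mapsto \phi(x') + x_3\nu(x')$ furnishes a tubular-neighborhood diffeomorphism of a collar $\cL_{\tau,d}$ of thickness $2d$ around the far part of $\Sigma$. I would introduce curvilinear coordinates adapted to this collar and rewrite the Dirichlet form of $-\Delta$ in these coordinates; the metric coefficients and the volume Jacobian differ from the flat product metric by factors controlled by $M(x')$ and $K(x')$, which tend to zero by Assumption~\ref{as:asymptPlanar}(b). Transverse to $\Sigma$ the form reduces, after freezing the surface variable, to the one-dimensional form $\eta_{\cA,d}$ of Proposition~\ref{prop:schlimmesLemma}, whose bottom is $m_{\cA,d}$. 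Because the coefficients $\alpha, \beta, \gamma$ are constant outside the compact set $\cK$, the interface part of the form on the far collar is exactly the one appearing in $\eta_{\cA,d}$. Integrating the pointwise lower bound $\eta_{\cA,d}[\psi(x',\cdot)] \geq m_{\cA,d}\|\psi(x',\cdot)\|^2$ over the surface variable $x'$ and discarding the nonnegative tangential gradient terms yields a lower bound of the form $m_{\cA,d}$ up to curvature-dependent error terms that vanish as $|x'|\to\infty$.

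The main obstacle I anticipate is making the curvature perturbation estimate uniform and converting it into an honest spectral bound. Concretely, the change to curvilinear coordinates produces a tangential term whose sign I must control and error coefficients proportional to $M$ and $K$; I would need the auxiliary family $\widetilde\cA(\tau)$ with $n_\tau \to 1$ from part~(i) of Proposition~\ref{prop:schlimmesLemma} precisely to absorb the Jacobian factor $(1 - 2M x_3 + K x_3^2)$ into a rescaled one-dimensional problem, so that the transverse form on each slice is bounded below by $m_{\widetilde\cA(\tau),d}$. Choosing $\tau$ large (so the curvatures are small, hence $n_\tau$ close to $1$) and then $d$ large, and invoking both parts of the Proposition, gives $m_{\widetilde\cA(\tau),d} \to m_{\cA,d} \to m_\cA$. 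Thus for every $\eps>0$ I can arrange the exterior part of $-\Delta_\cA^N$ to be bounded below by $m_\cA - \eps$, while the remaining compact piece contributes only discrete spectrum and thus does not affect the essential spectrum.

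Assembling these pieces, the Neumann-decoupled operator is a direct sum of an operator on the bounded region (with compact resolvent, hence empty essential spectrum) and an operator on the collar bounded below by $m_\cA - \eps$. Therefore $\inf\sigma_{\rm ess}(-\Delta_\cA^N) \geq m_\cA - \eps$, and by the bracketing inequality $\inf\sigma_{\rm ess}(-\Delta_\cA) \geq \inf\sigma_{\rm ess}(-\Delta_\cA^N) \geq m_\cA - \eps$. Since $\eps>0$ is arbitrary this gives $\sigma_{\rm ess}(-\Delta_\cA) \subset [m_\cA, \infty)$, as claimed.
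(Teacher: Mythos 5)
Your outline reproduces the paper's own strategy almost step for step: Neumann bracketing, curvilinear coordinates $\Phi(x',x_3)=\phi(x')+x_3\nu(x')$ in a collar around the far part of $\Sigma$, absorption of the Jacobian factor $1-2x_3M+x_3^2K$ into the rescaled matrix $\widetilde\cA(\tau)$, and the double limit $\tau\to\infty$, then $d\to\infty$, using parts (i) and (ii) of Proposition~\ref{prop:schlimmesLemma}. There is, however, one concrete flaw in your assembly step. You claim to decompose $\R^3$ into ``a compact piece and an exterior layer'' and conclude that the decoupled operator is a direct sum of an operator with compact resolvent and an operator bounded below by $m_\cA-\eps$. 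Such a two-piece decomposition is impossible: the collar $\Phi(\{|x'|>\tau\}\times(-d,d))$ only covers a neighborhood of the far part of $\Sigma$, and the rest of $\R^3$ necessarily contains the unbounded bulk region $\R^3\setminus\overline{\Omega_d}$ lying at distance greater than $d$ from $\Sigma$ on both sides of the layer. That region is not compact, the Neumann Laplacian on it does not have compact resolvent, and its essential spectrum is all of $[0,\infty)$. The paper therefore uses a \emph{three}-piece splitting, $-\Delta_{d,\rm N}^{\rm ext}\oplus-\Delta_{\cA,d,\rm N}^{\tau,\rm int}\oplus-\Delta_{\cA,d,\rm N}^{\tau,\rm ext}\leq-\Delta_\cA$, and the resulting bound on $\min\sigma_{\rm ess}(-\Delta_\cA)$ in \eqref{eq:essEst} carries an extra $\min\{\,\cdot\,,0\}$ coming from this bulk piece.

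The gap is easy to close, but it does require an ingredient your write-up never invokes: one must observe that $m_\cA\leq 0$, which is immediate from \eqref{eq:infPlane} since both expressions there are manifestly nonpositive. With that, the bulk region's contribution $[0,\infty)$ to the essential spectrum of the decoupled operator lies above $m_\cA$, and your conclusion $\inf\sigma_{\rm ess}(-\Delta_\cA^N)\geq m_\cA-\eps$ survives; without it, the statement ``the decoupled operator is a direct sum of a compact-resolvent piece and a collar piece'' is simply false, and the chain of inequalities at the end does not go through as written. Once the third region and the sign observation $0\geq m_\cA$ are inserted, your argument coincides with the paper's proof (which additionally notes that the general, only piecewise-$C^2$ parametrization is reduced to the globally $C^2$ case by a compact perturbation argument).
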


\begin{proof}
We carry out the proof for the case that the parametrization $\phi$ is globally $C^2$. The general case, where $\phi$ is allowed to be less regular inside a compact set, follows afterwards with a compact perturbation argument. We assume without loss of generality that the parametrization $\phi$ is chosen such that $\nu$ is the outer unit normal of $\Omega_{\rm i}$. Let $d > 0$ be sufficiently large such that the $d$-neighborhood $\Omega_ d = \Phi (\R^2 \times (-d, d))$ of $\Sigma$ described by the parametrization
\begin{align*}
 \Phi : \R^2 \times (-d, d) \to \R^3, \quad (x', x_3) \mapsto \phi (x') + x_3 \nu (x'), \quad x' \in \R^2, x_3 \in (-d, d),
\end{align*}
is a Lipschitz domain. Choose $\tau > \tau_0$ such that the restriction of $\Phi$ to
\begin{align*}
 D_{\tau, d} = \left\{ (x', x_3) \in \R^2 \times (-d, d) : |x'| > \tau \right\}
\end{align*}
is injective and such that the coefficients $\alpha, \beta$ and $\gamma$ are constants outside $\phi (B (0, \tau))$; note that these properties are then true also for each $\widetilde \tau > \tau$. We write $\Omega_{\tau, d}^{\rm ext} = \Phi (D_{\tau, d})$ and $\Omega_{\tau, d}^{\rm int} = \Omega_d \setminus \Omega_{\tau, d}^{\rm ext}$, and the latter is a bounded Lipschitz domain. Moreover, we set
\begin{align*}
 \Omega_d^{\rm ext} = \R^3 \setminus \overline{\Omega_d},
\end{align*}
which is an unbounded Lipschitz domain with a noncompact boundary. We denote by $- \Delta_{d, \rm N}^{\rm ext}$ the selfadjoint Neumann Laplacian in $L^2 (\Omega_d^{\rm ext})$. Furthermore, for $j = \rm int, ext$ we denote by $- \Delta_{\cA, d, \rm N}^{\tau, j}$ the selfadjoint Laplacian in $L^2 (\Omega_{\tau, d}^j)$ corresponding to the quadratic form
\begin{align*}
 h_{\cA, d, \rm N}^{\tau, j} [f] & = \int_{\Omega_{\tau, d, \rm i}^j} |\nabla f_{\rm i}|^2 d x + \int_{\Omega_{\tau, d, \rm e}^j} |\nabla f_{\rm e}|^2 d x - \int_{\Sigma_\tau^{j}} \left\langle \Theta_\cA \binom{f_{\rm i}}{f_{\rm e}}, \binom{f_{\rm i}}{f_{\rm e}} \right\rangle d \sigma, \\
 \dom h_{\cA, d, \rm N}^{\tau, j} & = \Big\{ f_{\rm i} \oplus f_{\rm e} \in H^1 (\Omega_{\tau, d, \rm i}^j) \oplus H^1 (\Omega_{\tau, d, \rm e}^j) : \\
& \qquad \qquad \qquad \big(1 + \tfrac{\overline{\gamma}}{2} \big) f_{\rm i} = \big(1 - \tfrac{\overline{\gamma}}{2} \big) f_{\rm e}~\text{on}~\Sigma_\tau^{j} \cap \Sigma_0 \Big\},
\end{align*}
where $\Omega_{\tau, d, k}^j = \Omega_k \cap \Omega_{\tau, d}^{j}$, $k = \rm i, e$, and $\Sigma_\tau^j := \Sigma \cap \Omega_{\tau, d}^j$. Then
\begin{align*}
 -\Delta_{d, \rm N}^{\rm ext} \oplus - \Delta_{\cA, d, \rm N}^{\tau, \rm int} \oplus - \Delta_{\cA, d, \rm N}^{\tau, \rm ext} \leq - \Delta_\cA
\end{align*}
and thus
\begin{align}\label{eq:essEst}
 \min \left\{ \min \sigma_{\rm ess} (- \Delta_{\cA, d, \rm N}^{\tau, \rm ext}), 0 \right\} \leq \min \sigma_{\rm ess} (- \Delta_\cA),
\end{align}
where we took into account that $- \Delta_{\cA, d, \rm N}^{\tau, \rm int}$ has a compact resolvent and that $- \Delta_{d, \rm N}^{\rm ext}$ has essential spectrum $[0, \infty)$. 

Let us estimate the quadratic form $h_{\cA, d, \rm N}^{\tau, \rm ext}$. Note that by assumption either $\Sigma_\tau^{\rm ext} \cap \Sigma_0 = \Sigma_\tau^{\rm ext}$ or $\Sigma_\tau^{\rm ext} \cap \Sigma_0 = \emptyset$. Note that the Jacobian $D \Phi$ of $\Phi$ satisfies
\begin{align*}
 (D \Phi)^\top D \Phi & = \begin{pmatrix}
                      |\partial_1 (\phi + x_3 \nu)|^2 & \langle \partial_1 (\phi + x_3 \nu), \partial_2 (\phi + x_3 \nu) \rangle & 0 \\
		      \langle \partial_2 (\phi + x_3 \nu), \partial_1 (\phi + x_3 \nu ) \rangle & |\partial_2 (\phi + x_3 \nu)|^2 & 0 \\
		      0 & 0 & 1
                     \end{pmatrix} \\
 & = \begin{pmatrix}( I - x_3 H )^\top \begin{pmatrix} \partial_1 \phi & \partial_2 \phi \end{pmatrix}^\top \begin{pmatrix} \partial_1 \phi & \partial_2 \phi \end{pmatrix} ( I - x_3 H ) & 0 \\ 0 & 1 \end{pmatrix},
\end{align*}
where the derivatives $\partial_1$ and $\partial_2$ refer to the variables $x' = (x_1, x_2)$ on which $\phi$ and $\nu$ depend and $H = (H_{i j})$ is the Weingarten map. Consequently, 
\begin{align*}
 (\det D \Phi)^2 & = \det [(D \Phi)^\top D \Phi] \\
 & = \big(1 - x_3 (H_{11} + H_{22}) + x_3^2 \det H \big)^2 \det \left(\begin{pmatrix} \partial_1 \phi & \partial_2 \phi \end{pmatrix}^\top \begin{pmatrix} \partial_1 \phi & \partial_2 \phi \end{pmatrix} \right) \\
 & = \big( 1 - 2 x_3 M + x_3^2 K \big)^2 |\partial_1 \phi \times \partial_2 \phi|^2.
\end{align*}
With these observations for $f \in \dom h_{\cA, d, \rm N}^{\tau, \rm ext}$ we find
\begin{align*}
 \int_{\Omega_{\tau, d, \rm i}^{\rm ext}} |\nabla f_{\rm i}|^2 d x & = \int_{D_{\tau, d} \cap \R^3_-} \big\langle \nabla (f_{\rm i} \circ \Phi), (D \Phi)^{- 1} (D \Phi)^{- \top} \nabla (f_{\rm i} \circ \Phi) \big\rangle |\det D \Phi| d x \\
 & \geq n_{\tau, d} \int_{D_{\tau, d} \cap \R^3_-} |\partial_3 (f_{\rm i} \circ \Phi) |^2 |\partial_1 \phi \times \partial_2 \phi| d x
\end{align*}
with $n_{\tau, d} = \inf_{x = (x', x_3) \in D_{\tau, d}} |1 - 2 x_3 M (x') + x_3^2 K (x')|$. Carrying out the analogous estimate for $\Omega_{\tau, d, \rm e}$ and $f_{\rm e}$ we arrive at
\begin{align*}
 h_{\cA, d, \rm N}^{\tau, \rm ext} [f] & \geq n_{\tau, d} \int_{\R^2 \setminus B (0, \tau)} \bigg( \int_{- d}^0 |\partial_3 (f_{\rm i} \circ \Phi) |^2 d x_3 + \int_0^d |\partial_3 (f_{\rm i} \circ \Phi) |^2 d x_3 \\
 & \qquad \qquad - \frac{1}{n_{\tau, d}} \left\langle \Theta_\cA \binom{f_{\rm i} \circ \phi}{f_{\rm e} \circ \phi}, \binom{f_{\rm i} \circ \phi}{f_{\rm e} \circ \phi} \right\rangle  \bigg) |\partial_1 \phi \times \partial_2 \phi| d x' \\
 & = n_{\tau, d} \int_{\R^2 \setminus B (0, \tau)} \eta_{\widetilde \cA (\tau, d), d} [(f \circ \Phi) (x', \cdot)] |\partial_1 \phi \times \partial_2 \phi| d x'
\end{align*}
for each $f \in \dom h_{\cA, d, \rm N}^{\tau, \rm ext}$, where 
\begin{align*}
 \widetilde \cA (\tau, d) = \begin{pmatrix}
                           \frac{\alpha}{n_{\tau, d}} & \gamma \\ - \overline{\gamma} & n_{\tau, d} \beta
                          \end{pmatrix}.
\end{align*}
Thus with $N_{\tau, d} = \sup_{x = (x', x_3) \in D_{\tau, d}} |1 - 2 x_3 M (x') + x_3^2 K (x')|$ we obtain from Proposition~\ref{prop:schlimmesLemma}
\begin{align*}
 h_{\cA, d, \rm N}^{\tau, \rm ext} [f] & \geq \frac{n_{\tau, d}}{N_{\tau, d}} m_{\widetilde A (\tau, d), d} \int_{D_{\tau, d}} |(f \circ \Phi)|^2 |1 - 2 x_3 M (x') + x_3^2 K (x')| |\partial_1 \phi \times \partial_2 \phi| d x \\
 & = \frac{n_{\tau, d}}{N_{\tau, d}} m_{\widetilde A (\tau, d), d} \int_{\Omega_{\tau, d}^{\rm ext}} |f|^2 d x,
\end{align*}
for each $f \in \dom h_{\cA, d, \rm N}^{\tau, \rm ext}$. It follows with the help of~\eqref{eq:essEst}
\begin{align}\label{eq:dtauest}
 \min \sigma_{\rm ess} (- \Delta_\cA) \geq \min \left\{ \min \sigma (- \Delta_{\cA, d, \rm N}^{\tau, \rm ext}), 0 \right\} \geq \frac{n_{\tau, d}}{N_{\tau, d}} m_{\widetilde A (\tau, d), d}.
\end{align}
Sending $\tau \to \infty$ yields $n_{\tau, d} \to 1$ and $N (\tau, d) \to 1$, and with the help of Proposition~\ref{prop:schlimmesLemma}~(i) we conclude from~\eqref{eq:dtauest}
\begin{align*}
 \min \sigma_{\rm ess} (- \Delta_\cA) \geq m_{\cA, d}.
\end{align*}
From this and Proposition~\ref{prop:schlimmesLemma}~(ii), sending $d \to \infty$ we obtain
\begin{align*}
 \min \sigma_{\rm ess} (- \Delta_\cA) \geq m_\cA
\end{align*}
and the assertion of the theorem follows.
\end{proof}

\begin{remark}\label{rem:plane}
In many cases we have equality in~\eqref{eq:sigmaEss}. For instance, if $\Sigma$ is a plane in $\R^3$ and $\alpha \geq 0, \beta \geq 0$ and $\gamma \in \C$ are constants then it can be shown by using the tensor product structure that $\sigma_{\rm ess} (- \Delta_\cA) = \sigma (- \Delta_\cA) = [m_\cA, \infty)$. Moreover, a local, compact deformation of the plane or a change of the coefficients $\alpha, \beta$ and $\gamma$ inside a compact set does not change the essential spectrum.
\end{remark}

Let us point out that Theorem~\ref{thm:essSpecAsymPlanar} can be combined with the operator inequalities obtained in Section~4 below in order to find the exact minimum of the essential spectrum; cf.\ Example~\ref{ex:cone}.

We remark that for noncompact $\Sigma$ the existence of bound states as well as the finiteness of the number of bound states depend strongly on geometric properties of $\Sigma$. See, e.g.,~\cite[Section~3]{E08} for a review in the case of $\delta$-interactions. Also an infinite number of bound states may occure. 

\section{Operator inequalities and spectral consequences}

In this section we prove inequalities between Laplacians with generalized interactions on a given hypersurface $\Sigma$. We are here back to the general situation, i.e., $\Sigma$ is the boundary of a Lipschitz domain in any dimension $n \geq 2$ and can be compact or unbounded. Our main focus is on instances of generalized interactions which allow an operator inequality against a $\delta$-interaction of an appropriate strength. This allows us, in particular, to derive spectral properties of generalized interactions from the corresponding well-studied properties of $\delta$--interactions. This strategy was applied to $\delta'$-interactions recently in~\cite{BEL14}. 

Recall that for two selfadjoint operators $H_1$ and $H_2$ in a Hilbert space which are semibounded below we write $H_1 \leq H_2$ if and only if
\begin{align}\label{eq:ordering}
 (H_1 - \lambda)^{-1} - (H_2 - \lambda)^{-1} \geq 0
\end{align}
holds for all $\lambda \leq \min \{\min \sigma (H_1), \min \sigma (H_2) \}$. Moreover, denoting the quadratic forms corresponding to $H_1$ and $H_2$ by $h_1$ and $h_2$, respectively,~\eqref{eq:ordering} is equivalent to $h_1 \leq h_2$, i.e., $\dom h_2 \subset \dom h_1$ and $h_1 [f] \leq h_2 [f]$ for all $f \in \dom h_2$. For further details see~\cite[Chapter~VI]{K95}. Let us emphasize some immediate spectral consequences of such an ordering, which follow from the min-max principle; cf.~\cite[Theorem~XIII.2]{RS78}. For $i = 1, 2$ we denote by $\lambda_k (H_i)$ the values defined by the min-max principle, i.e., the eigenvalues of $H_i$ below the bottom of the essential spectrum, counted with multiplicities, including the minimum of the essential spectrum if the number of eigenvalues below $\min \sigma_{\rm ess} (H_i)$ is finite. Moreover, as above we denote by $N (H_i)$ the number of eigenvalues of $H_i$ below $\min \sigma_{\rm ess} (H_i)$, counted with multiplicities.

\begin{proposition}\label{prop:ordering}
Let $H_1$ and $H_2$ be selfadjoint operators in a Hilbert space which are semibounded below and satisfy $H_1 \leq H_2$. Then the following assertions hold.
\begin{enumerate}
 \item $\min \sigma_{\rm ess} (H_1) \leq \min \sigma_{\rm ess} (H_2)$.
 \item $\lambda_k (H_1) \leq \lambda_k (H_2)$ for all $k \in \N$.
 \item If $\min \sigma_{\rm ess} (H_1) = \min \sigma_{\rm ess} (H_2)$ then
\begin{align*}
 N (H_2) \leq N (H_1).
\end{align*}
In particular, if $N (H_2) > 0$ then $N (H_1) > 0$.
\end{enumerate}
\end{proposition}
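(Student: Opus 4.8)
The plan is to reduce all three assertions to the variational (min--max) characterization of the numbers $\lambda_k(H_i)$, exploiting the form inequality $h_1 \leq h_2$ which, as recalled just above the statement, is equivalent to the operator ordering $H_1 \leq H_2$. Concretely, I would use that $\dom h_2 \subset \dom h_1$ together with $h_1[f] \leq h_2[f]$ for all $f \in \dom h_2$, and the representation
\[
 \lambda_k(H_i) = \inf_{\substack{V \subset \dom h_i \\ \dim V = k}} \ \sup_{\substack{f \in V \\ f \neq 0}} \frac{h_i[f]}{\|f\|^2}
\]
from \cite[Theorem~XIII.2]{RS78}, recalling that the sequence $(\lambda_k(H_i))_k$ is nondecreasing and that $\lim_{k \to \infty} \lambda_k(H_i) = \min \sigma_{\rm ess}(H_i)$ under the convention fixed in the text.

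First I would establish assertion~(ii), which is the core estimate. Fix $k \in \N$. Every $k$-dimensional subspace $V \subset \dom h_2$ is also a subspace of $\dom h_1$ and hence admissible in the variational problem for $\lambda_k(H_1)$; since the infimum defining $\lambda_k(H_1)$ runs over the larger family of $k$-dimensional subspaces of $\dom h_1$, it does not exceed the same infimum restricted to subspaces of $\dom h_2$. On any such $V$ the pointwise bound $h_1[f] \leq h_2[f]$ yields $\sup_{f \in V \setminus \{0\}} h_1[f]/\|f\|^2 \leq \sup_{f \in V \setminus \{0\}} h_2[f]/\|f\|^2$. Combining these two observations gives $\lambda_k(H_1) \leq \lambda_k(H_2)$, which is precisely~(ii).

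Assertion~(i) then follows by letting $k \to \infty$ in~(ii) and using $\lim_{k \to \infty} \lambda_k(H_i) = \min \sigma_{\rm ess}(H_i)$. For~(iii), suppose $\min \sigma_{\rm ess}(H_1) = \min \sigma_{\rm ess}(H_2) =: E$. If $N(H_2) = m$ is finite, then $\lambda_1(H_2), \dots, \lambda_m(H_2)$ are eigenvalues strictly below $E$, so by~(ii) we get $\lambda_k(H_1) \leq \lambda_k(H_2) < E = \min \sigma_{\rm ess}(H_1)$ for $k = 1, \dots, m$; hence these are genuine discrete eigenvalues of $H_1$ and $N(H_1) \geq m$. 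If $N(H_2) = \infty$, the same reasoning applied to every $k$ shows $\lambda_k(H_1) < E$ for all $k$, whence $N(H_1) = \infty$. In either case $N(H_2) \leq N(H_1)$, and the \emph{in particular} clause is the special case $N(H_2) > 0$.

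The argument is essentially routine once the conventions are fixed; the only point requiring care, and the step I expect to be the main (minor) obstacle, is the bookkeeping around the definition of $\lambda_k(H_i)$ when $H_i$ has only finitely many eigenvalues below its essential spectrum. There the convention that the $\lambda_k$ saturate at $\min \sigma_{\rm ess}(H_i)$ must be applied consistently, both to secure the limit relation $\lim_{k} \lambda_k(H_i) = \min \sigma_{\rm ess}(H_i)$ used in~(i) and to count correctly, in~(iii), only those $\lambda_k(H_1)$ lying \emph{strictly} below $E$ as discrete eigenvalues.
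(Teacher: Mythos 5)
Your proof is correct, and it is precisely the argument the paper has in mind: the proposition is stated there as an immediate consequence of the min--max principle (citing Reed--Simon, Theorem~XIII.2), and your fleshed-out version — deducing (ii) from the domain inclusion $\dom h_2 \subset \dom h_1$ plus $h_1 \leq h_2$, then obtaining (i) by letting $k \to \infty$ and (iii) by counting the $\lambda_k$ strictly below the common bottom of the essential spectrum — is exactly that argument, with the saturation convention handled correctly.
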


In this section we write again $- \Delta_{\delta, \widetilde \alpha}$ for the selfadjoint Laplacian subject to a $\delta$-interaction of strength $\widetilde \alpha$ in $L^2 (\R^n)$, i.e., $- \Delta_{\delta, \widetilde \alpha} = - \Delta_{\widetilde \cA}$ with
\begin{align*}
 \widetilde \cA = \begin{pmatrix} \widetilde \alpha & 0  \\ 0 & 0 \end{pmatrix},
\end{align*}
where $\widetilde \alpha : \Sigma \to \R$ is a measurable, bounded function.

We start with a class of interactions being a combination of $\delta$ and $\delta'$.

\begin{theorem}\label{thm:comparison1}
Let $\alpha, \beta : \Sigma \to \R$ be functions such that $\alpha$ and $1/\beta$ are bounded and measurable with $\alpha (s) \geq 0$ and $\beta (s) > 0$ for all $s \in \Sigma$. Let $- \Delta_\cA$ be the selfadjoint operator in Definition~\ref{def:HA} corresponding to
\begin{align}\label{eq:Acomp1}
 \cA = \begin{pmatrix} \alpha & 0 \\ 0 & \beta \end{pmatrix}.
\end{align}
Moreover, let $\widetilde \alpha : \Sigma \to \R$ be measurable and bounded. Then the following assertions hold.
\begin{enumerate}
 \item If $\widetilde \alpha (s) \leq \alpha (s)$ for all $s \in \Sigma$ then
 \begin{align*}
  - \Delta_\cA \leq - \Delta_{\delta, \widetilde \alpha}.
 \end{align*}
 \item If $\widetilde \alpha (s) \leq \frac{4}{\beta (s)}$ for all $s \in \Sigma$ then there exists a unitary operator $U$ in~$L^2 (\R^n)$ such that
 \begin{align*}
  U^* (- \Delta_\cA) U \leq - \Delta_{\delta, \widetilde \alpha}.
 \end{align*}
\end{enumerate}
\end{theorem}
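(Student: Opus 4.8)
The plan is to prove both inequalities at the level of quadratic forms, using the equivalence recalled just before Proposition~\ref{prop:ordering}: for semibounded selfadjoint operators, $H_1 \le H_2$ means exactly $\dom h_2 \subset \dom h_1$ together with $h_1[f] \le h_2[f]$ for all $f \in \dom h_2$. As a preparation common to both parts I would first diagonalize the boundary term of $h_\cA$. Since here $\gamma = 0$ and $\Sigma = \Sigma_\beta$, a direct computation of $\langle \Theta_\cA \binom{f_{\rm i}}{f_{\rm e}}, \binom{f_{\rm i}}{f_{\rm e}} \rangle$ in the variables $f_{\rm i} \pm f_{\rm e}$ gives
\begin{align*}
 \Big\langle \Theta_\cA \binom{f_{\rm i}}{f_{\rm e}}, \binom{f_{\rm i}}{f_{\rm e}} \Big\rangle = \frac{\alpha}{4} |f_{\rm i} + f_{\rm e}|^2 + \frac{1}{\beta} |f_{\rm i} - f_{\rm e}|^2,
\end{align*}
so that $h_\cA$ splits into a $\delta$-type contribution controlled by $\alpha$ and a $\delta'$-type contribution controlled by $1/\beta$, with $\dom h_\cA = H^1 (\Omega_{\rm i}) \oplus H^1 (\Omega_{\rm e})$. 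On the other hand $h_{\widetilde \cA} [f] = \int_{\R^n} |\nabla f|^2 \, d x - \int_\Sigma \widetilde \alpha |f|_\Sigma|^2 \, d \sigma$ with $\dom h_{\widetilde \cA} = H^1 (\R^n)$.

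For assertion~(i) I would simply restrict $h_\cA$ to $\dom h_{\widetilde \cA} = H^1 (\R^n) \subset \dom h_\cA$. On this subspace the traces coincide, $f_{\rm i}|_\Sigma = f_{\rm e}|_\Sigma$, so the $\delta'$-term $\frac{1}{\beta} |f_{\rm i} - f_{\rm e}|^2$ drops out while $f_{\rm i} + f_{\rm e} = 2 f|_\Sigma$, whence $h_\cA [f] = \int_{\R^n} |\nabla f|^2 \, d x - \int_\Sigma \alpha |f|_\Sigma|^2 \, d \sigma$. The pointwise inequality $\widetilde \alpha \le \alpha$ then gives $h_\cA [f] \le h_{\widetilde \cA} [f]$ for all $f \in H^1 (\R^n)$, i.e.\ $- \Delta_\cA \le - \Delta_{\delta, \widetilde \alpha}$.

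Assertion~(ii) is the substantive one, and choosing the right unitary is where I expect the only real difficulty to lie. I would take $U$ to be the multiplication operator equal to $+1$ on $\Omega_{\rm i}$ and $-1$ on $\Omega_{\rm e}$, that is $U = \eins_{\Omega_{\rm i}} \oplus (- \eins_{\Omega_{\rm e}})$; this is unitary (indeed self-adjoint with $U^2 = I$), maps $H^1 (\Omega_{\rm i}) \oplus H^1 (\Omega_{\rm e})$ onto itself, and leaves the gradient terms unchanged while replacing $(f_{\rm i}, f_{\rm e})$ by $(f_{\rm i}, - f_{\rm e})$ in the boundary integral. The form of $U^* (- \Delta_\cA) U$ is $f \mapsto h_\cA [U f]$ on $\dom h_\cA$, and the sign flip interchanges $|f_{\rm i} + f_{\rm e}|^2$ and $|f_{\rm i} - f_{\rm e}|^2$, so that the coefficient $1/\beta$ now stands in front of $|f_{\rm i} + f_{\rm e}|^2$ and $\alpha/4$ in front of $|f_{\rm i} - f_{\rm e}|^2$.

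Restricting once more to $f \in \dom h_{\widetilde \cA} = H^1 (\R^n) \subset \dom h_\cA$, continuity of the trace annihilates the $\alpha/4$ term (as $f_{\rm i} - f_{\rm e} = 0$ on $\Sigma$) and turns the surviving term into $\frac{1}{\beta} |2 f|_\Sigma|^2 = \frac{4}{\beta} |f|_\Sigma|^2$, giving $h_\cA [U f] = \int_{\R^n} |\nabla f|^2 \, d x - \int_\Sigma \frac{4}{\beta} |f|_\Sigma|^2 \, d \sigma$. Comparing with $h_{\widetilde \cA}$ and using $\widetilde \alpha \le 4/\beta$ pointwise yields $h_\cA [U f] \le h_{\widetilde \cA} [f]$ for all $f \in H^1 (\R^n)$, which is exactly $U^* (- \Delta_\cA) U \le - \Delta_{\delta, \widetilde \alpha}$. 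In the spirit of the comparison for $\delta'$-interactions in~\cite{BEL14}, the one genuinely nontrivial point is recognizing that the sign-flip unitary converts the $\delta'$-coupling $1/\beta$ into an effective $\delta$-coupling $4/\beta$; verifying that $U$ preserves the form domain and that the pointwise estimates integrate up is then routine.
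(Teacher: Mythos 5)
Your proposal is correct and follows essentially the same route as the paper: part (i) by restricting the forms to $H^1(\R^n)$, where the trace condition kills the $\delta'$-term, and part (ii) by conjugating with the sign-flip unitary, which swaps the roles of $|f_{\rm i}+f_{\rm e}|^2$ and $|f_{\rm i}-f_{\rm e}|^2$ and turns $1/\beta$ into an effective $\delta$-coupling $4/\beta$. The only (immaterial) difference is that the paper flips the sign on the interior component, $Uf = -f_{\rm i}\oplus f_{\rm e}$, while you flip it on the exterior; the two unitaries differ by an overall factor $-1$ and induce the same conjugated operator.
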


\begin{proof}
If $\widetilde \alpha (s) \leq \alpha (s)$ holds for all $s \in \Sigma$ then we have 
\begin{align*}
 \dom h_{\widetilde A} = H^1 (\R^n) \subset H^1 (\Omega_{\rm i}) \oplus H^1 (\Omega_{\rm e}) = \dom h_\cA
\end{align*}
and for each $f \in \dom h_{\widetilde A}$
\begin{align*}
 h_\cA [f] & = \int_{\Omega_{\rm i}} |\nabla f_{\rm i}|^2 d x + \int_{\Omega_{\rm e}} |\nabla f_{\rm e}|^2 d x - \int_\Sigma \bigg\langle \binom{(\frac{1}{\beta} + \frac{\alpha}{4}) f_{\rm i} + (- \frac{1}{\beta} + \frac{\alpha}{4}) f_{\rm e}}{(- \frac{1}{\beta} + \frac{\alpha}{4}) f_{\rm i} + (\frac{1}{\beta} + \frac{\alpha}{4}) f_{\rm e}}, \binom{f_{\rm i}}{f_{\rm e}} \bigg\rangle d \sigma \\
 & = \int_{\Omega_{\rm i}} |\nabla f_{\rm i}|^2 d x + \int_{\Omega_{\rm e}} |\nabla f_{\rm e}|^2 d x - \int_\Sigma \alpha |f|^2 d \sigma \\
 & \leq \int_{\Omega_{\rm i}} |\nabla f_{\rm i}|^2 d x + \int_{\Omega_{\rm e}} |\nabla f_{\rm e}|^2 d x - \int_\Sigma \widetilde \alpha |f|^2 d \sigma \\
 & = h_{\widetilde \cA} [f]
\end{align*}
holds since $f_{\rm i} |_\Sigma = f_{\rm e} |_\Sigma$. Thus $h_{\cA} \leq h_{\widetilde \cA}$, which implies the assertion~(i).

If $\widetilde \alpha (s) \leq \frac{4}{\beta (s)}$ for all $s \in \Sigma$ we define an operator
\begin{align*}
 U : L^2 (\R^n) \to L^2 (\R^n), \quad f = f_{\rm i} \oplus f_{\rm e} \mapsto - f_{\rm i} \oplus f_{\rm e}.
\end{align*}
Clearly, $U$ is unitary. Thus the quadratic form 
\begin{align*}
 \widetilde h_\cA [f, g] = h_\cA [U f, U g], \quad \dom \widetilde h_\cA = U^* \dom h_\cA = H^1 (\Omega_{\rm i}) \oplus H^1 (\Omega_{\rm e}),
\end{align*}
is densely defined, semibounded below and closed and the corresponding selfadjoint operator in $L^2 (\R^n)$ equals $U^* (- \Delta_\cA) U$. Moreover, $\dom h_{\widetilde \cA} \subset \dom \widetilde h_\cA$ and for all $f \in \dom h_{\widetilde \cA}$ we have
\begin{align*}
 \widetilde h_\cA [f] & = \int_{\Omega_{\rm i}} |\nabla f_{\rm i}|^2 d x + \int_{\Omega_{\rm e}} |\nabla f_{\rm e}|^2 d x \\
 & \quad - \int_\Sigma \bigg\langle \binom{- (\frac{1}{\beta} + \frac{\alpha}{4}) f_{\rm i} + (- \frac{1}{\beta} + \frac{\alpha}{4}) f_{\rm e}}{- (- \frac{1}{\beta} + \frac{\alpha}{4}) f_{\rm i} + (\frac{1}{\beta} + \frac{\alpha}{4}) f_{\rm e}}, \binom{- f_{\rm i}}{f_{\rm e}} \bigg\rangle d \sigma \\
 & = \int_{\Omega_{\rm i}} |\nabla f_{\rm i}|^2 d x + \int_{\Omega_{\rm e}} |\nabla f_{\rm e}|^2 d x - \int_\Sigma \tfrac{4}{\beta} |f|^2 d \sigma \\
 & \leq h_{\widetilde \cA} [f]
\end{align*}
since $f_{\rm i} |_\Sigma = f_{\rm e} |_\Sigma$. From this assertion~(ii) follows.
\end{proof}

For $\alpha = 0$ the assertion of Theorem~\ref{thm:comparison1}~(ii) is in accordance with~\cite[Theorem~3.6]{BEL14}.

Let us next consider a class of generalized interactions which is closer to $\delta'$ and involves a nontrivial coefficient $\gamma$.

\begin{theorem}\label{thm:comparison2}
Let $\beta : \Sigma \to \R$ be a function with $\beta (s) > 0$ for all $s \in \Sigma$ such that $1/\beta$ is measurable and bounded, and let $\gamma \in i \R$ be a constant. Let $- \Delta_\cA$ be the selfadjoint operator in Definition~\ref{def:HA} corresponding to
\begin{align}\label{eq:Acomp2}
 \cA = \begin{pmatrix} 0 & \gamma \\ - \overline \gamma & \beta \end{pmatrix}.
\end{align}
Moreover, let $\widetilde \alpha : \Sigma \to \R$ be measurable and bounded. Assume that
\begin{align}\label{eq:estimate2}
 \widetilde \alpha (s) \leq \frac{4 + |\gamma|^2}{\beta (s)}
\end{align}
holds for all $s \in \Sigma$. Then there exists a unitary operator $U$ in $L^2 (\R^n)$ such that
\begin{align*}
 U^* (- \Delta_\cA) U \leq - \Delta_{\delta, \widetilde \alpha}.
\end{align*}
\end{theorem}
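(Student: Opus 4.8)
The plan is to mimic the proof of Theorem~\ref{thm:comparison1}~(ii): transform $-\Delta_\cA$ by a suitable constant unitary and compare the resulting quadratic form with $h_{\widetilde\cA}$ on its domain $H^1(\R^n)$. First I would write out $\Theta_\cA$ from~\eqref{eq:Theta} for the matrix~\eqref{eq:Acomp2}. Since $\beta(s)>0$ we have $\Sigma_\beta=\Sigma$ and $\Sigma_0=\emptyset$, and since $\gamma\in i\R$ we have $\overline\gamma=-\gamma$ and $|1+\tfrac\gamma2|^2=|1-\tfrac\gamma2|^2=1+\tfrac{|\gamma|^2}4$. Hence $\Theta_\cA$ takes the Hermitian form $\begin{pmatrix} a & b \\ \overline b & a \end{pmatrix}$ with $a=\tfrac1\beta\bigl(1+\tfrac{|\gamma|^2}4\bigr)$ real and $b=-\tfrac1\beta(1+\tfrac\gamma2)^2$. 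Crucially $|b|=\tfrac1\beta|1+\tfrac\gamma2|^2=a$, and the argument of $b$ depends only on the constant $\gamma$ (the positive factor $1/\beta(s)$ does not affect it), hence is constant along $\Sigma$. This equal-diagonal, constant-phase structure is exactly what makes a single constant unitary work.

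Second, I would choose constants $c_{\rm i},c_{\rm e}\in\C$ with $|c_{\rm i}|=|c_{\rm e}|=1$ and $c_{\rm e}\overline{c_{\rm i}}=\overline b/|b|$ (possible because $\overline b/|b|$ is a unimodular constant), and define the unitary $U\colon L^2(\R^n)\to L^2(\R^n)$, $f=f_{\rm i}\oplus f_{\rm e}\mapsto c_{\rm i}f_{\rm i}\oplus c_{\rm e}f_{\rm e}$. Exactly as in Theorem~\ref{thm:comparison1}, the form $\widetilde h_\cA[f]=h_\cA[Uf]$ with $\dom\widetilde h_\cA=U^*\dom h_\cA=H^1(\Omega_{\rm i})\oplus H^1(\Omega_{\rm e})$ is the form of $U^*(-\Delta_\cA)U$; here $U$ leaves the gradient terms unchanged because $|c_{\rm i}|=|c_{\rm e}|=1$, and it leaves $\dom h_\cA$ invariant because $\Sigma_0=\emptyset$ imposes no interface constraint.

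Third, I would evaluate $\widetilde h_\cA$ on $\dom h_{\widetilde\cA}=H^1(\R^n)\subset\dom\widetilde h_\cA$. For $f\in H^1(\R^n)$ we have $f_{\rm i}|_\Sigma=f_{\rm e}|_\Sigma$, so on $\Sigma$ the boundary integrand is $\bigl\langle\Theta_\cA\tbinom{c_{\rm i}f}{c_{\rm e}f},\tbinom{c_{\rm i}f}{c_{\rm e}f}\bigr\rangle=\bigl(2a+2\Real(b\,c_{\rm e}\overline{c_{\rm i}})\bigr)|f|_\Sigma|^2$. By the choice of phases $b\,c_{\rm e}\overline{c_{\rm i}}=|b|=a$, so this coefficient equals $4a=(4+|\gamma|^2)/\beta$. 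Thus $\widetilde h_\cA[f]=\int_{\Omega_{\rm i}}|\nabla f_{\rm i}|^2\,dx+\int_{\Omega_{\rm e}}|\nabla f_{\rm e}|^2\,dx-\int_\Sigma\tfrac{4+|\gamma|^2}{\beta}|f|_\Sigma|^2\,d\sigma$, and the assumption~\eqref{eq:estimate2} yields $\widetilde h_\cA[f]\le h_{\widetilde\cA}[f]$ for all $f\in H^1(\R^n)$. Together with $\dom h_{\widetilde\cA}\subset\dom\widetilde h_\cA$ this gives the form ordering $\widetilde h_\cA\le h_{\widetilde\cA}$, which by the equivalence recalled after~\eqref{eq:ordering} is precisely $U^*(-\Delta_\cA)U\le-\Delta_{\delta,\widetilde\alpha}$.

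The arithmetic above is routine; the one genuinely load-bearing point, and the main obstacle to any weakening, is the hypothesis $\gamma\in i\R$. It is exactly this that forces the two diagonal entries of $\Theta_\cA$ to coincide and makes the argument of the off-diagonal entry $b$ constant along $\Sigma$, so that a single constant-phase multiplication can rotate $b\,c_{\rm e}\overline{c_{\rm i}}$ onto $|b|=a$ and recover the full coefficient $(4+|\gamma|^2)/\beta$; the naive sign flip $f_{\rm i}\oplus f_{\rm e}\mapsto-f_{\rm i}\oplus f_{\rm e}$ of Theorem~\ref{thm:comparison1} would only produce $4/\beta$. For general complex $\gamma$ the two diagonal entries differ, this alignment cannot be achieved by a constant multiplier, and the argument breaks down, which is why the statement is restricted to purely imaginary $\gamma$.
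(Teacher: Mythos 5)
Your proof is correct and follows essentially the same route as the paper: a constant unimodular multiplier on each side of $\Sigma$, chosen so that the off-diagonal term of $\Theta_\cA$ is rotated onto its modulus, followed by the form comparison on $H^1(\R^n)$. The paper's explicit choice $U f = -\tfrac{1+\gamma/2}{r} f_{\rm i} \oplus \tfrac{1-\gamma/2}{r} f_{\rm e}$ with $r = |1+\gamma/2|$ satisfies precisely your phase condition $c_{\rm e}\overline{c_{\rm i}} = \overline{b}/|b|$, so your formulation is just a slightly more abstract description of the same unitary.
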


\begin{proof}
Note first that $r := |1 + \frac{\gamma}{2}| = |1 - \frac{\gamma}{2}| = (1 + \frac{|\gamma|^2}{4})^{1/2}$ is nonzero and define
\begin{align*}
 U : L^2 (\R^n) \to L^2 (\R^n), \quad f = f_{\rm i} \oplus f_{\rm e} \mapsto - \frac{1 + \frac{\gamma}{2}}{r} f_{\rm i} \oplus \frac{1 - \frac{\gamma}{2}}{r} f_{\rm e}.
\end{align*}
Then $U$ is a unitary operator and the quadratic form
\begin{align*}
 \widetilde h_\cA [f, g] = h_\cA [U f, U g], \quad \dom \widetilde h_\cA = U^* \dom h_\cA = H^1 (\Omega_{\rm i}) \oplus H^1 (\Omega_{\rm e}),
\end{align*}
is densely defined, semibounded below and closed and corresponds to the selfadjoint operator $U^* (- \Delta_\cA) U$. Moreover, $\dom h_{\widetilde \cA} \subset \dom \widetilde h_\cA$ and for each $f \in \dom h_{\widetilde \cA} = H^1 (\R^n)$ we have
\begin{align*}
 \widetilde h_\cA [f] & = \int_{\Omega_{\rm i}} |\nabla f_{\rm i}|^2 d x + \int_{\Omega_{\rm e}} |\nabla f_{\rm e}|^2 d x \\
 & \quad - \int_\Sigma \left\langle \begin{pmatrix} \frac{- |1 + \frac{\gamma}{2}| (1 + \frac{\gamma}{2})}{\beta} f_{\rm i} - \frac{|1 - \frac{\gamma}{2}| (1 + \frac{\gamma}{2})}{\beta} f_{\rm e} \\ \frac{- |1 + \frac{\gamma}{2}|(\frac{\gamma}{2} - 1)}{\beta} f_{\rm i} + \frac{|1 - \frac{\gamma}{2}| (1 - \frac{\gamma}{2})}{\beta} f_{\rm e} \end{pmatrix}, \begin{pmatrix} - \frac{1 + \frac{\gamma}{2}}{r} f_{\rm i} \\ \frac{1 - \frac{\gamma}{2}}{r} f_{\rm e} \end{pmatrix} \right\rangle d \sigma \\
 & = \int_{\Omega_{\rm i}} |\nabla f_{\rm i}|^2 d x + \int_{\Omega_{\rm e}} |\nabla f_{\rm e}|^2 d x - \int_\Sigma \left( 2 \frac{|1 + \tfrac{\gamma}{2}|^2}{\beta} + 2 \frac{|1 - \tfrac{\gamma}{2}|^2}{\beta} \right) |f|^2 d \sigma \\
& = \int_{\Omega_{\rm i}} |\nabla f_{\rm i}|^2 d x + \int_{\Omega_{\rm e}} |\nabla f_{\rm e}|^2 d x - \int_\Sigma \frac{4 + |\gamma|^2}{\beta} |f|^2 d \sigma \\
& \leq h_{\widetilde \cA} [f]
\end{align*}
as $f_{\rm i} |_\Sigma = f_{\rm e} |_\Sigma$. Thus $\widetilde h_\cA \leq h_{\widetilde \cA}$ and the assertion of the theorem follows.
\end{proof}

Let us consider a further class of generalized interactions, denoted as generalized interactions of $\delta$-type in the one-dimensional case in~\cite{EF06}.

\begin{theorem}\label{thm:comparison3}
Let $\alpha : \Sigma \to \R$ be a bounded, measurable function with $\alpha (s) \geq 0$ for all $s \in \Sigma$ and let $\gamma \in i \R$ be a constant. Let $- \Delta_\cA$ be the selfadjoint operator in Definition~\ref{def:HA} corresponding to
\begin{align}\label{eq:Acomp3}
 \cA = \begin{pmatrix} \alpha & \gamma \\ - \overline \gamma & 0 \end{pmatrix}.
\end{align}
Moreover, let $\widetilde \alpha : \Sigma \to \R$ be measurable and bounded and assume that
\begin{align*}
 \widetilde \alpha (s) \leq \frac{\alpha (s)}{|1 + \frac{\gamma}{2}|^2}
\end{align*}
holds for all $s \in \Sigma$. Then there exists a unitary operator $U$ in $L^2 (\R^n)$ such that
\begin{align*}
 U^* (- \Delta_\cA) U \leq - \Delta_{\delta, \widetilde \alpha}.
\end{align*}
\end{theorem}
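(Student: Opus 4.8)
The plan is to follow the strategy of Theorems~\ref{thm:comparison1}(ii) and~\ref{thm:comparison2}: exhibit an explicit unitary multiplication operator $U$ conjugating $-\Delta_\cA$ into an operator whose quadratic form lies below the $\delta$-form $h_{\widetilde\cA}$, and then invoke the equivalence between form ordering and resolvent ordering recalled at the beginning of this section.

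First I would specialize the data. Since $\beta = 0$ identically we have $\Sigma_\beta = \emptyset$, so $\Theta_\cA$ reduces to the matrix with all entries equal to $\alpha/4$, and, using $\overline\gamma = -\gamma$ because $\gamma \in i\R$,
\begin{align*}
 \dom h_\cA = \Big\{ f_{\rm i} \oplus f_{\rm e} \in H^1(\Omega_{\rm i}) \oplus H^1(\Omega_{\rm e}) : (1 - \tfrac\gamma2) f_{\rm i} = (1 + \tfrac\gamma2) f_{\rm e}~\text{on}~\Sigma \Big\}.
\end{align*}
Writing $r = |1+\tfrac\gamma2| = |1-\tfrac\gamma2| = (1+\tfrac{|\gamma|^2}{4})^{1/2}$ one has $r^2 = |1+\tfrac\gamma2|^2$ and, since $\gamma^2 = -|\gamma|^2$, also $(1-\tfrac\gamma2)(1+\tfrac\gamma2) = 1 - \tfrac{\gamma^2}{4} = r^2$.

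The key step is the choice of
\begin{align*}
 U : L^2(\R^n) \to L^2(\R^n), \quad f = f_{\rm i} \oplus f_{\rm e} \mapsto \tfrac{1+\gamma/2}{r} f_{\rm i} \oplus \tfrac{1-\gamma/2}{r} f_{\rm e}.
\end{align*}
Both scalar factors are unimodular, so $U$ is unitary. The decisive point, where this particular choice is forced, is that for $f \in H^1(\R^n)$, hence with $f_{\rm i}|_\Sigma = f_{\rm e}|_\Sigma$, the identity $(1-\tfrac\gamma2)\tfrac{1+\gamma/2}{r} = (1+\tfrac\gamma2)\tfrac{1-\gamma/2}{r} = r$ shows that $Uf$ satisfies the twisted interface condition defining $\dom h_\cA$. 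Consequently $\dom h_{\widetilde\cA} = H^1(\R^n) \subset U^* \dom h_\cA = \dom \widetilde h_\cA$, where $\widetilde h_\cA[f,g] = h_\cA[Uf, Ug]$ is the closed, semibounded form whose associated selfadjoint operator is $U^*(-\Delta_\cA)U$.

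Finally I would compute $\widetilde h_\cA[f] = h_\cA[Uf]$ for $f \in H^1(\R^n)$. Unimodularity of the factors leaves the Dirichlet integrals unchanged, while in the boundary term the two traces combine as $\big(\tfrac{1+\gamma/2}{r} + \tfrac{1-\gamma/2}{r}\big) f|_\Sigma = \tfrac2r f|_\Sigma$, so that
\begin{align*}
 \widetilde h_\cA[f] = \int_{\R^n} |\nabla f|^2 d x - \int_\Sigma \tfrac{\alpha}{r^2} |f|_\Sigma|^2 d\sigma = \int_{\R^n} |\nabla f|^2 d x - \int_\Sigma \tfrac{\alpha}{|1+\gamma/2|^2} |f|_\Sigma|^2 d\sigma.
\end{align*}
The hypothesis $\widetilde\alpha \leq \alpha/|1+\tfrac\gamma2|^2$ together with $\alpha \geq 0$ then yields $\widetilde h_\cA[f] \leq h_{\widetilde\cA}[f]$, i.e.\ $\widetilde h_\cA \leq h_{\widetilde\cA}$, and therefore $U^*(-\Delta_\cA)U \leq -\Delta_{\delta,\widetilde\alpha}$. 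I expect the only genuine difficulty to be the second step: pinning down the phases of $U$ so that it is simultaneously unitary and carries the continuous functions of $H^1(\R^n)$ into the $\gamma$-twisted interface condition of $\dom h_\cA$. Once these phases are identified, the remaining computation is routine and entirely parallel to the earlier comparison theorems.
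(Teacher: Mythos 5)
Your proposal is correct and follows essentially the same route as the paper: you choose exactly the same unitary $U f = \tfrac{1+\gamma/2}{r} f_{\rm i} \oplus \tfrac{1-\gamma/2}{r} f_{\rm e}$, verify via $(1\mp\tfrac\gamma2)(1\pm\tfrac\gamma2)=r^2$ that $U$ maps $H^1(\R^n)$ into the twisted form domain, and obtain the boundary term $\alpha/r^2$ leading to $\widetilde h_\cA \leq h_{\widetilde\cA}$. The only cosmetic difference is that you first rewrite the interface condition using $\overline\gamma=-\gamma$, whereas the paper checks it with $\overline\gamma$ directly; the computations are identical.
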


\begin{proof}
Let us set $r := |1 + \frac{\gamma}{2}| = |1 - \frac{\gamma}{2}| > 0$. Define
\begin{align*}
 U : L^2 (\R^n) \to L^2 (\R^n), \quad f = f_{\rm i} \oplus f_{\rm e} \mapsto \frac{1 + \frac{\gamma}{2}}{r} f_{\rm i} \oplus \frac{1 - \frac{\gamma}{2}}{r} f_{\rm e}.
\end{align*}
Then $U$ is a unitary operator. Let us define a sesquilinear form $\widetilde h_{\cA}$ in $L^2 (\R^n)$ by
\begin{align*}
 \widetilde h_{\cA} [f, g] = h_{\cA} [U f, U g], \quad \dom \widetilde h_{\cA} = U^* \dom h_{\cA}.
\end{align*}
Then $\widetilde h_{\cA}$ is densely defined, symmetric, semibounded below and closed and the corresponding selfadjoint operator in~$L^2 (\R^n)$ equals~$U^* H_{\cA} U$. Thus, in order to verify the assertion of the theorem we have to show $\dom h_{\widetilde \cA} \subset \dom \widetilde h_{\cA}$ and
\begin{align*}
 \widetilde h_{\cA} [f] \leq h_{\widetilde \cA} [f], \quad f \in \dom h_{\widetilde \cA}.
\end{align*}
For the domain inclusion let $f = f_{\rm i} \oplus f_{\rm e} \in \dom h_{\widetilde \cA} = H^1 (\R^n)$. Then
\begin{align*}
 \Big( 1 + \frac{\overline \gamma}{2} \Big) (U f)_{\rm i} |_\Sigma & = \Big( 1 + \frac{\overline \gamma}{2} \Big) \frac{1 + \frac{\gamma}{2}}{r} f_{\rm i} |_\Sigma = r f_{\rm i} |_\Sigma = r f_{\rm e} |_\Sigma \\
 & = \Big( 1 - \frac{\overline \gamma}{2} \Big) \frac{1 - \frac{\gamma}{2}}{r} f_{\rm e} |_\Sigma = \Big( 1 - \frac{\overline \gamma}{2} \Big) (U f)_{\rm e} |_\Sigma,
\end{align*}
thus $U f \in \dom h_{\cA}$, that is, $f \in \dom \widetilde h_{\cA}$. Moreover, for $f \in \dom h_{\widetilde \cA} = H^1 ( \R^n)$ we have
\begin{align*}
 \widetilde h_{\cA} [f] & = h_{\cA} \Big[ \tfrac{1 + \frac{\gamma}{2}}{r} f_{\rm i} \oplus \tfrac{1 - \frac{\gamma}{2}}{r} f_{\rm e} \Big] \\
 & = \int_{\Omega_{\rm i}} |\nabla f_{\rm i}|^2 d x + \int_{\Omega_{\rm e}} |\nabla f_{\rm e}|^2 d x - \int_\Sigma \frac{\alpha}{4} \Big| \tfrac{1 + \frac{\gamma}{2}}{r} f_{\rm i} + \tfrac{1 - \frac{\gamma}{2}}{r} f_{\rm e} \Big|^2 d \sigma \\
 & = \int_{\Omega_{\rm i}} |\nabla f_{\rm i}|^2 d x + \int_{\Omega_{\rm e}} |\nabla f_{\rm e}|^2 d x - \int_\Sigma \frac{\alpha}{r^2} | f |^2 d \sigma \\
& \leq h_{\widetilde \cA} [f],
\end{align*}
where we have used $f_{\rm i} |_\Sigma = f_{\rm e} |_\Sigma$. Hence $\widetilde h_{\cA} \leq h_{\widetilde \cA}$, which completes the proof.
\end{proof}

In the following corollary we collect spectral implications of the previous theorems; cf.\ Proposition~\ref{prop:ordering}. 

\begin{corollary}\label{cor:spec1}
Let either the assumptions of Theorem~\ref{thm:comparison1} or of Theorem~\ref{thm:comparison2} or of Theorem~\ref{thm:comparison3} be satisfied. Then the following assertions hold.
\begin{enumerate}
 \item $\min \sigma_{\rm ess} (- \Delta_\cA) \leq \min \sigma_{\rm ess} (- \Delta_{\delta, \widetilde \alpha})$.
 \item $\lambda_k (- \Delta_\cA) \leq \lambda_k (- \Delta_{\delta, \widetilde \alpha})$ for all $k \in \N$.
 \item If $\min \sigma_{\rm ess} (- \Delta_\cA) = \min \sigma_{\rm ess} (- \Delta_{\delta, \widetilde \alpha})$ then
\begin{align*}
 N (- \Delta_{\delta, \widetilde \alpha}) \leq N (- \Delta_\cA).
\end{align*}
In particular, if $N (- \Delta_{\delta, \widetilde \alpha}) > 0$ then $N (- \Delta_\cA) > 0$.
\end{enumerate}
\end{corollary}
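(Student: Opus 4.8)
The plan is to observe that each of the three comparison theorems supplies exactly the operator inequality that Proposition~\ref{prop:ordering} consumes, so the corollary follows by feeding the former into the latter; the only bookkeeping needed is to account for the unitary conjugation present in all cases except Theorem~\ref{thm:comparison1}~(i). I would first recall that the quantities $\min \sigma_{\rm ess}(\cdot)$, $\lambda_k(\cdot)$ and $N(\cdot)$ are all invariant under unitary equivalence: if $U$ is unitary then $U^* H U$ and $H$ have the same spectrum, the same essential spectrum, and—since the min-max values are defined purely through the quadratic form and the Hilbert-space inner product, both of which $U$ preserves—the same values $\lambda_k$ and hence the same counting function $N$.

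In the case covered by Theorem~\ref{thm:comparison1}~(i) one has the genuine operator inequality $-\Delta_\cA \leq -\Delta_{\delta,\widetilde\alpha}$, so I would apply Proposition~\ref{prop:ordering} directly with $H_1 = -\Delta_\cA$ and $H_2 = -\Delta_{\delta,\widetilde\alpha}$; parts (i), (ii) and (iii) of that proposition are then verbatim parts (i), (ii) and (iii) of the corollary. In each of the remaining cases—Theorem~\ref{thm:comparison1}~(ii), Theorem~\ref{thm:comparison2} and Theorem~\ref{thm:comparison3}—the theorem furnishes a unitary operator $U$ in $L^2(\R^n)$ with
\begin{align*}
 U^* (-\Delta_\cA) U \leq -\Delta_{\delta,\widetilde\alpha}.
\end{align*}
Here I would apply Proposition~\ref{prop:ordering} with $H_1 = U^*(-\Delta_\cA) U$ and $H_2 = -\Delta_{\delta,\widetilde\alpha}$, obtaining
\begin{align*}
 \min\sigma_{\rm ess}\big(U^*(-\Delta_\cA)U\big) \leq \min\sigma_{\rm ess}(-\Delta_{\delta,\widetilde\alpha}), \qquad \lambda_k\big(U^*(-\Delta_\cA)U\big) \leq \lambda_k(-\Delta_{\delta,\widetilde\alpha}),
\end{align*}
together with the counting inequality of part (iii) under the equality-of-essential-spectra hypothesis.

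Finally I would invoke the unitary invariance recorded above to replace $U^*(-\Delta_\cA)U$ by $-\Delta_\cA$ throughout, i.e.\ $\min\sigma_{\rm ess}(U^*(-\Delta_\cA)U) = \min\sigma_{\rm ess}(-\Delta_\cA)$, $\lambda_k(U^*(-\Delta_\cA)U) = \lambda_k(-\Delta_\cA)$ and $N(U^*(-\Delta_\cA)U) = N(-\Delta_\cA)$, which turns the three displayed inequalities into precisely the assertions (i)--(iii) of the corollary. I do not anticipate any genuine obstacle: the argument is a direct composition of two already-established results, and the single point requiring a word of care is the verification that the min-max values and the eigenvalue count are preserved under the unitary $U$, which is immediate because $U$ maps $\dom h_\cA$ onto $\dom \widetilde h_\cA$ isometrically and preserves the form values by the very definition $\widetilde h_\cA[f] = h_\cA[Uf]$ used in the proofs of Theorems~\ref{thm:comparison1}--\ref{thm:comparison3}.
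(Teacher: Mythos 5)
Your proposal is correct and follows exactly the route the paper intends: the paper gives no separate proof for this corollary, merely pointing to Proposition~\ref{prop:ordering} applied to the operator inequalities of Theorems~\ref{thm:comparison1}--\ref{thm:comparison3}. Your careful handling of the unitary conjugation — noting that $\min\sigma_{\rm ess}$, $\lambda_k$ and $N$ are invariant under passage from $-\Delta_\cA$ to $U^*(-\Delta_\cA)U$ — is precisely the implicit step the paper leaves to the reader, and you fill it in correctly.
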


We remark that for eigenvalues below the bottom of the essential spectrum the inequality in item~(ii) of this corollary may be strict in certain cases; cf.~\cite{LR15} for inequalities between $\delta$- and $\delta'$-eigenvalues. 

Note that the condition $\min \sigma_{\rm ess} (- \Delta_\cA) = \min \sigma_{\rm ess} (- \Delta_{\delta, \widetilde \alpha})$ is safisfied automatically if for instance $\Sigma$ is compact or if $\Sigma$ is a plane; cf.\ Remark~\ref{rem:plane}. In the following we provide another example of generalized interactions on a noncompact, asymptotically planar hypersurface where this equality holds as well. The example illustrates at the same time how the assertions of Corollary~\ref{cor:spec1} can be applied.

\begin{example}\label{ex:cone}
Let $\Sigma$ be a conical surface in $\R^3$, i.e., 
\begin{align*}
 \Sigma = \left\{ (x', x_3) \in \R^3 : x_3 = \cot (\theta) |x'| \right\}
\end{align*}
for some $\theta \in (0, \pi/2)$. Then $\Sigma$ is asymptotically planar, i.e., it can be parametrized in such a way that Assumption~\ref{as:asymptPlanar} is satisfied. By~\cite{BEL14-2} for a constant $\widetilde \alpha > 0$ the essential spectrum of $- \Delta_{\delta, \widetilde \alpha}$ equals
\begin{align}\label{eq:sigmaEssCone}
 \sigma_{\rm ess} (- \Delta_{\delta, \widetilde \alpha}) = \big[- \alpha^2 / 4, \infty \big). 
\end{align}
We consider three different types of generalized interactions on $\Sigma$ according to Theorem~\ref{thm:comparison1}--Theorem~\ref{thm:comparison3}.

(i) Let $\alpha \geq 0$ and $\beta > 0$ be constants such that $\alpha \leq 4 / \beta$ and let $\cA$ be given in~\eqref{eq:Acomp1}. Moreover, let $\widetilde \alpha = 4 / \beta$. Then Theorem~\ref{thm:essSpecAsymPlanar}, Theorem~\ref{thm:comparison1} and~\eqref{eq:sigmaEssCone} yield
\begin{align*}
 - \frac{\widetilde \alpha^2}{4} = - \frac{4}{\beta^2} = m_\cA \leq \min \sigma_{\rm ess} (- \Delta_\cA) \leq  \min \sigma_{\rm ess} (- \Delta_{\delta, \widetilde \alpha}) = - \frac{\widetilde \alpha^2}{4},
\end{align*}
where $m_\cA$ is given in~\eqref{eq:infPlane}. This implies equality; in particular, $\min \sigma_{\rm ess} (- \Delta_\cA) = \min \sigma_{\rm ess} (- \Delta_{\delta, \widetilde \alpha})$. 

(ii) Let $\beta > 0$ and $\gamma \in i \R$ be constants and let $\cA$ be given in~\eqref{eq:Acomp2}. Moreover, let $\widetilde \alpha = (4 + |\gamma|^2)/\beta$. Then Theorem~\ref{thm:essSpecAsymPlanar}, Theorem~\ref{thm:comparison2} and~\eqref{eq:sigmaEssCone} imply
\begin{align*}
 - \frac{\widetilde \alpha^2}{4} = - \frac{(4 + |\gamma|^2)^2}{4 \beta^2} = m_\cA \leq \min \sigma_{\rm ess} (- \Delta_\cA) \leq  \min \sigma_{\rm ess} (- \Delta_{\delta, \widetilde \alpha}) = - \frac{\widetilde \alpha^2}{4},
\end{align*}
which implies again $\min \sigma_{\rm ess} (- \Delta_\cA) = \min \sigma_{\rm ess} (- \Delta_{\delta, \widetilde \alpha})$. 

(iii) Let $\alpha > 0$ and $\gamma \in i \R$ be constants and let $\cA$ be given in~\eqref{eq:Acomp3}. Moreover, let $\widetilde \alpha = \alpha/|1 + \frac{\gamma}{2}|^2$. Then with Theorem~\ref{thm:essSpecAsymPlanar}, Theorem~\ref{thm:comparison3} and~\eqref{eq:sigmaEssCone} we get
\begin{align*}
 - \frac{\widetilde \alpha^2}{4} = - \frac{4 \alpha^2}{(4 + |\gamma|^2)^2} = m_\cA \leq \min \sigma_{\rm ess} (- \Delta_\cA) \leq  \min \sigma_{\rm ess} (- \Delta_{\delta, \widetilde \alpha}) = - \frac{\widetilde \alpha^2}{4},
\end{align*}
and thus again $\min \sigma_{\rm ess} (- \Delta_\cA) = \min \sigma_{\rm ess} (- \Delta_{\delta, \widetilde \alpha})$.

We remark that the same reasoning applies to any asymptotically planar hypersurface for which~\eqref{eq:sigmaEssCone} is known.

In each of the cases (i)--(iii) it follows from Corollary~\ref{cor:spec1} in combination with~\cite[Theorem~3.2]{BEL14-2} that
\begin{align*}
 N (- \Delta_\cA) = \infty
\end{align*}
and that
\begin{align*}
 \lambda_k (- \Delta_\cA) \leq \lambda_k (- \Delta_{\delta, \widetilde \alpha}) < - \frac{\widetilde \alpha^2}{4}
\end{align*}
holds for all $k \in \N$.
\end{example}

In the following we provide one more operator inequality. We show that a certain class of generalized interactions admits an estimate from below against the $\delta'$-operator of an appropriate strength. This class of interactions, where $\beta = 0$ and $\Real \gamma$ may be nontrivial, is called the intermediate class in~\cite{EF06}. For a function $\widetilde \beta : \Sigma \to \R$ such that $1 / \widetilde \beta$ is measurable and bounded we denote by $- \Delta_{\delta', \widetilde \beta}$ the selfadjoint operator in $L^2 (\R^n)$ corresponding to a $\delta'$-interaction of strength $\widetilde \beta$, i.e., $- \Delta_{\delta', \widetilde \beta} = H_{\widetilde \cA}$ with
\begin{align*}
 \widetilde \cA = \begin{pmatrix} 0 & 0 \\ 0 & \widetilde \beta \end{pmatrix}.
\end{align*} 
With this notation the following theorem holds.

\begin{theorem}\label{thm:comparison4}
Let $\alpha : \Sigma \to \R$ be a bounded, measurable function with $\alpha (s) \geq 0$ for all $s \in \Sigma$ and let $\gamma : \Sigma \to \C$ be measurable and bounded. Let $- \Delta_\cA$ be the selfadjoint operator in Definition~\ref{def:HA} corresponding to
\begin{align*}
 \cA = \begin{pmatrix} \alpha & \gamma \\ - \overline \gamma & 0 \end{pmatrix}.
\end{align*}
Moreover, let $\widetilde \beta : \Sigma \to \R$ be such that $1/\widetilde \beta$ is measurable and bounded and assume
\begin{align*}
 \alpha (s) \leq \frac{4}{\widetilde \beta (s)}
\end{align*}
for all $s \in \Sigma$. Then there exists a unitary operator $U$ in $L^2 (\R^n)$ such that
\begin{align*}
 U^* (- \Delta_{\delta', \widetilde \beta}) U \leq - \Delta_\cA.
\end{align*}
\end{theorem}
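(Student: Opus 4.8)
The plan is to reuse the unitary-transformation technique from the proofs of Theorems~\ref{thm:comparison1}--\ref{thm:comparison3}, but now with the $\delta'$-operator appearing as the \emph{lower} bound. The starting point is to read off the two quadratic forms in their reduced shapes. For $\cA = \begin{pmatrix}\alpha & \gamma \\ -\overline\gamma & 0\end{pmatrix}$ one has $\beta = 0$ identically, hence $\Sigma_\beta = \emptyset$ and $\Theta_\cA = \begin{pmatrix}\alpha/4 & \alpha/4 \\ \alpha/4 & \alpha/4\end{pmatrix}$, so that
\[
 h_\cA[f] = \int_{\Omega_{\rm i}}|\nabla f_{\rm i}|^2\,dx + \int_{\Omega_{\rm e}}|\nabla f_{\rm e}|^2\,dx - \int_\Sigma \tfrac{\alpha}{4}|f_{\rm i} + f_{\rm e}|^2\,d\sigma ,
\]
while $\gamma$ enters only through the interface constraint $(1 + \tfrac{\overline\gamma}{2})f_{\rm i} = (1 - \tfrac{\overline\gamma}{2})f_{\rm e}$ defining $\dom h_\cA$, and not through the value of the form. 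On the other side, $-\Delta_{\delta',\widetilde\beta} = -\Delta_{\widetilde\cA}$ with $\widetilde\cA = \begin{pmatrix}0 & 0 \\ 0 & \widetilde\beta\end{pmatrix}$ has $\Theta_{\widetilde\cA} = \begin{pmatrix}1/\widetilde\beta & -1/\widetilde\beta \\ -1/\widetilde\beta & 1/\widetilde\beta\end{pmatrix}$, the unconstrained domain $\dom h_{\widetilde\cA} = H^1(\Omega_{\rm i}) \oplus H^1(\Omega_{\rm e})$, and the form value $h_{\widetilde\cA}[g] = \int_{\Omega_{\rm i}}|\nabla g_{\rm i}|^2\,dx + \int_{\Omega_{\rm e}}|\nabla g_{\rm e}|^2\,dx - \int_\Sigma \tfrac{1}{\widetilde\beta}|g_{\rm i} - g_{\rm e}|^2\,d\sigma$.

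First I would take the self-adjoint unitary $U : L^2(\R^n) \to L^2(\R^n)$, $f = f_{\rm i} \oplus f_{\rm e} \mapsto -f_{\rm i} \oplus f_{\rm e}$ --- the same operator as in Theorem~\ref{thm:comparison1}(ii) --- and introduce the form $\widetilde h_{\widetilde\cA}[f,g] = h_{\widetilde\cA}[Uf, Ug]$ with $\dom \widetilde h_{\widetilde\cA} = U^*\dom h_{\widetilde\cA} = H^1(\Omega_{\rm i}) \oplus H^1(\Omega_{\rm e})$, which is densely defined, symmetric, semibounded below and closed and whose associated self-adjoint operator is exactly $U^*(-\Delta_{\delta',\widetilde\beta})U$. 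The purpose of the sign flip is to convert the $\delta'$-type jump into a sum: since $(Uf)_{\rm i} - (Uf)_{\rm e} = -(f_{\rm i} + f_{\rm e})$, I obtain
\[
 \widetilde h_{\widetilde\cA}[f] = \int_{\Omega_{\rm i}}|\nabla f_{\rm i}|^2\,dx + \int_{\Omega_{\rm e}}|\nabla f_{\rm e}|^2\,dx - \int_\Sigma \tfrac{1}{\widetilde\beta}|f_{\rm i} + f_{\rm e}|^2\,d\sigma ,
\]
which has the same shape as $h_\cA[f]$ with $\tfrac{1}{\widetilde\beta}$ in place of $\tfrac{\alpha}{4}$.

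It then remains to verify the form inequality $\widetilde h_{\widetilde\cA} \le h_\cA$, which by the characterization recalled before Proposition~\ref{prop:ordering} is equivalent to the claimed operator inequality $U^*(-\Delta_{\delta',\widetilde\beta})U \le -\Delta_\cA$. The domain inclusion $\dom h_\cA \subseteq \dom \widetilde h_{\widetilde\cA} = H^1(\Omega_{\rm i}) \oplus H^1(\Omega_{\rm e})$ is immediate, and for $f \in \dom h_\cA$ the two forms differ only by a single surface integral,
\[
 \widetilde h_{\widetilde\cA}[f] - h_\cA[f] = \int_\Sigma \Big(\tfrac{\alpha}{4} - \tfrac{1}{\widetilde\beta}\Big)|f_{\rm i} + f_{\rm e}|^2\,d\sigma \le 0 ,
\]
whose sign is controlled pointwise by the hypothesis $\alpha(s) \le 4/\widetilde\beta(s)$.

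I do not expect a genuine obstacle. In contrast to Theorems~\ref{thm:comparison2}--\ref{thm:comparison3}, the domain inclusion here runs in the easy direction, because the target $\delta'$-form carries no interface condition; consequently $U$ need not be adapted to $\gamma$ at all and the variable, complex coefficient $\gamma$ never enters the estimate. The only points deserving care are the routine verification that $\widetilde h_{\widetilde\cA}$ is indeed closed and corresponds to $U^*(-\Delta_{\delta',\widetilde\beta})U$, and the bookkeeping that the direction of the form ordering $\widetilde h_{\widetilde\cA} \le h_\cA$ matches the asserted operator ordering.
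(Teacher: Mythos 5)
Your proposal is correct and follows essentially the same route as the paper: conjugate the $\delta'$-form by a sign-flip unitary so that the jump $|g_{\rm i}-g_{\rm e}|^2$ becomes $|f_{\rm i}+f_{\rm e}|^2$, note the trivial domain inclusion $\dom h_\cA \subset H^1(\Omega_{\rm i})\oplus H^1(\Omega_{\rm e})$, and compare the surface terms pointwise via $\alpha \le 4/\widetilde\beta$. The only (immaterial) difference is that the paper flips the sign of $f_{\rm e}$ rather than $f_{\rm i}$; since your $U$ is minus the paper's, the conjugated operators coincide.
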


\begin{proof}
Consider the unitary operator
\begin{align*}
 U : L^2 (\R^n) \to L^2 (\R^n), \quad f = f_{\rm i} \oplus f_{\rm e} \mapsto f_{\rm i} \oplus - f_{\rm e}.
\end{align*}
Let us define a sesquilinear form $\widetilde h_{\widetilde \cA}$ in $L^2 (\R^n)$ by
\begin{align*}
 \widetilde h_{\widetilde \cA} [f, g] = h_{\widetilde \cA} [U f, U g], \quad \dom \widetilde h_{\widetilde \cA} = U^* \dom h_{\widetilde \cA} = H^1 (\Omega_{\rm i}) \oplus H^1 (\Omega_{\rm e}).
\end{align*}
Then the inclusion $\dom h_{\cA} \subset \dom \widetilde h_{\widetilde \cA}$ is obvious and for $f \in \dom h_{\cA}$ we have
\begin{align*}
 \widetilde h_{\widetilde \cA} [f] & = \int_{\Omega_{\rm i}} |\nabla f_{\rm i}|^2 d x + \int_{\Omega_{\rm e}} |\nabla f_{\rm e}|^2 d x - \int_\Sigma \frac{1}{\widetilde \beta} |f_{\rm i} + f_{\rm e} |^2 d \sigma \\
 & \leq \int_{\Omega_{\rm i}} |\nabla f_{\rm i}|^2 d x + \int_{\Omega_{\rm e}} |\nabla f_{\rm e}|^2 d x - \int_\Sigma \frac{\alpha}{4} | f_{\rm i} + f_{\rm e} |^2 d \sigma \\
& = h_{\cA} [f].
\end{align*}
From this the claim follows.
\end{proof}

We immediately obtain the following corollary on the spectrum. 

\begin{corollary}
Let the assumptions of Theorem~\ref{thm:comparison4} be satisfied. Then the following assertions hold.
\begin{enumerate}
 \item $\min \sigma_{\rm ess} (- \Delta_{\delta', \widetilde \beta}) \leq \min \sigma_{\rm ess} (- \Delta_\cA)$.
 \item $\lambda_k (- \Delta_{\delta', \widetilde \beta}) \leq \lambda_k (- \Delta_\cA)$ for all $k \in \N$.
 \item If $\min \sigma_{\rm ess} (- \Delta_\cA) = \min \sigma_{\rm ess} (- \Delta_{\delta', \widetilde \beta})$ then
\begin{align*}
 N (- \Delta_\cA) \leq N (- \Delta_{\delta', \widetilde \beta}).
\end{align*}
\end{enumerate}
\end{corollary}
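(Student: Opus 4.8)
The plan is to obtain the corollary as a direct consequence of Theorem~\ref{thm:comparison4} and Proposition~\ref{prop:ordering}, the only additional ingredient being the invariance of the relevant spectral quantities under unitary equivalence.

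First I would invoke Theorem~\ref{thm:comparison4}: under the stated assumptions there is a unitary operator $U$ on $L^2 (\R^n)$ with $U^* (- \Delta_{\delta', \widetilde \beta}) U \leq - \Delta_\cA$ in the sense of~\eqref{eq:ordering}. Setting $H_1 = U^* (- \Delta_{\delta', \widetilde \beta}) U$ and $H_2 = - \Delta_\cA$, both operators are selfadjoint and semibounded below and satisfy $H_1 \leq H_2$, so Proposition~\ref{prop:ordering} applies verbatim and yields $\min \sigma_{\rm ess} (H_1) \leq \min \sigma_{\rm ess} (H_2)$, then $\lambda_k (H_1) \leq \lambda_k (H_2)$ for all $k \in \N$, and, whenever the two minima of the essential spectra coincide, $N (H_2) \leq N (H_1)$.

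The remaining step is to rewrite these inequalities in terms of $- \Delta_{\delta', \widetilde \beta}$ itself. Since $U$ is unitary, $H_1 = U^* (- \Delta_{\delta', \widetilde \beta}) U$ is unitarily equivalent to $- \Delta_{\delta', \widetilde \beta}$; consequently the two operators have the same spectrum and essential spectrum, the same min--max values $\lambda_k$, and the same counting function $N$. Substituting $\min \sigma_{\rm ess} (H_1) = \min \sigma_{\rm ess} (- \Delta_{\delta', \widetilde \beta})$, $\lambda_k (H_1) = \lambda_k (- \Delta_{\delta', \widetilde \beta})$ and $N (H_1) = N (- \Delta_{\delta', \widetilde \beta})$ into the three conclusions above reproduces precisely assertions (i), (ii) and (iii) of the corollary.

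I do not expect a genuine obstacle here, since the analytic content is already packaged in Theorem~\ref{thm:comparison4} and Proposition~\ref{prop:ordering}. The only point requiring a small amount of care is that Proposition~\ref{prop:ordering} is phrased for a pair $H_1 \leq H_2$, while the ordering supplied by Theorem~\ref{thm:comparison4} involves the unitary conjugate $U^* (- \Delta_{\delta', \widetilde \beta}) U$ rather than $- \Delta_{\delta', \widetilde \beta}$; one must therefore explicitly use that $\sigma_{\rm ess}$, the values $\lambda_k$, and $N$ are invariant under unitary equivalence in order to transfer the conclusions back to the operator $- \Delta_{\delta', \widetilde \beta}$.
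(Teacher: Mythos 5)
Your proposal is correct and matches the paper's (implicit) argument: the corollary is stated there as an immediate consequence of Theorem~\ref{thm:comparison4} combined with Proposition~\ref{prop:ordering}, with the spectral data transferred from $U^* (- \Delta_{\delta', \widetilde \beta}) U$ to $- \Delta_{\delta', \widetilde \beta}$ by unitary invariance, exactly as you describe. Your explicit remark about the need for this unitary-invariance step is a fair point of care, but it is the same route the paper takes.
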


\appendix

\section{Proof of Proposition~\ref{prop:schlimmesLemma}}\label{appendix}

In this appendix we provide a proof for Proposition~\ref{prop:schlimmesLemma}.

First, it is not difficult to verify that the form $\eta_{\cA, d}$ is semibounded from below; cf.~the proof of Lemma~\ref{lem:form}. Moreover, note that $\eta_{\cA}$ is also closed and densely defined and the selfadjoint operator in $L^2 (- d, d)$ corresponding to $\eta_{\cA, d}$ is given by
\begin{align*}
 H_{\cA, d} \psi = - \psi_-'' \oplus - \psi''_+, \qquad \psi = \psi_- \oplus \psi_+ \in \dom H_{\cA, d},
\end{align*}
where $\dom H_{\cA, d}$ consists of all $\psi = \psi_- \oplus \psi_+ \in H^2 (- d, 0) \oplus H^2 (0, d)$ which satisfy the conditions
\begin{align}\label{eq:GPI}
\begin{split}
 \psi' (0_-) - \psi' (0_+) & = \frac{\alpha}{2} \big(\psi (0_-) + \psi (0_+) \big) + \frac{\gamma}{2} \big(\psi' (0_-) + \psi' (0_+) \big), \\
 \psi (0_-) - \psi (0_+) & = - \frac{\overline{\gamma}}{2} \big(\psi (0_-) + \psi (0_+) \big) + \frac{\beta}{2} \big(\psi' (0_-) + \psi' (0_+) \big), \\
 \psi' (-d) & = 0, \\
 \psi' (d) & = 0.
\end{split}
\end{align}
The spectrum of $H_{\cA, d}$ is purely discrete; in particular, the infimum of the spectrum is given by an eigenvalue, which is nonpositive, as we will see. In the case $\alpha = \beta = 0$ clearly $\eta_{\cA, d}$ is nonnegative and $\eta_{\cA, d} [\psi] = 0$ holds for an appropriate normed, piecewise constant function, so that all statements of the lemma follow immediately. Therefore in the following we assume that $\alpha > 0$ or $\beta > 0$. Observe that $\lambda = - k^2$ with $k > 0$ is an eigenvalue of $H_{\cA, d}$ if and only if
\begin{align}\label{eq:spectralIdentity}
 g (k) = h (k) j (k),
\end{align}
where
\begin{align*}
 g (k) =  |\gamma|^2 k (1 - e^{4 k d}), \qquad h (k) = (- 2 \alpha - 4 k) e^{- 2 k d} - 2 \alpha + 4 k
\end{align*}
and
\begin{align*}
 j (k) = e^{2 k d} \Big(1 + \frac{\beta}{2} k \Big) + e^{4 k d} \Big( 1 - \frac{\beta}{2} k \Big).
\end{align*}
Indeed, each $\psi = \psi_- \oplus \psi_+ \in \ker (H_{\cA, d} - \lambda)$ satisfies
\begin{align*}
 \psi_- (x) = A e^{- k x} + B e^{k x} \quad \text{and} \quad \psi_+ (x) = C e^{- k x} + D e^{k x}
\end{align*}
and the conditions~\eqref{eq:GPI} turn into
\begin{align*}
 - A k + B k + C k - D k & = \frac{\alpha}{2} (A + B + C + D) + \frac{\gamma}{2} (- A k + B k - C k + D k), \\
 A + B - C - D & = - \frac{\overline{\gamma}}{2} (A + B + C + D) + \frac{\beta}{2} (- A k + B k - C k + D k), \\
 - A k e^{k d} + B k e^{- k d} & = 0, \\
 - C k e^{- k d} + D k e^{k d} & = 0.
\end{align*}
Furthermore, this set of equations allows a nontrivial choice of the coefficients $A, B, C, D$ if and only if
\begin{align*}
 \det \begin{pmatrix} 
	\frac{\alpha}{2} + k \big(1 - \frac{\gamma}{2} \big) & \frac{\alpha}{2} - k \big(1 - \frac{\gamma}{2} \big) & \frac{\alpha}{2} - k \big(1 + \frac{\gamma}{2} \big) & \frac{\alpha}{2} + k \big(1 + \frac{\gamma}{2} \big) \\
	- \frac{\beta}{2} k - \big(1 + \frac{\overline{\gamma}}{2} \big) & \frac{\beta}{2} k - \big(1 + \frac{\overline{\gamma}}{2} \big) & - \frac{\beta}{2} k + \big(1 - \frac{\overline{\gamma}}{2} \big) & \frac{\beta}{2} k + \big(1 - \frac{\overline{\gamma}}{2} \big) \\
	- k e^{k d} & k e^{- k d} & 0 & 0 \\
	0 & 0 & - k e^{- k d} & k e^{k d}
      \end{pmatrix} = 0,
\end{align*}
and the value of the determinant equals
\begin{align*}
 - \frac{1}{2} e^{- 2 k d} k^2 \Big( g (k) - h (k) j (k) \Big),
\end{align*}
which leads to~\eqref{eq:spectralIdentity}. 

Our aim is to estimate the largest solution $k > 0$ of~\eqref{eq:spectralIdentity}, which corresponds to the smallest eigenvalue of $H_{\cA, d}$. For this we distinguish several cases. Let us start with the case $\beta = 0$ (and $\alpha > 0$). In this case we have $j (k) = e^{2 k d} + e^{4 k d}$ and this function has no zeros. Thus~\eqref{eq:spectralIdentity} can be rewritten as
\begin{align}\label{eq:spectralIdentityNew}
 \frac{g (k)}{j (k)} = h (k).
\end{align}
Note that in this case
\begin{align*}
 \Big( \frac{g}{j} \Big)' (k) = - |\gamma|^2 e^{- 2 k d} (e^{2 k d} - 1 + 2 k d) < 0
\end{align*}
holds for all $k > 0$ and thus $g/j$ is strictly monotonously decreasing on $[0, \infty)$. Moreover, $(g/j) (0) = 0$ and $(g/j) (k) \to - \infty$ as $k \to + \infty$. On the other hand, the derivative of $h$ is given by
\begin{align*}
 h' (k) = 4 + 4 e^{- 2 k d} (2 k d + \alpha d - 1) > 4 - 4 e^{- 2 k d} > 0
\end{align*}
for each $k > 0$, hence $h$ is strictly monotonously increasing on $[0, \infty)$. Moreover, 
\begin{align}\label{eq:hProperties}
 h (0) = - 4 \alpha \quad \text{and} \quad \lim_{k \to + \infty} h (k) = + \infty. 
\end{align}
As $\alpha > 0$ it follows that~\eqref{eq:spectralIdentityNew} has precisely one positive solution. Note further that for each $k > 0$ and each $\eps \geq 0$ we have
\begin{align*}
 \Big(\frac{g}{j} - h \Big) (k  + \eps ) & = \frac{1}{e^{- 2 (k + \eps) d} + 1} \Big( e^{- 4 (k + \eps) d} \big( |\gamma|^2 (k + \eps) + 2 \alpha + 4 k + 4 \eps \big) \\
 & \qquad + 4 \alpha e^{- 2 (k + \eps) d} - |\gamma|^2 (k + \eps) + 2 \alpha - 4 k - 4 \eps \Big).
\end{align*}
On the one hand this implies
\begin{align*}
 \Big(\frac{g}{j} - h \Big) \Big( \frac{2 \alpha}{4 + |\gamma|^2} \Big) = 4 \alpha e^{\frac{- 4 \alpha}{4 + |\gamma|^2} d} > 0
\end{align*}
and, on the other hand,
\begin{align*}
 \lim_{d \to + \infty} \Big(\frac{g}{j} - h \Big) \Big( \frac{2 \alpha}{4 + |\gamma|^2} + \eps \Big) = - |\gamma|^2 \eps - 4 \eps < 0
\end{align*}
for each $\eps > 0$. Thus for each $\eps > 0$ there exists $d_0 > 0$ such that for each $d \geq d_0$ the only positive solution of~\eqref{eq:spectralIdentityNew} is contained in the interval $(\frac{2 \alpha}{4 + |\gamma|^2}, \frac{2 \alpha}{4 + |\gamma|^2} + \eps)$. This implies $m_{\cA, d} \leq m_\cA$ and $\lim_{d \to \infty} m_{\cA, d} = m_\cA$ in the case $\beta = 0$ and $\alpha > 0$.

Let us come to the case $\beta > 0$. We show first that in this case the equation~\eqref{eq:spectralIdentity} has either one or two positive solutions and that, if there are two, at each solution the function $g - h j$ has a sign change. For this we distinguish the two possibilities $\gamma = 0$ and $\gamma \neq 0$. If $\gamma = 0$ then the equation~\eqref{eq:spectralIdentity} is satisfied if and only if $k$ is a zero of either $j$ or $h$. Since $h$ is strictly monotonously increasing and satisfies~\eqref{eq:hProperties}, $h$ has precisely one positive zero if $\alpha > 0$ and no positive zero if $\alpha = 0$. Moreover, $j$ has precisely one positive zero $k_0$. Indeed, the equation $j (k) = 0$ can be rewritten as
\begin{align*}
 e^{- 2 k d} + 1 = - \frac{\beta}{2} k \big( e^{- 2 k d} - 1 \big),
\end{align*}
where the left-hand side is strictly monotonously decreasing on $[0, \infty)$ with values in $(1, 2]$, whilst the right-hand side is strictly monotonously increasing on $[0, \infty)$, taking the value 0 at $k = 0$ and the limit $+ \infty$ as $k \to + \infty$. Thus for $\gamma = 0$ the equation~\eqref{eq:spectralIdentity} has at most two solutions. Observe that $j$ is positive to the left of its zero and negative to the right. Hence, if the zeros of $j$ and $h$ do not coincide then $j h$ has a sign change at each of the two. In the case $\gamma \neq 0$ we can write~\eqref{eq:spectralIdentity} equivalently as~\eqref{eq:spectralIdentityNew} and the function $g$ is strictly monotonously decreasing with $g (k) < 0$ for all $k > 0$. Furthermore, for the unique positive zero $k_1$ of $j$ we have $(g/j) (k) < 0$ for $k \in (0, k_1)$ and $(g/j) (k) > 0$ for $k \in (k_1, \infty)$, hence
\begin{align*}
 \lim_{k \nearrow k_1} (g/j) (k) = - \infty \quad \text{and} \quad \lim_{k \searrow k_1} (g/j) (k) = + \infty.
\end{align*}
Moreover, 
\begin{align*}
 (g/j) (0) = 0 \quad \text{and} \quad \lim_{k \to + \infty} (g/j) (k) = \frac{2}{\beta} |\gamma|^2,
\end{align*}
as a simple calculation shows. Finally,
\begin{align*}
 (g/j)' (k) & = - 4 e^{-2 k d} |\gamma|^2 \frac{- 1 + e^{6 k d} + k d (2 + \beta k)}{\big(2 + \beta k + e^{2 k d} (2 - \beta k) \big)^2} \\
 & \quad - 4 e^{-2 k d} |\gamma|^2 \frac{e^{2 k d} (- 1 - 2 k d (\beta k - 2) ) + e^{4 k d} (1 + k d (\beta k + 2))}{\big(2 + \beta k + e^{2 k d} (2 - \beta k) \big)^2},
\end{align*}
which is negative for all $k \in (0, k_1) \cup (k_1, \infty)$. Thus $g/j$ is strictly monotonously decreasing on $(0, k_1) \cup (k_1, \infty)$. Comparing the properties of $g/j$ and $h$ it follows that there is precisely one positive solution of~\eqref{eq:spectralIdentity} if $\alpha = 0$ and precisely two positive solutions if $\alpha > 0$ and that $g/j - h$ and thus $g - h j$ changes its sign at each solution. Let now $\gamma$ be again arbitrary and let us estimate the largest solution. For each $k > 0$ and each $\eps \geq 0$ we have
\begin{align*}
 e^{- 4 (k + \eps) d} (g - j h) (k + \eps) = \sA e^{- 4 (k + \eps) d} + \sB e^{- 2 (k + \eps) d} + \sC
\end{align*}
with
\begin{align*}
 \sA = (\alpha + 2 (k + \eps)) (2 + \beta (k + \eps)) + (k + \eps) |\gamma|^2, \quad \sB = 4 (\alpha - \beta (k + \eps)^2)
\end{align*}
and
\begin{align*}
 \sC = -(\alpha - 2 (k + \eps)) (-2 + \beta (k + \eps)) - (k + \eps) |\gamma|^2.
\end{align*}
On the one hand for $\eps > 0$ this yields
\begin{align*}
 \lim_{d \to + \infty} e^{- 4 (k + \eps) d} (g - j h) (k + \eps) = \sC
\end{align*}
and thus for $k_0 = (4 + \det \cA + \sqrt{- 16 \alpha \beta + (4 + \det \cA)^2})/(4 \beta)$
\begin{align*}
 \lim_{d \to + \infty} e^{- 4 (k_0 + \eps) d} (g - j h) (k_0 + \eps) = \eps \big( 2 \beta \eps + \sqrt{- 16 \alpha \beta + (4 + \det \cA)^2} \big) > 0.
\end{align*}
On the other hand, we obtain
\begin{align*}
 e^{- 4 k_0 d} (g - j h) (k_0) & = \frac{(4 + \det \cA) \big(4 + \det \cA + \sqrt{- 16 \alpha \beta + (4 + \det \cA)^2} \big)}{2 \beta} e^{- 4 k_0 d} \\
 & \quad + \left( 4 \alpha - \frac{\big( 4 + \det \cA + \sqrt{- 16 \alpha \beta + (4 + \det \cA)^2} \big)^2}{4 \beta} \right) e^{- 2 k_0 d},
\end{align*}
which is negative for all sufficiently large $d$ since 
\begin{align*}
 16 \alpha \beta < ( 4 + \det \cA + \sqrt{- 16 \alpha \beta + (4 + \det \cA)^2} )^2
\end{align*}
whenever we are not in the case $\gamma = 0$ and $\alpha \beta = 4$. Thus for each $\eps > 0$ there exists $d_0 > 0$ such that for each $d \geq d_0$ the function $g - j h$ changes its sign in the interval $(k_0, k_0 + \eps)$. Observe that the so-obtained sequence of eigenvalues of $H_{\cA, d}$ does really represent the minima of the spectra. If not there would exist an accumulation point (since the minima are monotonous with respect to $d$ as one can see easily via the quadratic forms) $k_1 > k_0$ of solutions of~\eqref{eq:spectralIdentity} as $d \to \infty$. If $\eps_1 > 0$ and $\eps_2 > \eps_1$ are chosen such that $k_0 + \eps_1 < k_1 < k_0 + \eps_2$ then the above considerations yield that for each sufficiently large $d$ the signs of $g - j h$ at $k_0$ and $k_0 + \eps_1$ as well as at $k_0$ and $k_0 + \eps_2$ are opposite; in particular, for each sufficiently large $d$ the numbers $(g - j h) (k_0 + \eps_1)$ and $(g - j h) (k_0 + \eps_2)$ have the same sign. On the other hand, there is exactly one solution of~\eqref{eq:spectralIdentity} in $(k_1, k_0 + \eps_2) \subset (k_0 + \eps_1, k_0 + \eps_2)$ and the sign of $g - j h$ changes at that solution, which is a contradiction.

It remains to consider the case $\gamma = 0$ and $\alpha \beta = 4$. It is easy to see that in this case the zeros of $h$ and $j$ coincide for each $d > 0$. Thus the only positive solution of~\eqref{eq:spectralIdentity} in this case is given by the zero of, e.g., $h$. In this case $m_\cA = \alpha/2$ and we have $h (\alpha/2) = - 4 \alpha e^{- \alpha d} < 0$ as well as $\lim_{d \to 0} h (\alpha/2 + \eps) = 4 \eps > 0$ for each $\eps$ and the same reasoning as in the previous cases implies that the only negative eigenvalue of $H_{\cA, d}$ converges to $m_\cA$ as $d$ tends to $\infty$. This completes the proof of item~(ii) of the lemma.

If $d > 0$ is fixed then clearly the positive solutions of~\eqref{eq:spectralIdentity} depend continuously on $\alpha$ and $\beta$. Moreover, as $n_\tau$ converges to $1$ for $\tau \to \infty$, it follows from the above considerations that for each sufficiently large $\tau$ the number of positive solutions of~\eqref{eq:spectralIdentity} for $\widetilde \cA (\tau)$ and $\cA$ coincide (up to a possible crossing of solutions if $\gamma = 0$ and $\alpha \beta = 4$). Thus the claim of~(i) follows and the proof of Proposition~\ref{prop:schlimmesLemma} is complete.

\begin{acknowledgments}
This research was supported by the Czech Science Foundation (GA\v{C}R) within the project 14-06818S. J.R.\ gratefully acknowledges financial support by the Austrian Science Fund (FWF), project P 25162-N26, and the Austria--Czech Republic cooperation grant CZ01/2013.
\end{acknowledgments}

\end{document}